\newcommand{\R}{\mathbb{R}}
\makeatletter\@addtoreset {equation}{section}\makeatother
\newtheorem{remark}{Remark}
\newtheorem{proposition}{Proposition}
\newtheorem{definition}{Definition}
\newtheorem{theorem}{Theorem}
\newtheorem{corollary}{Corollary}
\begin{document}
	
\title[Breathers in the nonlocal derivative NLS equation]{\bf Traveling periodic waves and breathers \\ in the nonlocal derivative NLS equation}
	
\author{Jinbing Chen}
\address[J. Chen]{School of Mathematics, Southeast University, Nanjing, Jiangsu 210096, PR China}
\email{cjb@seu.edu.cn}

\author{Dmitry E. Pelinovsky}
\address[D. Pelinovsky]{Department of Mathematics and Statistics, McMaster University, Hamilton, Ontario, Canada, L8S 4K1}
\email{pelinod@mcmaster.ca}

\begin{abstract}
A nonlocal derivative NLS (nonlinear Schr\"{o}dinger) equation describes 
modulations of waves in a stratified fluid and a continuous limit of the Calogero--Moser--Sutherland system of particles. For the defocusing version of this equation, we prove the linear stability of the nonzero constant background for decaying and periodic perturbations and the nonlinear stability for periodic perturbations. For the focusing version of this equation, we prove linear and nonlinear stability of the nonzero constant background under some restrictions. For both versions, we characterize the traveling periodic wave solutions by using Hirota's bilinear method, both on the nonzero and zero backgrounds. For each family of traveling periodic waves, we construct families of breathers which describe solitary waves moving across the stable background. A general breather solution with $N$ solitary waves propagating on the traveling periodic wave background is derived in a closed determinant form.
\end{abstract}

\date{\today}
\maketitle

\section{Introduction}
\label{sec-1}

One of the main models of the modern nonlinear physics is the nonlinear Schr\"{o}dinger (NLS) equation which describes a slow modulation of small-amplitude, nearly harmonic waves \cite{Fibich,Kev-Dark-2015}. Due to the scaling transformation, the cubic NLS equation can be written in the 
dimensionless form 
\begin{equation}
i u_t = -u_{xx} \pm |u|^2 u, \qquad u(x,t) : \mathbb{R}\times \mathbb{R} \mapsto \mathbb{C},
\label{NLS}
\tag{NLS$_{\pm}$}
\end{equation}
where the upper and lower signs correspond to the defocusing and focusing versions of this model, respectively. For the defocusing NLS$_+$ equation, localized perturbations of the zero background scatter to zero as the time evolves and stable dark solitons propagate on a nonzero constant background which is linearly and nonlinearly stable in the time evolution. For the focusing NLS$_-$ equation, stable bright solitons propagate on the zero background  due to the balance between nonlinearity and dispersion whereas the nonzero constant background is unstable with rogue waves appearing from nowhere and disappearing without any trace. These phenomena were reviewed in \cite{Suret,Amin,Review}. 

The purpose of this work is to consider stability of the nonzero constant background and propagation of solitary waves on the traveling periodic wave background in the nonlocal derivative NLS equation. Due to the scaling transformation, the model  can be written in the dimensionless form
\begin{equation}
	\label{2.1}
	i u_t =  u_{xx} \pm u (i + H) (|u|^2)_x,  \qquad u(x,t) : \mathbb{R}\times \mathbb{R} \mapsto \mathbb{C},
	\tag{NDNLS$_{\pm}$}
\end{equation}
where $H$ is the Hilbert transform which can be defined on $\R$ either according to the following integral formula, 
$$
H(f) := \frac{1}{\pi} {\rm p. v.} \int_{-\infty}^{\infty} \frac{f(y) dy}{y-x},
$$
or according to the Fourier transform by $H(e^{ikx}) = i {\rm sgn}(k) e^{ikx}$, $k \in \R$.

The NDNLS$_+$ equation was derived in \cite{P95,PG96} in the context of modulation theory for internal waves in a stratified fluid. It appears as an asymptotic reduction of the integrable Benjamin--Ono (BO) equation for the envelope of modulating wave packets. A more general intermediate NLS equation was derived from an intermediate long-wave equation in \cite{P95}. The intermediate NLS equation connects NLS$_+$ in the limit of shallow fluid and NDNLS$_+$ in the limit of deep fluid. Integrability and existence of the Lax pair for the NDNLS$_+$ equation was established in \cite{PG95} and was used in many studies of periodic and solitary waves  \cite{MatsunoPLA,MatsunoKaup,Matsuno,MatsunoJCPJ}.

The NDNLS$_-$ equation was obtained as a continuum limit for dynamics of particles in the Calogero--Moser--Sutherland (CMS) system \cite[Eq. (40)]{Abanov1}, see also \cite{Voit} for review of the CMS system. The NDNLS$_-$ equation is related to a Hamiltonian formulation of the complex extension of the BO  equation with a bi-directional wave propagation. The recent interest to this model was inspired by applications of methods of harmonic analysis to weak turbulence in the integrable nonlocal equations with the localized data on the infinite line \cite{GL-22}. Global well-posedness and blow-up in infinite or finite time were studied in \cite{HK,KLV,KKK,KK,UK}. Traveling periodic waves and dynamics of the initial-value problem in the periodic domain were analyzed in \cite{B-22,B-23,B-24}. Numerical approximations of the model with the spectral method were developed in \cite{Bronsard}. Coupled systems of nonlocal integrable equations were also discussed in \cite{BLL-22,BF-23,Sun-24}.

In analogy with the cubic NLS$_{\pm}$ equations, the two versions of the NDNLS$_{\pm}$ equations are referred to as ``defocusing" (upper sign) and ``focusing" (lower sign). However, we will show here that the nonzero constant background does not exhibit the rogue wave phenomenon. Solitary waves propagate steadily on the nonzero constant background as dark solitons for the ``defocusing" version \cite{MatsunoPLA,P95} and bright solitons for the ``focusing" version \cite{GL-22,MatsunoSAMP}. The linear stability of the nonzero constant background for decaying and periodic perturbations and the nonlinear stability for periodic perturbations is proven for the NDNLS$_+$ equation. The proof of the linear and nonlinear stability for the NDNLS$_-$ equation holds only under some restrictions. The main results of this work are given by Theorems \ref{prop-stab}, \ref{theorem-stab}, \ref{prop-3}, \ref{prop-tech}, and Corollaries \ref{cor-stab}, \ref{prop-foc-5}.

The travelling periodic wave background has been recently studied in the cubic NLS equation \cite{Biondini,Biondini2,El} and the  Korteweg--de Vries (KdV) equation \cite{Hoefer,Girotti,Grava} because it arises naturally due to the gradient catastrophe of the wave profiles in the limit of small 
dispersion. For the defocusing cubic NLS equation, dark solitons propagate on the stable traveling periodic wave background 
\cite{Ling,Shin,Takahashi} (see also \cite{Bertola,HMP} for the KdV equation and \cite{MP-24} for the defocusing modified KdV equation). For the focusing cubic NLS equation, bright breathers and rogue waves arise on the unstable traveling periodic wave background \cite{CPW,Feng} (see also \cite{CP-dnls,CPU-dnls} for the cubic derivative NLS equation and 
\cite{CP-23,CP-24} for the discrete NLS and discrete modified KdV equations).

We will show that solitary waves propagate steadily on the traveling periodic wave background in both versions of the NDNLS$_{\pm}$ equations. This suggests that the physics terminology of the two versions as ``defocusing" and ``focusing" is not justified. We conjecture that the traveling periodic wave is linearly stable with respect to small perturbations. The latter question is left open for further studies. Compared to the propagation of solitary waves on the elliptic traveling wave background in the NLS, KdV, and modified KdV equations, the corresponding solutions for the NDNLS$_{\pm}$ equations are expressed by the elementary (trigonometric and power) functions. 

Propagation of solitary waves in the NDNLS$_{\pm}$ equations is very similar to the one in the BO equation 
explored in our previous work \cite{ChenPel24}. For the defocusing NDNLS$_{+}$ equation, we only obtain dark solitons propagating 
on the traveling periodic wave background. For the focusing NDNLS$_{-}$ equation, we obtain both bright and dark solitons on the traveling periodic wave background. The characteristic properties of such solutions depend 
on the Lax spectrum associated with the traveling periodic wave explored recently in 
\cite{B-23} based on earlier work \cite{GK-21,GKT-20,GKT-22} for the BO equation. From the technical point of view, we rely on the Hirota bilinear form both for the NDNLS$_{\pm}$ equations and their Lax pair representation \cite{MatsunoPLA,MatsunoJCPJ,MatsunoSAMP}. By degeneration of the multi-periodic 
solutions, we obtain a closed determinant form for $N$ solitary waves propagating on the traveling periodic wave background.

The paper is organized as follows. Section \ref{sec-2} contains preliminary facts about the NDNLS$_{\pm}$ equations. Linear and nonlinear stability of the nonzero constant solution is considered in Section \ref{sec-3}. The traveling periodic waves, their Lax spectrum, and breathers on their background are  obtained in Sections \ref{sec-4} and \ref{sec-5} for nonzero and zero constant backgrounds, respectively. Section \ref{sec-concl} concludes the paper.

\section{Lax pair for the nonlocal derivative NLS equations}
\label{sec-2}

Both versions of the nonlocal derivative NLS equations can be rewritten as 
\begin{equation}
\label{INLS}
i u_t =  u_{xx} + \sigma u (i + H) (|u|^2)_x, \qquad u(x,t) : \mathbb{R}\times \mathbb{R} \mapsto \mathbb{C},
\end{equation}
where the sign parameter $\sigma$ is
$$
\sigma = +1 \;\;\Leftrightarrow \;\; \mbox{\rm ``defocusing"} \quad \mbox{\rm and} \quad 
\sigma = -1 \;\;\Leftrightarrow \;\;  \mbox{\rm ``focusing"}.
$$
Several symmetries are identical between the NLS$_{\pm}$ and NDNLS$_{\pm}$ equations. The list includes the translational and rotational symmetries 
\begin{equation}
\label{NLS-symm}
u(x,t)  \mapsto u(x+x_0,t+t_0) e^{i \theta_0}, 
\qquad x_0, t_0, \theta_0 \in \mathbb{R},
\end{equation}
the Lorentz transformation
\begin{equation}
\label{NLS-Lorentz}
u(x,t) \mapsto 
e^{-\frac{i}{2} cx + \frac{i}{4} c^2 t} u(x-ct,t), 
\qquad c \in \mathbb{R},
\end{equation}
and the scaling symmetry
\begin{equation}
\label{NLS-scaling}
u(x,t) \mapsto \alpha u(\alpha^2 x, \alpha^4 t), \quad \alpha > 0.
\end{equation}
The symmetry transformation (\ref{NLS-symm}) can be used to set the translational parameters of the traveling periodic wave to zero. The Lorentz transformation (\ref{NLS-Lorentz}) can be used to normalize the profile 
of the traveling waves as ${\rm Im}(x) \to \pm \infty$, see expressions (\ref{2.10}) and (\ref{3.14}) in Propositions \ref{prop-1} and \ref{prop-foc-1}. The scaling transformation (\ref{NLS-scaling}) can be used to normalize the nonzero constant background to unity, see expression (\ref{2.3}), or the period of the traveling periodic wave on the zero background to $2\pi$, see Remark \ref{rem-unity}. 

The nonlocal model (\ref{INLS}) 
is a compatibility condition of the following linear system
\begin{equation}
	\label{2.2}
	\left\{ 
	\begin{array}{l}
		i p_x + \lambda p + u q^+ = 0,\\
		q^+ - \mu q^- + \sigma \bar{u} p =0,\\
		i p_t + \lambda^2 p + \lambda u q^+ + i (u q_x^+ - u_x q^+) = 0,\\
		i q_t^\pm - 2 i \lambda q_x^\pm + q_{xx}^\pm + \sigma q^{\pm} [(\pm i + H) (|u|^2)_x]  = 0,
	\end{array} \right.
\end{equation}
where $\lambda$ is the spectral parameter, $\bar{u}$ denotes the complex conjugation of $u$, and $(p,q^+,q^-)$ are components of the eigenfuction, in which $ q^+ $ and $ q^- $ are analytic in the upper and lower half of the complex plane of $x$, respectively. If we use the projection operators $\mathcal{P}^{\pm} :=  \frac12 (1 \mp i H)$, then we have 
\begin{equation}
\label{projection-formulas}
q^{\pm} =  \mathcal{P}^{\pm} q^{\pm} \quad \mbox{\rm and} \quad 0 = \mathcal{P}^{\mp} q^{\pm}.
\end{equation} 

The nonlocal model (\ref{INLS}) is obtained from the commutability condition $p_{xt} = p_{tx}$ by using the first, third, and fourth equations of system (\ref{2.2}) for $p_x$, $p_t$, and $q^+_t$. From the second equation of system (\ref{2.2}), we obtain 
\begin{equation}
\label{q-plus-minus-projection}
q^+ = - \sigma \mathcal{P}^+ (\bar{u} p) \quad \mbox{\rm and} \quad q^- = \sigma \mu^{-1} \mathcal{P}^-(\bar{u}p),
\end{equation}
where $\mu$ is an additional parameter to be determined.

Since $-i \partial_x$ and $\mathcal{P}^+$ are self-adjoint operators in $L^2$-based Hilbert spaces such as $L^2(\mathbb{R})$ or $L^2_{\rm per}$, the Lax operator $\mathcal{L}_u : H^1 \subset L^2 \to L^2$ given by 
\begin{equation}
\label{linear-operator}	
	\mathcal{L}_u := -i \partial_x + \sigma u \mathcal{P}^+(\bar{u} \; \cdot)
\end{equation}
is self-adjoint. By the spectral theorem in Hilbert space $L^2$, the Lax spectrum (the set of admissible values of the spectral parameter $\lambda$) is a subset of the real line. 

If solutions of the nonlocal model (\ref{INLS}) in $H^1$ are restricted in the space of analytic functions in $\mathbb{C}_+$, then $u \in H^1 \cap L^2_+$, where $L^2_+$ is defined by $L^2_+ := \{ u \in L^2 : \; \mathcal{P}^+ u = u \}$.
Then $u q^+$ and $u q_x^+ - u_x q^+$ are analytic functions in $\mathbb{C}_+$ so that the linear system (\ref{2.2}) can be closed 
for $p$ being also analytic in $\mathbb{C}_+$. This brings the spectrum of a restricted self-adjoint operator $\mathcal{L}_u |_{L^2_+} : H^1 \cap L^2_+ \subset L^2 \to L^2$ given by 
\begin{equation}
\label{linear-operator-restr}
	\mathcal{L}_u |_{L^2_+} = -i \partial_x + \sigma \mathcal{P}^+ u \mathcal{P}^+(\bar{u} \; \cdot)
\end{equation}
and considered in \cite{B-23,GL-22} for periodic and soliton solutions. 
In consistency with this restriction,  we will show in our work that the profile $u$ of the traveling periodic waves satisfies $u \in H^1 \cap L^2_+$ and that the component $p$ of the eigenfunction satisfies $p \in H^1 \cap L^2_+$, from which the spectra of $\mathcal{L}_u$ and $\mathcal{L}_u |_{L^2_+}$ are equivalent, see Propositions \ref{prop-1}, \ref{prop-2}, \ref{prop-foc-1}, and \ref{prop-foc-3}.

As in the case of the BO equation \cite{ChenPel24}, there are two exact solutions of the linear system (\ref{2.2}) if $u$ is the spatial profile of the traveling periodic wave. The first solution has $q^- \equiv 0$ and the second solution has both $q^+$ and $q^-$ nonzero. 

For the first solution, the admissible values of $\lambda$ in the Lax spectrum are defined by the existence of bounded components $(p,q^+,q^-=0)$ of the eigenfunction such that $q^+$ is analytic and bounded in $\mathbb{C}_+$. For the second solution, it is not sufficient to look for the bounded components $(p,q^+,q^-)$ of the eigenfunction for which $q^{\pm}$ are analytic and bounded in $\mathbb{C}_{\pm}$ \cite{ChenPel24}. Based on Propositions 2.2 and C.2 of \cite{GK-21} for the BO equation, we need to define the additional spectral bands in $\cup_{j=0}^{N} [\lambda_j,\lambda_j + k_1]$, where $\{ \lambda_0, \lambda_1, \dots,\lambda_N \}$ are admissible values of $\lambda$ for which the mean value of nonzero $q^-$ over the spatial period of $u$ is zero. The values of $\{ \lambda_0, \lambda_1, \dots,\lambda_N \}$  correspond to the spectrum of the Lax operator $\mathcal{L}_u : H^1_{\rm per} \subset L^2_{\rm per} \to L^2_{\rm per}$ closed in the space of periodic functions with the spatial period of $u$ such that the projection formula (\ref{q-plus-minus-projection}) yields
$$
\oint q^+ dx = -\sigma \oint \bar{u} p dx, \qquad \mu \oint q^- dx = 0.
$$
These requirements are summarized as follows.

\begin{definition}
	\label{def-Lax}
	The Lax spectrum of the linear system (\ref{2.2}) is the set of admissible values of $\lambda$ for which the components $(p,q^+,q^-)$ of the eigenfunction are bounded functions of $x$ with $q^{\pm}$ being analytic and bounded in $\mathbb{C}_{\pm}$ respectively. In addition, if $q^- \neq 0$, then the Lax spectrum includes $\cup_{j=0}^{N} [\lambda_j,\lambda_j + k_1]$, where $\{ \lambda_0, \lambda_1, \dots,\lambda_N \}$ are the admissible values of $\lambda$ for which the mean value of $q^-$ over the spatial period of $u$ is zero.
\end{definition}

The traveling periodic waves, their Lax spectrum, and breathers on their background are considered in Sections \ref{sec-4} and \ref{sec-5}.

\section{Stability of the nonzero constant background}
\label{sec-3}

The simplest solution of the nonlocal model (\ref{INLS}) is the nonzero constant solution $u(x,t) = u_0 \in \mathbb{C} \backslash \{0\}$. The  constant value $u_0 \in \mathbb{C}$ can be normalized to unity without loss of generality due to the scaling symmetry (\ref{NLS-scaling}) and the rotational symmetry (\ref{NLS-symm}). 

The following result gives the linear stability of the nonzero constant background with respect to decaying perturbations.

\begin{theorem}
	\label{prop-stab}
	Let $u = 1 + v$ and consider the linearized equations of motion
	\begin{equation}
	\label{lin-NLS}
	i v_t = v_{xx} + \sigma (i + H) (v_x + \bar{v}_x).
	\end{equation}
If $\sigma = +1$, then for every initial data $v_0 \in H^s(\R)$, $s \geq 0$, the unique solution $v \in C^0(\mathbb{R},H^s(\R))$ to the linearized equation (\ref{lin-NLS}) with $v |_{t=0} = v_0$ satisfies  
\begin{equation}
\label{bound-on-error}
	\| v(\cdot,t) \|_{H^s} \leq C \| v_0 \|_{H^s} \quad \mbox{\rm for every } \;\; t \in \R,
\end{equation}
for some constant $C > 0$. If $\sigma = -1$, then for every $v_0 \in H^s(\R) \cap L^{2,p}(\R)$, $s \geq 0$, $p > \frac{3}{2}$ with $\hat{v}_0 \in C^1(\mathbb{R})$ satisfying $\hat{v}_0(\pm 1) = 0$, the unique solution $v \in C^0(\mathbb{R},H^s(\R))$ to the linearized equation (\ref{lin-NLS}) with $v |_{t=0} = v_0$ satisfies  
\begin{equation}
\label{bound-on-error-focusing}
\| v(\cdot,t) \|_{H^s} \leq C \| v_0 \|_{H^s \cap L^{2,p}} \quad \mbox{\rm for every } \;\; t \in \R.
\end{equation}
Here $\hat{v}_0$ is the Fourier transform of $v_0$ and $L^{2,p}(\R) = \{ f \in L^2(\R) : \;\; |x|^p f \in L^2(\R)\}$.
\end{theorem}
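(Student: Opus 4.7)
The plan is to diagonalize the linearized equation (\ref{lin-NLS}) via the spatial Fourier transform, exploiting that it has constant coefficients and that $H$ has Fourier symbol $i\,\mathrm{sgn}(k)$. Setting $a(k,t):=\hat v(k,t)$ and $b(k,t):=\widehat{\bar v}(k,t)=\overline{\hat v(-k,t)}$, the equation reduces at each frequency $k$ to a two-by-two linear ODE $i\partial_t(a,b)^T=M(k)(a,b)^T$. The matrix $M(k)$ is triangular because the symbol $i(1+\mathrm{sgn}(k))$ of $i+H$ is supported on $k>0$; for $k>0$,
$$M(k)=\begin{pmatrix}-k^2-2\sigma k & -2\sigma k\\0 & k^2\end{pmatrix},$$
with a lower-triangular variant on $k<0$. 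The eigenvalues $\lambda_1(k)=-k^2-2\sigma k$ and $\lambda_2(k)=k^2$ are real and $\lambda_1-\lambda_2=-2k(k+\sigma)$, so $e^{-itM(k)}$ is explicit and its only nondiagonal entry is the divided-difference factor $-2\sigma k\,(e^{-i\lambda_1 t}-e^{-i\lambda_2 t})/(\lambda_1-\lambda_2)$.

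For $\sigma=+1$, the ratio $|2\sigma k|/|\lambda_1-\lambda_2|=1/|k+1|$ is bounded by $1$ on $k>0$, and similarly on $k<0$; together with $|e^{-i\lambda_1 t}-e^{-i\lambda_2 t}|\leq 2$, this yields $\|e^{-itM(k)}\|\leq C$ uniformly in $k\in\R$ and $t\in\R$. Plancherel's theorem with the weight $(1+k^2)^s$ then gives (\ref{bound-on-error}) for any $s\geq 0$.

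For $\sigma=-1$, the eigenvalues collide at $k=\pm 1$ and $e^{-itM(k)}$ grows linearly in $t$ at those two isolated frequencies. The crucial observation to exploit is that the resonant factor $(k\mp 1)^{-1}$ multiplies the \emph{conjugate} component $b(k,0)=\overline{\hat v_0(-k)}$, not the primary $a(k,0)$; so the hypothesis $\hat v_0(\pm 1)=0$ cancels the pole. Writing $\overline{\hat v_0(-k)}=(k\mp 1)\,g_\pm(k)$ locally with $g_\pm$ continuous, the singular piece of the propagator applied to $(a,b)^T(0)$ collapses to $(e^{-i\lambda_1 t}-e^{-i\lambda_2 t})\,g_\pm(k)$, uniformly bounded by $2\|g_\pm\|_{L^\infty}$. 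I would then cut the Fourier integral into neighborhoods of $\pm 1$ (handled by this cancellation) and their complement (where the $\sigma=+1$ estimate applies), and bound $\|g_\pm\|_{L^\infty}\lesssim\|\hat v_0\|_{C^1(\R)}\lesssim\|\hat v_0\|_{H^p(\R)}=\|v_0\|_{L^{2,p}}$ using the Sobolev embedding $H^p(\R)\hookrightarrow C^1(\R)$ valid for $p>3/2$. Since $(1+k^2)^s$ is bounded near $k=\pm 1$, the $H^s$-weight does not interfere, giving (\ref{bound-on-error-focusing}).

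The main obstacle is the joint accounting at the $\sigma=-1$ resonance: the pointwise vanishing $\hat v_0(\pm 1)=0$ is needed to kill the simple pole of the propagator, while the decay condition $v_0\in L^{2,p}$ with $p>3/2$ is exactly what converts this pointwise cancellation into a quantitative $L^\infty$-bound on the divided difference $\hat v_0(-k)/(k\mp 1)$ via Sobolev embedding on the Fourier side. With both ingredients in place, the rest is routine Plancherel bookkeeping.
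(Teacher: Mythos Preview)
Your argument is correct and follows the same overall strategy as the paper: reduce to a $2\times 2$ linear ODE at each Fourier frequency, observe that the eigenvalues are real, and for $\sigma=-1$ handle the eigenvalue collision at $|k|=1$ by using the vanishing of $\hat v_0$ there together with a $C^1$ (hence $H^p$, $p>3/2$) bound on the Fourier side.

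The one genuine difference is your choice of coordinates. The paper works with the real and imaginary parts $v=A+iB$, obtaining a full $2\times 2$ system for $(\hat A,\hat B)$ which it then diagonalizes explicitly. You instead pair $a(k)=\hat v(k)$ with $b(k)=\overline{\hat v(-k)}$ and exploit that the symbol of $i+H$ is supported on $\{k>0\}$, which renders $M(k)$ \emph{triangular}. This buys two things: the propagator $e^{-itM(k)}$ is written down without any diagonalization, and the structure of the resonance becomes transparent --- the singular divided-difference entry acts only on the conjugate component $b(k,0)=\overline{\hat v_0(-k)}$. In particular your computation shows that only $\hat v_0(-1)=0$ is actually needed (the $k<0$ system has $a$ evolving freely), so the hypothesis $\hat v_0(+1)=0$ in the statement is in fact redundant; the paper's real/imaginary splitting obscures this slightly. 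Otherwise the two proofs are parallel, and your Sobolev-embedding step $\|\hat v_0\|_{C^1}\lesssim\|\hat v_0\|_{H^p}\simeq\|v_0\|_{L^2\cap L^{2,p}}$ for $p>3/2$ matches the paper's use of $|\hat A'|+|\hat B'|$ near $|k|=1$.
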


\begin{proof}
	Separating the real and imaginary parts in the linear equation (\ref{lin-NLS}) as $v = A + i B$ yields the coupled system 
	\begin{equation*}
	\left\{ \begin{array}{l} A_t = B_{xx} + 2 \sigma A_x, \\
	-B_t = A_{xx} + 2 \sigma H A_x. \end{array}
	\right.
	\end{equation*}
Let $\hat{A},\hat{B}$ denote the Fourier transform of $A,B$ with Fourier parameter $k \in \mathbb{R}$. The Fourier transform brings the system to the form 
	\begin{equation}
	\label{ode-system}
\left\{ \begin{array}{l} \hat{A}_t = -k^2 \hat{B} + 2 \sigma i k \hat{A}, \\
\hat{B}_t = k^2 \hat{A} + 2 \sigma |k| \hat{A}, \end{array}
\right.
\end{equation}	
from which we obtain the following characteristic equation,
$$
\lambda^2 - 2 i \sigma k \lambda + k^2 (k^2 + 2 \sigma |k| ) = 0.
$$
Due to the factorization
$$
\left( \lambda - i \sigma k \right)^2 + k^2 \left( |k| + \sigma \right)^2 = 0,
$$
the characteristic equation admits two solutions 
\begin{align*}
\lambda_1(k) &= -i k |k|, \\
\lambda_2(k) &= i k (2 \sigma + |k|).
\end{align*}
Since $\lambda_1(k), \lambda_2(k) \in i \R$, solution $\hat{A}$ and $\hat{B}$ of system (\ref{ode-system}) are bounded functions of $t$ if $\lambda_1(k) \neq \lambda_2(k)$. For $k = 0$, we have $\lambda_1(0) = \lambda_2(0) = 0$ but the system (\ref{ode-system}) gives constant $\hat{A}$ and $\hat{B}$ in $t$. 

\underline{If $\sigma = +1$,} then there exist no other solutions of $\lambda_1(k) = \lambda_2(k)$ with $k \neq 0$. Hence, for every $k \in \mathbb{R}$, there exists $C > 0$ such that 
\begin{equation}
\label{estimate-A-B}
|\hat{A}(k,t)| + |\hat{B}(k,t)| \leq C \left( |\hat{A}(k,0)| + |\hat{B}(k,0)| \right), \qquad t \in \mathbb{R},
\end{equation}
so that the bound (\ref{bound-on-error}) holds. 

\underline{If $\sigma = -1$,}  then there exist solutions $k = \pm 1$ of $\lambda_1(k) = \lambda_2(k) = \mp i$. For every $k \neq \pm 1$, we obtain the unique bounded solution of the system (\ref{ode-system}),
\begin{align*}
\hat{A}(k,t) &= k \hat{C}_1(k) e^{-ik|k| t} + \hat{C}_2(k) e^{ik (|k|- 2) t}, \\
\hat{B}(k,t) &= i(|k|-2) \hat{C}_1(k) e^{-ik|k| t} - i {\rm sgn}(k) \hat{C}_2(k) e^{ik (|k|- 2) t},
\end{align*}
for some $t$-independent $\hat{C}_1(k)$ and $\hat{C}_2(k)$ that only depend on $\hat{A}(k,0)$ and $\hat{B}(k,0)$ according to the exact expressions:
\begin{align*}
\hat{C}_1(k) &= \frac{i {\rm sgn}(k) \hat{A}(k,0) + \hat{B}(k,0)}{2i(|k|-1)}, \\
\hat{C}_2(k) &= \frac{i (|k|-2) \hat{A}(k,0) - k \hat{B}(k,0)}{2i(|k|-1)}.
\end{align*}
If $v_0 \in H^s(\R) \cap L^{2,p}(\R)$, $s \geq 0$, $p > \frac{3}{2}$, then  $\hat{A}(k,0), \hat{B}(k,0)$ are $C^1(\mathbb{R})$ functions by the Fourier theory. If they satisfy the constraints $\hat{A}(\pm 1,0) = \hat{B}(\pm 1,0) = 0$, then the estimate (\ref{estimate-A-B}) is replaced by 
\begin{equation*}
|\hat{A}(k,t)| + |\hat{B}(k,t)| \leq C \left\{ \begin{array}{ll} 
|\hat{A}'(k,0)| + |\hat{B}'(k,0)|, \quad & |k| \leq 2, \\
|\hat{A}(k,0)| + |\hat{B}(k,0)|, \quad & |k| > 2, \end{array} \right. 
\qquad t \in \mathbb{R},
\end{equation*}
so that the bound (\ref{bound-on-error-focusing}) holds. 
\end{proof}

\begin{remark}
	In the defocusing case $\sigma = +1$, the linear stability of Theorem \ref{prop-stab} can be extended to the space of periodic functions $H^s_{\rm per}(0,L)$, $s \geq 0$ for every period $L > 0$.
\end{remark}

\begin{remark}
	\label{rem-instability}
	In the focusing case $\sigma = -1$, the resonance of $\lambda_1(k) = \lambda_2(k)$ for $k = \pm 1$ suggests the linear instability of the constant solution $u = 1$ in the space of $2\pi$-periodic functions. Indeed, the system (\ref{ode-system}) for $k = \pm 1$ and $\sigma = -1$ admits two solutions, one of which is linearly growing in $t$:
\begin{align*}
\hat{A}(\pm 1,t) &= (\hat{c}_1  +  \hat{c}_2 t) \; e^{\mp i t}, \\
\hat{B}(\pm 1,t) &= (\mp i \hat{c}_1 + (\mp i t - 1) \hat{c}_2) \; e^{\mp i t},
\end{align*}	
for some $t$-independent $\hat{c}_1$ and $\hat{c}_2$ obtained from $\hat{A}(\pm 1,0)$ and $\hat{B}(\pm 1,0)$. This linear instability is missed in $L^2_{\rm per}(0,L)$ if the spatial period $L$ is not divisible by $2 \pi$.
\end{remark}

In order to obtain the nonlinear stability of the nonzero constant background, we use the conserved quantities, see equations (6.8) in \cite{PG95} 
and (A.16)--(A.18) in \cite{MatsunoSAMP}. The nonlocal model (\ref{INLS}) on $\mathbb{R}$ with the boundary conditions $|u(x,t)| \to 1$ as $|x| \to \infty$ for every $t \in \mathbb{R}$ admits the following conserved quantities:
	\begin{align*}
	I_1(u) &= \int_{\R} (|u|^2 - 1) dx, \\
	I_2(u) &= i \int_{\R} (u\bar{u}_x - \bar{u} u_x) dx + \sigma \int_{\R} (|u|^4 - 1) dx, \\
	I_3(u) &= \int_{\R} \left( |u_x|^2 - \frac{i}{2} \sigma |u|^2 (\bar{u} u_x -  \bar{u}_x u) - \frac{1}{2} \sigma |u|^2 H(|u|^2)_x + \frac{1}{3} (|u|^6 - 1) \right) dx.
	\end{align*}
A suitable combination of the conserved quantities leads to the following nonlinear stability result in the defocusing case $\sigma = +1$.	

\begin{theorem}
	\label{theorem-stab}
For every fixed $L > 0$, there exists $\delta > 0$ such that for every $v_0 \in H^1_{\rm per}((0,L),\mathbb{C})$ with $\| v_0 \|_{H^1_{\rm per}} \leq \delta$, the unique solution $u \in C^0(\mathbb{R},H^1_{\rm per}((0,L),\mathbb{C}))$ to the nonlocal model (\ref{INLS}) with $\sigma = +1$ and with $u |_{t=0} = 1 + v_0$ satisfies 
\begin{equation}
\label{bound-cont}
\| e^{-i \theta(t)} u(\cdot,t) - 1 \|_{H^1_{\rm per}} \leq C \| v_0 \|_{H^1_{\rm per}} \quad \mbox{\rm for every } \;\; t \in \R,
\end{equation}
for some constant $C > 0$ and some function $\theta \in C^0(\mathbb{R})$.
\end{theorem}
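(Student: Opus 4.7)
The plan is the classical orbital-stability route: exploit the rotational symmetry (\ref{NLS-symm}) to modulate out the phase, then construct a conserved Lyapunov functional from $I_1$ and $I_3$ (from the list just before the theorem) whose Hessian at $u=1$ is coercive on the gauge-fixed slice. First, write $u(x,t) = e^{i\theta(t)}(1+w(x,t))$ with $\theta(t) := \arg\int_0^L u(x,t)\,dx$; for $\|v_0\|_{H^1_{\rm per}}$ small, this is well defined, forces $\int_0^L \mathrm{Im}\,w(x,t)\,dx = 0$, and (by the implicit function theorem and gauge invariance of (\ref{INLS})) yields a continuous $\theta:\mathbb{R}\to\mathbb{R}$ with $w\in C^0(\mathbb{R},H^1_{\rm per})$.

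I would then take the gauge-invariant and conserved functional $\Lambda(u) := I_3(u) - I_1(u)$, which is well defined on $(0,L)$ since the derivations of the conservation laws in \cite{PG95,MatsunoSAMP} use only integration by parts. Writing $w = a + ib$ with real $a,b$ and expanding $\Lambda(1+w)$, the contributions linear in $w$ cancel exactly. A direct computation, using the identity $\int_0^L \rho\,H(\rho_x)\,dx = -\|\rho\|_{\dot H^{1/2}_{\rm per}}^2$ (from $H(e^{ikx}) = i\sgn(k)e^{ikx}$) together with integration by parts on the derivative nonlinearity, gives
\begin{equation*}
\Lambda^{(2)}(w) = \int_0^L (a_x^2 + b_x^2)\,dx + 2\sigma\,\|a\|_{\dot H^{1/2}_{\rm per}}^2 - 4\sigma\int_0^L a_x b\,dx + 4\int_0^L a^2\,dx,
\end{equation*}
while the remainder $R(w) := \Lambda(1+w) - \Lambda(1) - \Lambda^{(2)}(w)$ is a polynomial of degree $3$ through $6$ in $(w,\bar w, w_x, \bar w_x, Hw, Hw_x)$ and therefore satisfies $|R(w)| \leq C\sum_{j=3}^{6}\|w\|_{H^1_{\rm per}}^j$ by $H^1_{\rm per}\hookrightarrow L^\infty_{\rm per}$ and $L^2$-boundedness of $H$.

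Coercivity for $\sigma = +1$ is then verified by Fourier decomposition along wavenumbers $k_n = 2\pi n/L$. The gauge condition $\int b\,dx = 0$ kills the $n=0$ mode of $b$, so that block reduces to $4L\,a_0^2$. For each $n \geq 1$, the quadratic form decouples into two copies of a $2\times 2$ form with matrix
\begin{equation*}
M_n = \begin{pmatrix} k_n^2 + 2k_n + 4 & \pm 2k_n \\ \pm 2k_n & k_n^2 \end{pmatrix},
\end{equation*}
whose determinant equals $k_n^3(k_n+2)>0$ and whose trace is strictly positive, so $M_n$ is positive definite. Comparing its smallest eigenvalue $\sim k_n^3$ (as $k_n\to 0$) and $\sim k_n^2$ (as $k_n\to\infty$) with the $H^1$ weight $1+k_n^2$ yields a uniform lower bound $c(L)>0$ over $n\geq 1$. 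Summing, $\Lambda^{(2)}(w) \geq c(L)\|w\|_{H^1_{\rm per}}^2$ on the gauge slice.

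Finally, conservation of $\Lambda$ gives $\Lambda^{(2)}(w(t)) = \Lambda^{(2)}(w(0)) + R(w(0)) - R(w(t))$, and a standard continuity (bootstrap) argument closes: provided $\|v_0\|_{H^1_{\rm per}} \leq \delta$ is small enough, as long as $\|w(t)\|_{H^1_{\rm per}}\leq 1$ one has $c(L)\|w(t)\|^2 \leq C\|v_0\|^2 + C(\|w(t)\|^3+\|v_0\|^3)$, which forces $\|w(t)\|_{H^1_{\rm per}} \leq C'\|v_0\|_{H^1_{\rm per}}$ for all $t\in\mathbb{R}$ (time reversibility of (\ref{INLS}) handles $t<0$). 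Unraveling the phase modulation yields (\ref{bound-cont}). The main obstacle is the explicit Hessian computation—especially the cross term $-4\sigma\int a_x b$ stemming from the derivative nonlinearity $|u|^2(\bar u u_x - \bar u_x u)$ and its interaction with the positive $\dot H^{1/2}$ contribution from the nonlocal term—together with the mode-by-mode verification of positive definiteness. The vanishing of the $\int b^2$ coefficient in $\Lambda^{(2)}$ is not accidental: it reflects that $ib$ is tangent to the gauge orbit, and is precisely what makes the phase modulation step indispensable.
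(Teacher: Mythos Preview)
Your approach is correct but differs from the paper's in a meaningful way. The paper chooses the Lyapunov functional $\Lambda(v)=I_3(1+v)-\sigma I_2(1+v)+I_1(1+v)$, whose quadratic part reduces (for $\sigma=+1$) to
\[
\int_0^L\Big(|v_x|^2+\tfrac12(v+\bar v)K(v+\bar v)\Big)\,dx
=\sum_{n\in\Z}\frac{4\pi^2n^2}{L}|\hat v_n|^2+\pi|n|\,|\hat v_n+\bar{\hat v}_{-n}|^2,
\]
which is \emph{manifestly} nonnegative but only coercive on the nonzero Fourier modes. The paper then controls the real mean $\hat v_0$ by a separate elementary argument using the conservation of $I_1$ alone. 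Your choice $\Lambda=I_3-I_1=[I_3-I_2+I_1]+[I_2-2I_1]$ adds the quadratic part of $I_2-2I_1$, namely $-4\int a_xb+4\int a^2$, to the paper's form; this introduces a cross term, forcing you into the $2\times2$ matrix analysis, but it pays you back by producing the $4\int a^2$ term that controls the real mean directly, so no auxiliary use of $I_1$ is needed. Your determinant computation $\det M_n=k_n^3(k_n+2)$ is correct, and since $k_n\ge 2\pi/L$ in the periodic setting, the uniform lower bound $c(L)>0$ is genuine; the vanishing of $\lambda_{\min}(M_n)$ as $k_n\to0$ would only be an obstruction on the line. In short: the paper's combination makes positivity transparent at the cost of a two-step argument, while yours packages everything into a single coercive functional at the cost of a short spectral calculation.
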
	
	
\begin{proof}
	By substituting $u = 1 + v$ in the conserved quantities, expanding them in $v$, and integrating by parts, we obtain 
	\begin{align*}
	I_1(1+v) &= \int (v + \bar{v}+ |v|^2) dx, \\
	I_2(1+v) - 2 \sigma I_1(1+v) &= i \int (v \bar{v}_x - \bar{v} v_x) dx + \sigma \int (v + \bar{v}+ |v|^2)^2 dx, \\ 
I_3(1+v) - \sigma I_2(1+v) + I_1(1+v) &= \int \left( |v_x|^2 + \frac{1}{2} \sigma (v + \bar{v}) K (v + \bar{v}) + N(v) \right) dx,
	\end{align*}
	where $K = - H \partial_x$ and $N(v)$ contains nonlinear terms from cubic to  sixth-order powers of $v$:
	\begin{align*}
	N(v) &= i \sigma |v|^2 (\bar{v}_x - v_x) + \frac{i}{2} \sigma (v^2 \bar{v}_x - \bar{v}^2 v_x) + \frac{i}{2} \sigma |v|^2 (v \bar{v}_x - \bar{v} v_x) \\
	& \quad - \sigma (v + \bar{v}) H(|v|^2)_x - \frac{1}{2} \sigma |v|^2 H(|v|^2)_x + \frac{1}{3} (v + \bar{v} + |v|^2)^3.
	\end{align*}
	The integration interval can be considered on the period $(0,L)$ if $v \in H^1_{\rm per}((0,L),\mathbb{C})$ with any fixed period $L > 0$. Local well-posedness of the nonlocal model (\ref{INLS}) in $H^1(\mathbb{R},\mathbb{C})$ has been proven in \cite{Pilod}, this result can be extended in $H^1_{\rm per}((0,L),\mathbb{C})$ \cite{B-22}. 
	
The Lyapunov functional is defined by 
\begin{equation}
\label{Lyp}
\Lambda(v) := I_3(1+v) - \sigma I_2(1+v) + I_1(1+v), \quad v \in H^1_{\rm per}((0,L),\mathbb{C}).
\end{equation} 
In the defocusing case $\sigma = +1$, the quadratic part of $\Lambda$ is positive and coercive in $v \in H^1_{\rm per}((0,L),\mathbb{C})$ for the spatially varying part of the perturbation $v$. Indeed, if we use Fourier series 
$$
v(x) = \sum_{n \in \mathbb{Z}} \hat{v}_n e^{\frac{2\pi i n x}{L}}, \quad 
\bar{v}(x) = \sum_{n \in \mathbb{Z}} \bar{\hat{v}}_{-n} e^{\frac{2\pi i n x}{L}},
$$
then we obtain by Parseval's equality 
\begin{align*}
\oint \left[  |v_x|^2 + \frac{1}{2} (v + \bar{v}) K (v + \bar{v}) \right] dx 
&= \sum_{n \in \mathbb{Z}} \frac{4 \pi^2 n^2}{L} |\hat{v}_n|^2 + \pi |n| |\hat{v}_n + \bar{\hat{v}}_{-n}|^2,
\end{align*}
where we have used the Fourier symbol of $K$ from $K(e^{ikx}) = |k| e^{ikx}$, $k \in \mathbb{R}$.
Neglecting the second term in the lower bound and using Poincar\'{e} inequality for the first term, we get the coercivity bound 
\begin{align}
\label{bound-1}
\oint \left[  |v_x|^2 + \frac{1}{2} (v + \bar{v}) K (v + \bar{v}) \right] dx 
\geq \frac{1}{2} \| v_x \|_{L^2}^2 + \frac{2\pi^2}{L^2} \| v - \hat{v}_0 \|_{L^2}^2,
\end{align}
which allows us to control the $H^1_{\rm per}((0,L),\mathbb{C})$ norm of the spatially varying part of the local solution 
$v \in C^0((-\tau_0,\tau_0),H^1_{\rm per}((0,L),\mathbb{C}))$ 
for some $\tau_0 > 0$ from the conserved value of the Lyapunov functional $\Lambda$.

It remains to control the mean value of the perturbation $v$. We can preserve the zero-mean constraint for the imaginary part of  $v$ by using the rotational invariance (\ref{NLS-symm}) and introducing the orthogonal decomposition 
$$
u(x,t) = e^{i \theta(t)} \left[ 1 + v(x,t) \right], \qquad \oint {\rm Im}(v) dx = 0,
$$
where the modulational parameter $\theta \in C^0((-\tau_0,\tau_0),\R)$ is uniquely defined for the local solution $v \in C^0((-\tau_0,\tau_0),H^1_{\rm per}((0,L),\mathbb{C}))$ from zeros of $f(\theta) : \R \to \R$ given by
$$
f(\theta) := \oint {\rm Im}(e^{-i \theta} u - 1) dx. 
$$
By the implicit function theorem, there exists a unique $\theta \in \R$ for every $u \in H^1_{\rm per}$ in the ball with small $\inf_{\theta \in \R} \| e^{-i\theta} u - 1 \|_{H^1_{\rm per}} \leq C \| v_0 \|_{H^1_{\rm per}}$.

To control the mean value of the real part of the perturbation $v$, we use the first conserved quantity $I_1$ and Parseval's equality to obtain 
$$
I_1(1+v) = 2L \hat{v}_0 + L \hat{v}_0^2 + L \sum_{n \in \mathbb{Z} \backslash \{0\}} |\hat{v}_n|^2, 
$$
where $\hat{v}_0 \in \R$ due to the zero-mean constraint $\oint {\rm Im}(v) dx = 0$. This  yields 
\begin{align*}
L (\hat{v}_0 + 1)^2 &= I_1(1+v) + L - L \sum_{n \in \mathbb{Z} \backslash\{0\}} |\hat{v}_n|^2 \leq I_1(1+v) + L,
\end{align*}
or 
$$
|\hat{v}_0| \leq \frac{\sqrt{L + I_1(1+v)}}{\sqrt{L}} - 1. 
$$ 
Since $I_1(1+v)$ is conserved in time, Cauchy--Schwarz inequality implies that 
$$
|I_1(1+v)| \leq 2 \sqrt{L} \| v_0 \|_{L^2} + \| v_0 \|^2_{L^2}.
$$
Due to the smallness of $\| v_0 \|_{L^2}$, there is a constant $C > 0$ independently of the initial data $v_0 \in H^1_{\rm per}((0,L),\mathbb{C})$ (which may change from one line to another line) such that 
\begin{equation}
\label{bound-2}
|\hat{v}_0(t)| \leq \frac{\sqrt{L + I_1(1+v)}}{\sqrt{L}} - 1 \leq C \| v_0 \|_{L^2},
\end{equation}
which controls $\hat{v}_0(t)$ for every $t \in (-\tau_0,\tau_0)$.

Due to the Banach algebra of $H^1_{\rm per}$, the nonlinear terms of $\Lambda(v)$ are controlled by 
\begin{equation}
\label{bound-3}
\left| \int N(v) dx \right| \leq C \left( \| v \|_{H^1_{\rm per}}^3 + \| v \|_{H^1_{\rm per}}^6 \right).
\end{equation} 
Since the value of $\Lambda(v)$ is conserved in time $t \in \R$ and $\|v_0 \|_{H^1_{\rm per}}$ is small, we obtain the bound
$$
\Lambda(v) \leq C  \| v_0 \|_{H^1_{\rm per}}^2.
$$
By using the coercivity of the quadratic part of $\Lambda(v)$ for the varying part of $v$, the zero-mean constraint for ${\rm Im}(v)$, and the control of the mean value of ${\rm Re}(v)$, we obtain with triangle inequality 
and bounds (\ref{bound-1}), (\ref{bound-2}), and (\ref{bound-3}) that 
\begin{align*}
\| v(\cdot,t) \|_{H^1_{\rm per}} & \leq \| \hat{v}_0(t) \|_{L^2} 
+ \| v(\cdot,t) - \hat{v}_0(t) \|_{H^1_{\rm per}} \\
& \leq \sqrt{L} |\hat{v}_0(t)| + C \sqrt{\Lambda(v)} \\
& \leq C \| v_0 \|_{H^1_{\rm per}}
\end{align*}
for every $t \in (-\tau_0,\tau_0)$.  Since this bound is independent of $t$, the local solution $v \in C^0((-\tau_0,\tau_0),H^1_{\rm per}((0,L),\mathbb{C}))$ can be extended globally to yield the bound (\ref{bound-cont}).
\end{proof}
 
\begin{corollary}
	\label{cor-stab}
Theorem \ref{theorem-stab} holds in the focusing case $\sigma = -1$ for every $L \in (0,\pi)$.
\end{corollary}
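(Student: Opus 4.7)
I would follow the proof of Theorem \ref{theorem-stab} essentially verbatim, since every step in that argument --- the local well-posedness in $H^1_{\rm per}((0,L),\mathbb{C})$, the conservation of $I_1, I_2, I_3$, the orthogonal decomposition $u = e^{i\theta(t)}(1+v)$ with $\theta$ determined by the implicit function theorem, the control of $\hat v_0$ from $I_1$ in (\ref{bound-2}), and the cubic-to-sextic nonlinear bound (\ref{bound-3}) coming from the Banach algebra structure of $H^1_{\rm per}$ --- is insensitive to the sign of $\sigma$. The only sign-sensitive step is the coercivity of the quadratic part of $\Lambda(v)$, because replacing $\sigma=+1$ by $\sigma=-1$ reverses the sign of the nonnegative term $\frac{1}{2}\int(v+\bar v) K(v+\bar v)\, dx$. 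The heart of the proof is therefore to show that this quadratic form remains coercive on the subspace of zero-mean perturbations of $H^1_{\rm per}((0,L),\mathbb{C})$ whenever $L<\pi$.

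Decomposing $v = A + iB$ with $A,B\in\R$, the $\sigma=-1$ quadratic form decouples as
\begin{equation*}
Q(v) = \int_0^L\bigl(A_x^2 + B_x^2 - 2AKA\bigr)\,dx.
\end{equation*}
Writing $A(x)=\sum_{n\in\Z}\hat A_n e^{2\pi i n x/L}$ and using the Fourier symbol $|k|$ of $K$, Parseval's identity yields
\begin{equation*}
\int_0^L\bigl(A_x^2 - 2AKA\bigr)\,dx = \sum_{n\neq 0}\frac{4\pi^2 n^2}{L}\Bigl(1 - \frac{L}{\pi|n|}\Bigr)|\hat A_n|^2,
\end{equation*}
and the per-mode coefficient is minimized at $|n|=1$ by the factor $1 - L/\pi$. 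Provided $L\in(0,\pi)$ this factor is strictly positive, so $Q(v) \geq (1-L/\pi)\|v_x\|_{L^2}^2$; combining with Poincar\'e's inequality on the zero-mean part yields the focusing analogue of (\ref{bound-1}),
\begin{equation*}
Q(v) \geq \tfrac{1}{2}\bigl(1-\tfrac{L}{\pi}\bigr)\|v_x\|_{L^2}^2 + \tfrac{2\pi^2}{L^2}\bigl(1-\tfrac{L}{\pi}\bigr)\|v-\hat v_0\|_{L^2}^2,
\end{equation*}
with constants depending explicitly on $L$ and $\pi - L$.

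Once this coercivity is in hand, the remainder of the proof of Theorem \ref{theorem-stab} transfers without modification: the modulation $\theta(t)$ removes the mean of $\mathrm{Im}(v)$, the conserved $I_1(1+v)$ controls $|\hat v_0(t)|$ via (\ref{bound-2}), and smallness of $\|v_0\|_{H^1_{\rm per}}$ lets the nonlinear remainder (\ref{bound-3}) be absorbed into the quadratic part, so that the a priori estimate extends the local solution to a global one satisfying (\ref{bound-cont}). The main delicate point I anticipate is that the coercivity constant $1-L/\pi$ degenerates as $L\uparrow\pi$, forcing the smallness threshold $\delta>0$ to depend on $(\pi-L)^{-1}$; this is expected, since the Lyapunov method loses positivity exactly at $L=\pi$. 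Extending nonlinear stability into the gap $L\in[\pi,2\pi)$, where the resonance $k=\pm 1$ of Remark \ref{rem-instability} has not yet entered the Fourier grid $k_n = 2\pi n/L$, would require either a different combination of conserved quantities or a modulated norm, and is not pursued in this corollary.
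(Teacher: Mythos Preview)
Your proof is correct and follows essentially the same approach as the paper's: both compute the Fourier expansion of the quadratic part of $\Lambda(v)$ for $\sigma=-1$, observe that the per-mode coefficient $\frac{4\pi^2 n^2}{L}-4\pi|n|$ is positive for all $n\neq 0$ precisely when $L<\pi$, and then invoke the remainder of the proof of Theorem~\ref{theorem-stab} verbatim. Your presentation is slightly more explicit in extracting the coercivity constant $1-L/\pi$ and in anticipating its degeneration as $L\uparrow\pi$, which the paper treats only in a subsequent remark.
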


\begin{proof}
In the focusing case $\sigma = -1$, the quadratic part of $\Lambda(v)$ is 
given by 
\begin{align*}
\oint \left[  |v_x|^2 - \frac{1}{2} (v + \bar{v}) K (v + \bar{v}) \right] dx 
&= \sum_{n \in \mathbb{Z}} \frac{4 \pi^2 n^2}{L} (|\widehat{{\rm Re}(v)}_n|^2 + |\widehat{{\rm Im}(v)}_n|^2) - 4 \pi |n| |\widehat{{\rm Re}(v)}|^2.
\end{align*}
It is clear that it is sign-definite for $L < \pi$, hence the same Lyapunov functional (\ref{Lyp}) can be used for the proof of nonlinear stability of the constant solution $u = 1$ if $L < \pi$. The rest of the proof holds verbatim.
\end{proof}

\begin{remark}
	For $L \in [\pi,\infty)$, it is an open problem to prove the nonlinear stability of the constant solution $u = 1$ with respect to perturbations in $H^1_{\rm per}((0,L),\mathbb{C})$ in the focusing case $\sigma = -1$. This interval includes the periods $L$ multiple to $2\pi$, for which the linear instability holds by Remark \ref{rem-instability}.
\end{remark}

\section{Traveling periodic waves and breathers on the nonzero background}
\label{sec-4} 

We introduce the bilinear formulation of the nonlocal model (\ref{INLS}) 
and the linear system (\ref{2.2}). We use the bilinear formulation to obtain the traveling periodic wave in Section \ref{sec-periodic}. Lax spectrum of the traveling periodic wave is computed in Section \ref{sec-lax}. 
By using the Lax spectrum and the $2$-periodic solutions, we construct the exact solution for the solitary wave on the background of the traveling periodic wave in Section \ref{sec-breathers}. 
Section \ref{sec-n-breathers} gives a closed-form solution for $N$ solitary waves on the background of the traveling periodic wave as a quotient of determinants. 

Without loss of generality, we normalize the nonzero background for the traveling periodic wave to unity due to the scaling symmetry (\ref{NLS-scaling}). We also refer to solitary waves on the traveling periodic wave as to the breathers, similar 
to the terminology used in \cite{ChenPel24,HMP,MP-24}, due to the periodic character of the interaction between the solitary wave and the traveling periodic wave.

\subsection{Traveling periodic wave} 
\label{sec-periodic}

Assume that $f$ and $\tilde{f}$ have only zeros in the lower and upper half of the complex plane of $x$, respectively. Then, $f_x/f$ and $\tilde{f}_x/\tilde{f}$ are analytic in the upper and lower half-planes, respectively. By using the projection formulas (\ref{projection-formulas}), we obtain 
\begin{equation}
\label{2.4}
\mathcal{P}^+ \frac{\partial^2}{\partial x^2} \ln \frac{f}{\tilde{f}} =   \frac{\partial^2}{\partial x^2} \ln f, \quad \mathcal{P}^- \frac{\partial^2}{\partial x^2} \ln \frac{f}{\tilde{f}} = - \frac{\partial^2}{\partial x^2} \ln \tilde{f}.
\end{equation}
Substitution
\begin{equation}
\label{2.3}
u = \frac{g}{f}, \quad \bar{u} = \frac{\tilde{g}}{\tilde{f}}, \quad |u|^2 = 1 - i \sigma \frac{\partial}{\partial x} \ln \frac{f}{\tilde{f}}
\end{equation}
transforms the nonlocal model \eqref{INLS} into the following system of bilinear equations:
\begin{equation}
\label{2.5}
\left\{ 
\begin{array}{l}
(i D_t + D_x^2) f \cdot g = 0, \\
(-i D_t + D_x^2) \tilde{f} \cdot \tilde{g} = 0, \\
i D_x f \cdot \tilde{f} +  \sigma ( g \cdot \tilde{g} -  f \cdot \tilde{f} ) = 0.   
\end{array}  \right.
\end{equation}

The following proposition summarizes the state-of-art in the existence of the traveling periodic waves on the nonzero background. Although we give a proof for the sake of completeness, similar solution waveforms have been obtained in \cite{B-23,MatsunoPLA,MatsunoSAMP}.

\begin{proposition}
	\label{prop-1}
The nonlocal model (\ref{INLS}) admits the traveling periodic wave in the form
	\begin{equation}\label{2.10}
	u(x,t) = e^{\frac{1}{2} (\psi_1 - \phi_1)} \frac{1 + e^{i k_1 \xi_1 - \psi_1}}{1 + e^{i k_1 \xi_1 - \phi_1}}, \quad 
	\bar{u}(x,t) = e^{-\frac{1}{2} (\psi_1 - \phi_1)} \frac{1 + e^{i k_1 \xi_1 + \psi_1}}{1 + e^{i k_1 \xi_1 + \phi_1}}
\end{equation}
and 
\begin{equation}\label{2.11}
|u(x,t)|^2 = 1 - \frac{\sigma k_1 \sinh \phi_1}{\cos k_1 \xi_1 + \cosh \phi_1},
\end{equation}	
where $k_1 > 0$ and $\xi_1 = x - c_1 t - x_1$ with arbitrary $x_1 \in \mathbb{R}$, whereas $\phi_1 > 0$ and $\psi_1 \in \mathbb{R}$ are uniquely determined by 
\begin{equation}
\label{2.9} 
e^{2\phi_1} = \frac{(c_1 - k_1) (c_1 + k_1 + 2 \sigma)}{(c_1 + k_1) (c_1 - k_1 + 2 \sigma)}, \qquad e^{\psi_1} = \frac{c_1 + k_1}{c_1 - k_1} e^{\phi_1}.
\end{equation}
The parameters $k_1 > 0$ and $c_1 \in \mathbb{R}$ are further restricted as follows:
\begin{itemize}
	\item If $\sigma = 1$, then $k_1 \in (0,1)$ and $c_1 \in (-2+k_1,-k_1)$. 
	\item If $\sigma = -1$, then $k_1 \in (0,\infty)$ and either $c_1 \in (k_1+2,\infty)$ or $c_1 \in (-\infty,-k_1)$.
\end{itemize}
\end{proposition}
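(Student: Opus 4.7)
The plan is to use a single-phase Hirota ansatz
\begin{equation*}
f = 1 + e^{ik_1\xi_1 - \phi_1}, \qquad g = e^{\frac{1}{2}(\psi_1 - \phi_1)}\bigl(1 + e^{ik_1\xi_1 - \psi_1}\bigr),
\end{equation*}
with $\tilde f$ and $\tilde g$ obtained by flipping the signs of $\phi_1$ and $\psi_1$, and to solve the bilinear system (\ref{2.5}) for $\phi_1, \psi_1$ in terms of $c_1, k_1$. Once $\phi_1, \psi_1 > 0$, the zeros of $f, g$ lie strictly in the lower half of the $\xi_1$-plane (at imaginary parts $-\phi_1/k_1$ and $-\psi_1/k_1$) and those of $\tilde f, \tilde g$ in the upper half, which is exactly what is required for (\ref{2.4}) and hence for the substitution (\ref{2.3}). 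The representation (\ref{2.10}) for $u = g/f$ then reads off directly, and a short computation of $\partial_x \ln(f/\tilde f)$ substituted into (\ref{2.3}) yields (\ref{2.11}).

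Next I would substitute the ansatz into each bilinear equation, using the rule $D_x^n e^{\alpha x}\cdot e^{\beta x} = (\alpha-\beta)^n e^{(\alpha+\beta) x}$ (and its $t$-analogue), so that each equation reduces to a finite combination of exponentials of $ik_1\xi_1$. The first equation collapses to $(c_1 - k_1) e^{-\phi_1} = (c_1 + k_1) e^{-\psi_1}$, the second yields the same identity by the symmetric computation, and the third, after the constant and $e^{2ik_1\xi_1}$ contributions cancel, reduces to $k_1 \sinh\phi_1 = \sigma(\cosh\phi_1 - \cosh\psi_1)$. Eliminating $\psi_1$ between these two relations is a routine algebraic step that produces the formulas in (\ref{2.9}).

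The parameter ranges then follow from the equivalent identity
\begin{equation*}
e^{2\phi_1} - 1 = \frac{-4 \sigma k_1}{(c_1+k_1)(c_1 - k_1 + 2\sigma)},
\end{equation*}
together with the realness condition $(c_1+k_1)(c_1-k_1) > 0$ on $\psi_1$. For $\sigma = -1$, these two inequalities cut the $c_1$-axis into exactly the branches $c_1 > k_1 + 2$ and $c_1 < -k_1$, and since (\ref{2.11}) gives $|u|^2 > 1$ on both branches, no further restriction on $k_1 > 0$ is needed. For $\sigma = +1$, the same inequalities force $c_1 \in (-2+k_1, -k_1)$ together with $k_1 < 1$; the main obstacle is verifying that $|u|^2 \geq 0$ throughout this region, which is equivalent to $k_1 \leq \tanh(\phi_1/2)$. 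My plan for this step is to reparametrize $c_1 = -1 + s$ with $|s| < 1 - k_1$, compute $\cosh\phi_1$ as a closed-form expression in $s,k_1$ from (\ref{2.9}), and reduce the squared bound to the elementary inequality $s^2 \leq 2(1+k_1^2)$, which is automatic on the declared interval and saturated in $|u|^2_{\min} = 0$ only at $s = 0$. A single explicit check at, say, $(k_1, c_1) = (2, -1)$, where $|u|^2_{\min} = -3$, then confirms that the competing branch $k_1 > 1$ produced by the identity above is genuinely inadmissible for $\sigma = +1$.
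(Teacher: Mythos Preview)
Your approach is essentially the same as the paper's: a single-exponential ansatz in the bilinear form (\ref{2.5}), solve for $\phi_1,\psi_1$, then read off the admissible $(k_1,c_1)$ from $\phi_1>0$ together with the reality of $e^{\psi_1}$. Your reduction of the first and third bilinear equations is correct and leads exactly to (\ref{2.9}), and your two inequalities $e^{2\phi_1}>1$ and $(c_1+k_1)(c_1-k_1)>0$ reproduce the stated parameter ranges by the same case analysis the paper carries out.

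Two minor points where you are doing more than necessary. First, the Hilbert-transform identity (\ref{2.4}) only constrains the zeros of $f$ and $\tilde f$; the zeros of $g,\tilde g$ are irrelevant, so you do not need $\psi_1>0$. In fact $\psi_1$ is not always positive on the admissible set (for $\sigma=+1$, $c_1=-1$ gives $\psi_1=0$), which is why the statement only asserts $\psi_1\in\mathbb R$. Your actual constraint analysis uses only $e^{\psi_1}>0$, so this slip does not propagate. Second, the separate verification of $|u|^2\ge 0$ is redundant: once $\bar u$ in (\ref{2.10}) is genuinely the complex conjugate of $u$ on the real axis (which is exactly the condition $(c_1+k_1)/(c_1-k_1)>0$ you already imposed), one has $|u|^2=|g/f|^2\ge 0$ automatically, and (\ref{2.11}) merely re-expresses this quantity. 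The check at $(k_1,c_1)=(2,-1)$ lands outside the reality constraint anyway, so it adds nothing beyond what the inequalities already give.
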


\begin{proof}
Let us consider the following $1$-periodic solution of the bilinear equations (\ref{2.5}):
\begin{equation}
	\label{2.8} 
	\left\{ 
\begin{array}{cc}
f = 1 + e^{i k_1 \xi_1 - \phi_1}, &
\tilde{f} = 1 + e^{i k_1 \xi_1 + \phi_1}, \\
g = \gamma_1 (1 + e^{i k_1 \xi_1 - \psi_1}), & \tilde{g} = 
\gamma_1^{-1} (1 + e^{i k_1 \xi_1 + \psi_1}), 
\end{array}
\right.
\end{equation}
where $\xi_1 = x - c_1 t - x_1$. The real parameters $k_1$, $c_1$, $x_1$ 
are arbitrary as long as the sign of $\phi_1$ coincides with the sign of $k_1$, 
since $f$ and $\tilde{f}$ given by (\ref{2.8}) must only have zeros in the lower and upper half-planes, respectively. The real parameter $\gamma_1$ is arbitrary as long as $\bar{u}$ is a complex conjugate of $u$.

The bilinear equations (\ref{2.5}) are satisfied if and only if coefficients $\phi_1$ and $\psi_1$ are uniquely determined by (\ref{2.9}). Substituting \eqref{2.8} into \eqref{2.3}, we obtain the traveling wave solution in the form 
(\ref{2.10}) and (\ref{2.11}) with $\bar{u}$ being the complex conjugate of $u$ if and only if $\gamma_1$ is a real root of the quadratic equation:
$$
\gamma_1^2 = e^{\psi_1 - \phi_1} = \frac{c_1 + k_1}{c_1 - k_1}. 
$$

It remains to obtain the admissible values for parameters $k_1$ and $c_1$ from the condition that the sign of $\phi_1$ must coincide with the sign of $k_1$ 
and that $\gamma_1^2 > 0$. Due to the symmetry of $k_1$ in the expression for $e^{2 \phi_1}$ given by (\ref{2.9}), we can consider $k_1 > 0$ without loss of generality, with $\phi_1 > 0$. The admissible values for $k_1$ and $c_1$ are defined from the inequalities:
\begin{equation}
\label{inequality}
\frac{c+k_1}{c-k_1} > 0 \quad \mbox{\rm and} \quad \frac{(c_1 - k_1) (c_1 + k_1 + 2 \sigma)}{(c_1 + k_1) (c_1 - k_1 + 2 \sigma)} > 1.
\end{equation}
There are four cases to be considered for both $\sigma = 1$ and $\sigma = -1$. \\

\underline{If $\sigma = +1$, we obtain:}
\begin{itemize}
	\item If $c_1 - k_1 > 0$, the second inequality in (\ref{inequality}) yields $k_1 < 0$, a contradiction. 
	
	\item If $c_1 - k_1 < 0$ and $c_1 + k_1 > 0$, then the first inequality in (\ref{inequality}) is contradictory.	
	
	\item If $c_1 - k_1 < 0$ and $c_1 + k_1 < 0$ but $c_1 - k_1 + 2 > 0$ and $c_1 + k_1 + 2 > 0$, then $k_1 \in (0,1)$ and $c_1 \in (-2+k_1,-k_1)$. The second inequality in (\ref{inequality}) yields $k_1 > 0$, which is true. This is the only case for the family of periodic traveling wave.
	
	\item If $c_1 - k_1 < 0$ and $c_1 + k_1 < 0$ but $c_1 - k_1 + 2 < 0$ and $c_1 + k_1 + 2 < 0$, the second inequality in (\ref{inequality}) yields $k_1 < 0$, a contradiction.  
\end{itemize}
\vspace{0.25cm}

\underline{If $\sigma = -1$, we obtain:}
\begin{itemize}
	\item If $c_1 > k_1 + 2$, both inequalities in (\ref{inequality}) are satisfied for $k_1 > 0$. 
	
	\item If $c_1 > k_1$ but $c_1 < 2 - k_1$, then $k_1 \in (0,1)$ but the second  inequality in (\ref{inequality}) yields $k_1 < 0$, a contradiction.
	
	\item If $c_1 < k_1$ but $c_1 > -k_1$, the first inequality in (\ref{inequality}) is contradictory.
	
	\item If $c_1 < -k_1$, both inequalities in (\ref{inequality}) are satisfied for $k_1 > 0$. 
\end{itemize} 

Thus, only one solution exists for $\sigma = +1$ with $k_1 \in (0,1)$ and $c_1 \in (-2+k_1,-k_1)$ and two solutions exist for $\sigma = -1$ with either $c_1 \in (k_1 + 2,\infty)$ or with $c_1 \in (-\infty,-k_1)$ for every 
$k_1 > 0$.
\end{proof}

\begin{remark}
	\label{rem-analytic}
	It follows from (\ref{2.10}) that $u$ and $\bar{u}$ are analytic in $\mathbb{C}_+$ and $\mathbb{C}_-$ respectively. This was not a requirement on solutions of the nonlocal model (\ref{INLS}). Nevertheless, the traveling periodic waves satisfy this property. 
\end{remark}

\begin{remark}
	The traveling periodic waves can be extended by using the translational and rotational symmetries (\ref{NLS-symm}) and the Lorentz transformation (\ref{NLS-Lorentz}), whereas the scaling transformation (\ref{NLS-scaling}) has been used to normalize the nonzero background to unity.
\end{remark}

To study properties of the traveling periodic waves of Proposition \ref{prop-1}, we note that the existence intervals for the wave speed $c_1$ are symmetric relative to $-\sigma$. In other words, replacing $c_1$ by $-2\sigma - c_1$  yields the same expression for $e^{2 \phi_1}$ in (\ref{2.9}) and hence for $|u(x,t)|^2$ in (\ref{2.11}). The existence intervals can be formulated symmetrically as 
\begin{equation}
\label{existence-1}
\sigma = +1 : \qquad c_1 + 1 \in (-1+k_1,1-k_1), \qquad k_1 \in (0,1)
\end{equation}
and 
\begin{equation}
\label{existence-2}
\sigma = -1 : \qquad c_1 - 1 \in (-\infty,-1-k_1) \cup (1+k_1,\infty), \qquad k_1 \in (0,\infty).
\end{equation}
Without loss of generality, we can consider travelng periodic waves for $c_1 + \sigma \geq 0$. 

\begin{figure}[htb!]
	\centering
	\includegraphics[width=7.5cm,height=7cm]{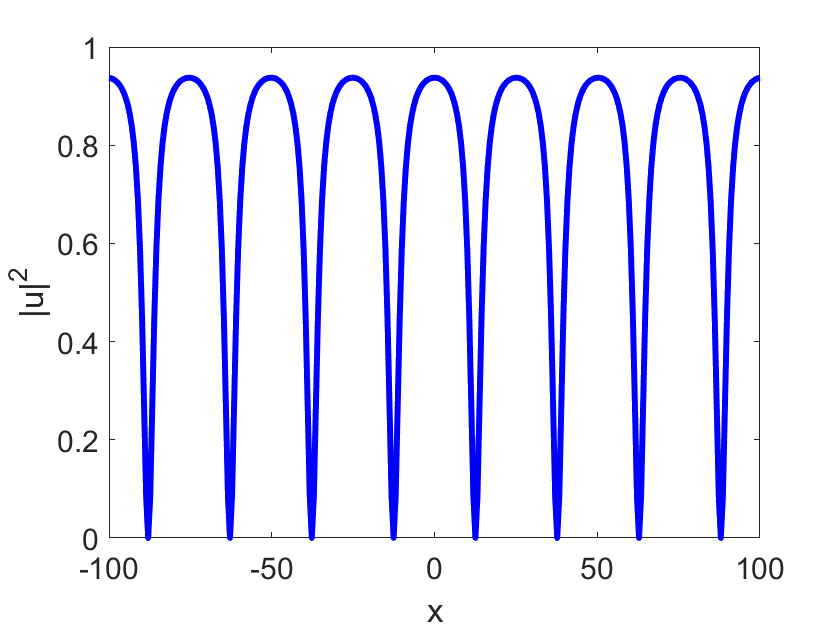}
	\includegraphics[width=7.5cm,height=7cm]{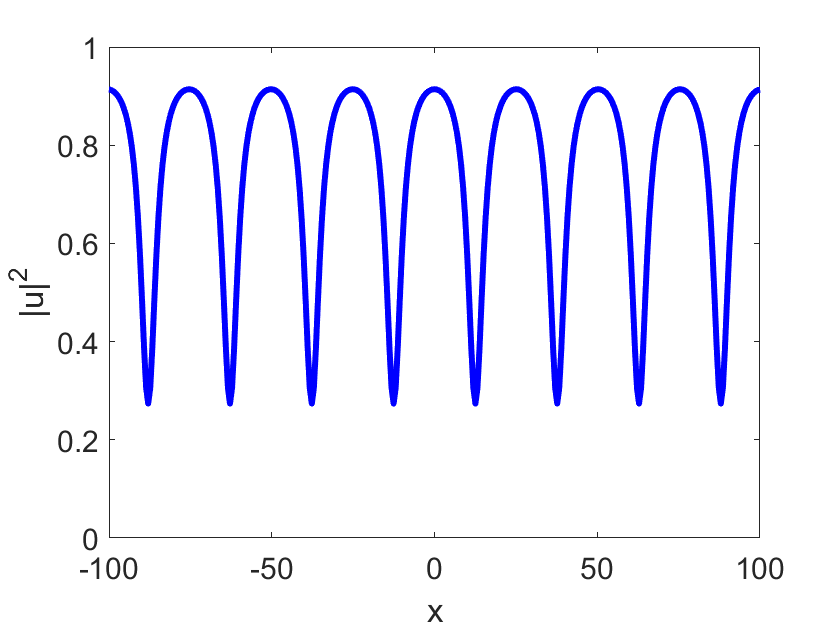}
	\caption{The profile of $|u|^2$ versus $x$ for $\sigma = +1$, $k_1 = 0.25$, and either $c_1 = -1$ (left) or $c_1 = -0.5$ (right).}
	\label{fig-1}
\end{figure}

\begin{figure}[htb!]
	\centering
	\includegraphics[width=7.5cm,height=7cm]{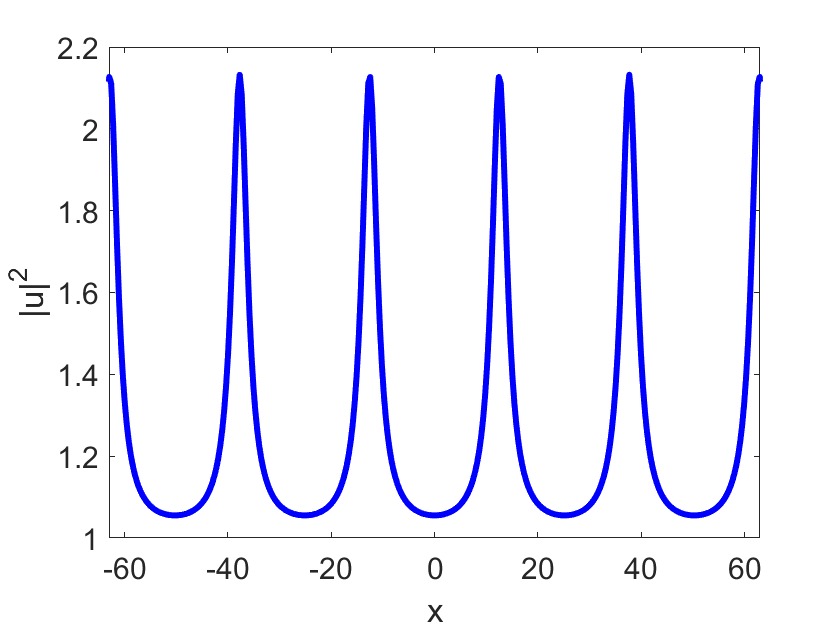}
	\includegraphics[width=7.5cm,height=7cm]{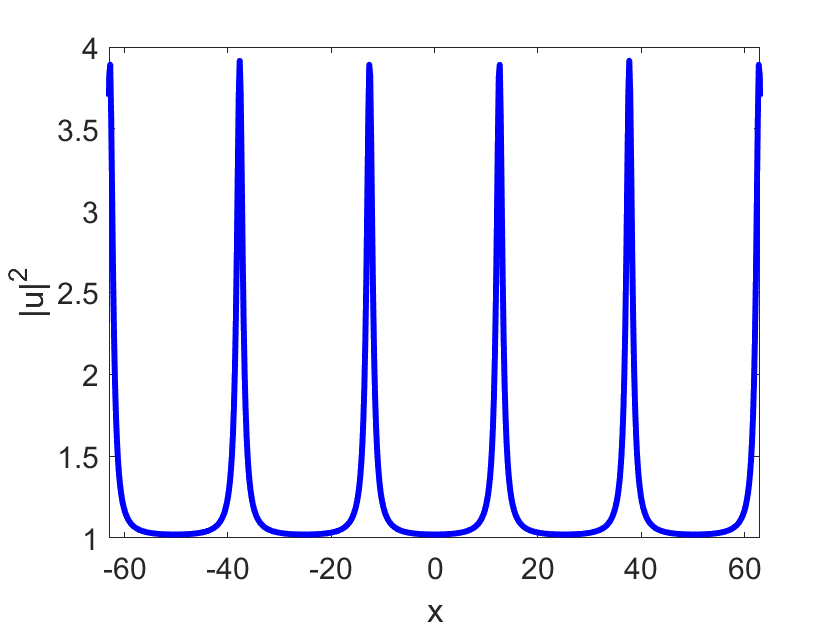}
	\caption{The profile of $|u|^2$ versus $x$ for $\sigma = -1$, $k_1 = 0.25$, and either $c_1 = 2 + 2k_1$ (left) or $c_1 = 2 + 4k_1$ (right).}
	\label{fig-1-focus}
\end{figure}

Figure \ref{fig-1} gives the spatial profile of $|u|^2$ given by (\ref{2.11}) for $\sigma = +1$, $k_1 = 0.25$, and two choices of $c_1$ in (\ref{existence-1}). The wave profiles are of the depression type and 
the profile with $c_1 = -1$ reaches the zero value (left panel). Figure \ref{fig-1-focus} shows the spatial profile of $|u|^2$ for $\sigma = -1$, $k_1 = 0.25$, and two choices of $c_1$ in (\ref{existence-2}). The wave profiles are of the elevation type with larger amplitudes for larger speeds.

The long-wave limit of the periodic wave appears as $k_1 \to 0$ where the periodic wave reduces to a solitary wave. As $k_1 \rightarrow 0$, it follows from \eqref{2.9} that $\phi_1 \rightarrow 0$ and $\psi_1 \rightarrow 0$ according to the power expansions
$$
\phi_1 = -\frac{2 \sigma k_1}{c_1 (c_1+2 \sigma)} + \mathcal{O}(k_1^2), \qquad 
\psi_1 = \frac{2(c_1+\sigma) k_1}{c_1 (c_1+2 \sigma)} + \mathcal{O}(k_1^2),
$$
For each family of Proposition \ref{prop-1}, we have $\phi_1 > 0$ if $k_1 > 0$.
We obtain from (\ref{2.10}) and (\ref{2.11}) after the transformation $x_1 \mapsto x_1  + \pi/k_1$ in the limit $k_1 \to 0$ that 
\begin{equation} \label{2.12} 
u(x,t) = 1 - \frac{2 \sigma (c_1+2 \sigma)}{2 + i \sigma c_1 (c_1 + 2 \sigma ) \xi_1}  
\end{equation}
and
\begin{equation} \label{2.13} 
|u(x,t)|^2 = 1 + \frac{4 c_1 (c_1+2 \sigma)}{c_1^2 (c_1 + 2 \sigma )^2 \xi^2_1 + 4}.  
\end{equation}

\underline{If $\sigma = +1$,} the existence interval for $c_1 \in (-2+k_1,-k_1)$ becomes 
$c_1 \in (-2,0)$ in the limit $k_1 \to 0$. Since $c_1 < 0$ and $c_1 + 2 > 0$, the algebraic soliton is the dark soliton on the nonzero constant  background with the smallest intensity attained at $\xi_1 = 0$:
$$
\min\limits_{(x,t) \in \R^2} |u(x,t)|^2 = (1+c_1)^2 < 1.
$$

\underline{If $\sigma = -1$,} the existence intervals for  $c_1 \in (k_1 + 2,\infty)$ and $c_1 \in (-\infty,-k_1)$ become $c_1 \in (2,\infty)$ and $c_1 \in (-\infty,0)$ in the limit $k_1 \to 0$. Since either $c_1 < 0$ or $c_1 > 2$, the algebraic soliton is the bright soliton on the nonzero constant background with the largest intensity attained at $\xi_1 = 0$: 
$$
\max\limits_{(x,t) \in \R^2} |u(x,t)|^2  = (1-c_1)^2 > 1.
$$

\subsection{Lax spectrum of the traveling periodic wave}
\label{sec-lax}

To obtain the exact solutions of the linear system (\ref{2.2}), we use the representation (\ref{2.3}) and introduce 
\begin{equation} 
\label{eig-bilinear}
p = \frac{\varphi}{f}, \qquad q^+ = \frac{h}{f}, \qquad 
q^- = \frac{\tilde{h}}{\tilde{f}},
\end{equation}
where $f$ and $\tilde{f}$ are given by (\ref{2.8}) and $\varphi$, $h$, and $\tilde{h}$ are to be found. By using (\ref{2.4}), (\ref{2.3}), and (\ref{eig-bilinear}), the linear system (\ref{2.2}) is reduced to the system of bilinear equations:
\begin{equation}
\left\{ 
\begin{array}{l}
(i D_x + \lambda) \varphi \cdot f + g \cdot h  = 0, \\
h \cdot \tilde{f} - \mu \tilde{h} \cdot f + \sigma \varphi \cdot \tilde{g} = 0, \\
(i D_t + \lambda^2) \varphi \cdot f + (i D_x + \lambda) h \cdot g = 0, \\  
(i D_t - 2 i \lambda D_x + D_x^2) h \cdot f = 0, \\   
(i D_t - 2 i \lambda D_x + D_x^2) \tilde{h} \cdot \tilde{f} = 0.
\end{array}  \right.
\label{1.19}
\end{equation} 

The following proposition identifies the Lax spectrum 
for the traveling periodic wave with the spatial profile (\ref{2.10}) and (\ref{2.11}) based on the exact solutions of the system (\ref{1.19}).

\begin{proposition}
	\label{prop-2} 
	Let $u$ be the traveling periodic wave in Proposition \ref{prop-1}. The Lax spectrum in Definition \ref{def-Lax} is located in 
	\begin{equation}
	\label{Lax-spectrum}
	\Sigma = [\lambda_0,\lambda_0+k_1] \cup [\sigma,\infty), \qquad \lambda_0 := -\frac{c_1+k_1}{2}, \quad \lambda_0+k_1 = -\frac{c_1 - k_1}{2}.
	\end{equation}
	\begin{itemize}
		\item If $\sigma = +1$, then $k_1 \in (0,1)$ and $c_1 \in (-2+k_1,-k_1)$ so that 
		$[\lambda_0,\lambda_0+k_1] \subset (0,1)$ is isolated from $[1,\infty)$.
		
		\item If $\sigma = -1$, then $k_1 \in (0,\infty)$ and either $c_1 \in (k_1+2,\infty)$ for which $[\lambda_0,\lambda_0+k_1]$ is isolated from $[-1,\infty)$ or $c_1 \in (-\infty,-k_1)$ for which $[\lambda_0,\lambda_0+k_1]$ is embedded into $[-1,\infty)$.
	\end{itemize}
\end{proposition}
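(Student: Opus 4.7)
The plan is to construct, for each $\lambda \in \Sigma$, an explicit bounded solution $(p,q^+,q^-)$ of the linear system (\ref{2.2}) satisfying the analyticity requirements of Definition \ref{def-Lax}, by working inside the bilinear ansatz (\ref{eig-bilinear}) and combining it with a Bloch--Floquet factor. Following the template used for the Benjamin--Ono equation in \cite{ChenPel24}, the two pieces of $\Sigma$ correspond to two independent types of solution of (\ref{2.2}): type I, in which $q^- \equiv 0$, contributes the continuous half-line $[\sigma,\infty)$; type II, in which both $q^+$ and $q^-$ are nonzero, contributes the finite band $[\lambda_0,\lambda_0+k_1]$.

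For type I, I would set $\tilde h \equiv 0$ in (\ref{1.19}). The second bilinear equation then reduces to $h\tilde f = -\sigma\varphi\tilde g$, which both determines $h$ from $\varphi$ and forces the following ansatz consistent with the structure of $f,\tilde f,g,\tilde g$ in (\ref{2.8}):
$$
\varphi = e^{i(\kappa x - \Omega t)}\bigl(1 + \alpha_1 e^{ik_1\xi_1 - \chi_1}\bigr), \qquad h = e^{i(\kappa x - \Omega t)}\bigl(\beta_0 + \beta_1 e^{ik_1\xi_1 - \chi_1}\bigr).
$$
Substituting into the remaining equations of (\ref{1.19}) and matching coefficients of $1$, $e^{ik_1\xi_1}$ and $e^{2ik_1\xi_1}$ gives algebraic relations for $\alpha_1,\beta_0,\beta_1,\chi_1,\Omega$ together with a quadratic dispersion relation between $\kappa$ and $\lambda$. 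Choosing $\chi_1$ so that the exponential zero of $h$ lies in $\mathbb C_-$ ensures that $h/f$ is analytic in $\mathbb C_+$, and imposing $\kappa \in \mathbb R$ cuts out exactly $\lambda \in [\sigma,\infty)$.

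For type II I would keep $\tilde h \not\equiv 0$ and use two-term ansatze for each of $\varphi,h,\tilde h$ parallel to the structure of $f,\tilde f$. The five bilinear equations of (\ref{1.19}) then couple the coefficients with the additional parameter $\mu$, producing a second dispersion relation that I expect to factor so that a Bloch wavenumber $\kappa$ ranges over $[0,k_1]$ precisely when $\lambda$ ranges over $[\lambda_0,\lambda_0+k_1]$ with $\lambda_0 = -(c_1+k_1)/2$. The endpoints $\kappa = 0$ and $\kappa = k_1$ are exactly the values at which one of the two exponential summands of $\tilde h$ drops out, so that the mean value of $q^- = \tilde h/\tilde f$ over one spatial period of $u$ vanishes; these are therefore the admissible values $\lambda_0,\lambda_1=\lambda_0+k_1$ of Definition \ref{def-Lax}, and the whole interval $[\lambda_0,\lambda_0+k_1]$ is appended to the Lax spectrum.

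The main obstacle is the algebraic bookkeeping: the five bilinear equations in (\ref{1.19}) with multiple unknown coefficients look overdetermined, and compatibility must be extracted by systematic use of the identities (\ref{2.9}) for $\phi_1$ and $\psi_1$. Once the dispersion relations are in hand, the location of $[\lambda_0,\lambda_0+k_1]$ relative to $[\sigma,\infty)$ is elementary interval arithmetic: for $\sigma=+1$ the constraint $c_1 \in (-2+k_1,-k_1)$ yields $\lambda_0 \in (0,1-k_1)$ and $\lambda_0+k_1 \in (k_1,1)$, placing the band in the gap $(0,1)$; for $\sigma=-1$, the choice $c_1 > k_1+2$ gives $\lambda_0+k_1 < -1$ so the band is isolated below $[-1,\infty)$, while the choice $c_1 < -k_1$ gives $\lambda_0 > 0$ so the band is embedded in $[-1,\infty)$.
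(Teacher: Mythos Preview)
Your overall architecture matches the paper: the case $q^-\equiv 0$ produces $[\sigma,\infty)$ and the case $q^-\neq 0$ produces the finite band, and your closing interval arithmetic is correct. For Type I the paper's route is cleaner than your coefficient-matching ansatz: since $h=-\sigma\varphi\tilde g/\tilde f$ must be analytic in $\mathbb C_+$ while $\tilde f$ has its zeros there, $\varphi$ is forced to be divisible by $\tilde f$, so one writes $\varphi=m\tilde f$, $h=-\sigma m\tilde g$; the remaining bilinear equations then collapse (using (\ref{2.5}) and (\ref{2.9})) to $im_x+(\lambda-\sigma)m=0$ and $im_t+(\lambda^2-1)m=0$. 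Boundedness of $q^+$ as ${\rm Im}(x)\to+\infty$ then requires $\lambda-\sigma\ge 0$, not merely ``$\kappa\in\mathbb R$'' as you write---one-sided boundedness needs a sign condition, not just reality.

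Your Type II mechanism has a genuine gap. You expect a Bloch wavenumber to sweep $[0,k_1]$ as $\lambda$ sweeps the band, but the two-sided analyticity constraint rules this out: any common prefactor $e^{i\theta\xi_1}$ on $h$ and $\tilde h$ (the second equation of (\ref{1.19}) forces the same $\theta$ on both) must have $\theta\le 0$ for $q^+$ to stay bounded as ${\rm Im}(x)\to+\infty$ and $\theta\ge 0$ for $q^-$ to stay bounded as ${\rm Im}(x)\to-\infty$, hence $\theta=0$ identically. There is no sweep; with $\theta=0$ one gets a single bounded periodic eigenfunction for every $\lambda$. The finite band is instead manufactured by Definition~\ref{def-Lax}: expanding $q^-=\tilde h/\tilde f$ as a geometric series in $e^{-ik_1\xi_1-\phi_1}$, its period-mean equals $(c_1+2\lambda+k_1)/(c_1+2\lambda-k_1)$, which vanishes at the \emph{single} point $\lambda_0=-(c_1+k_1)/2$; the width $k_1$ of the band is supplied by the definition itself, not by a second zero-mean point. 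Your claim that both endpoints are zero-mean values is incorrect (the coefficient is singular, not zero, at $\lambda_0+k_1$), and under Definition~\ref{def-Lax} two such values would produce $[\lambda_0,\lambda_0+2k_1]$ rather than $[\lambda_0,\lambda_0+k_1]$.
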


\begin{proof}
We proceed differently for $q^- \equiv 0$ and $q^- \neq 0$. \\

\underline{If $q^- \equiv 0$}, then $\tilde{h} \equiv 0$. The second equation of system (\ref{1.19}) implies that 
$$
h = - \frac{\sigma \varphi \tilde{g}}{\tilde{f}}.
$$
Since $q^+ = -\sigma \bar{u} p$ is analytic in $\mathbb{C}_+$, then $h$ is required to be analytic in $\mathbb{C}_+$. Since $\tilde{f}$ admits zeros in $\mathbb{C}_+$, then $\varphi$ must be divisible by $\tilde{f}$ so that 
\begin{equation} 
\label{1.14}
\varphi = m \tilde{f}, \qquad h = - \sigma m \tilde{g},
\end{equation}
with some $m = m(x,t)$ to be determined (required to be analytic in $\mathbb{C}_+$). 

From the first equation of 
system (\ref{1.19}) we find with the help of the third equaton of system (\ref{2.5}) that 
\begin{equation} 
\label{m-eq}
i m_x + (\lambda - \sigma) m = 0.
\end{equation}
From the third equation of system (\ref{1.19}), we obtain with the help of the third equaton of system (\ref{2.5}) and (\ref{m-eq}) that
$$
\left[ i m_t + (\lambda^2-1) m \right] \tilde{f} \cdot f + i m \left[ (D_t - \sigma D_x) \tilde{f}\cdot f - \sigma  D_x \tilde{g} \cdot g  \right] = 0, 
$$
From the fourth equation of system (\ref{1.19}), we obtain with the help of (\ref{m-eq}) that 
$$
\left[ i m_t + (\lambda^2 -1) m \right]  \tilde{g} \cdot f 
+ m (i D_t - 2 i \sigma D_x + D_x^2) \tilde{g}\cdot f = 0.
$$
By using the exact solution (\ref{2.9}) and (\ref{2.8}), we verify 
that 
\begin{align*}
(D_t - \sigma D_x) \tilde{f}\cdot f - \sigma D_x \tilde{g} \cdot g &= 0, \\
(i D_t - 2 i \sigma D_x + D_x^2) \tilde{g}\cdot f &= 0,
\end{align*}
which imply that 
\begin{equation}
\label{m-frak-eq}
i m_t + (\lambda^2 -1) m = 0.
\end{equation}
Solving (\ref{m-eq}) and (\ref{m-frak-eq}) yields 
$$
m(x,t) = e^{i(\lambda - \sigma) x + i(\lambda^2 - 1) t}
$$ 
with the constant of integration normalized to unity.
By using (\ref{eig-bilinear}) and (\ref{1.14}), we obtain the exact expression for the components $p$ and $q^+$ of the eigenfunctions with $q^- \equiv 0$:
\begin{equation}
\label{1.17}
p = e^{i(\lambda - \sigma)x + i (\lambda^2 - 1) t } \frac{1 + e^{i k_1 \xi_1 + \phi_1}}{1 + e^{i k_1 \xi_1 - \phi_1}}, \qquad q^+ = -\sigma \gamma_1^{-1} e^{i(\lambda - \sigma)x + i (\lambda^2 - 1) t } \frac{1 + e^{i k_1 \xi_1 + \psi_1}}{1 + e^{i k_1 \xi_1 - \phi_1}}.
\end{equation}
The component $q^+$ is analytic in $\mathbb{C}_+$ and bounded as ${\rm Im}(x) \to +\infty$ for every $t \in \R$ if and only if 
$\lambda \geq \sigma$. Hence, $[\sigma,\infty) \in \Sigma$ belongs to the Lax spectrum (\ref{Lax-spectrum}). \\

\underline{If $q^- \neq 0$}, then we obtain solutions for $h$ and $\tilde{h}$ by using the last two equations of system \eqref{1.19}. Given $f$ in (\ref{2.8}) we separate the variables in the form
\begin{equation*}
h = e^{i (\theta \xi_1 + \Omega t)} \left(1 + A e^{i k_1 \xi_1 - \phi_1}\right),
\end{equation*}
with some $\theta$, $\Omega$, and $A$ to be determined. The fourth equation in system \eqref{1.19} is satisfied if and only if 
\begin{equation*}
\Omega = \theta (c_1 + 2 \lambda - \theta) \quad \mbox{\rm and} \quad A  = \frac{c_1 + 2 \lambda  - 2 \theta +  k_1}{c_1 + 2 \lambda  - 2\theta -  k_1},
\end{equation*}
which yields the explicit solution 
\begin{equation} 
\label{1.24}
h = e^{i \theta (\xi_1 + (c_1 + 2 \lambda - \theta) t)} \left(1 + \frac{c_1 + 2 \lambda - 2 \theta +  k_1}{c_1 + 2 \lambda - 2 \theta -  k_1}  e^{i k_1 \xi_1 - \phi_1}\right).
\end{equation}
With similar computations from the fifth equation in system \eqref{1.19}, we obtain the explicit solution 
\begin{equation*} 
	\tilde{h} = e^{i \theta (\xi_1 + (c_1 + 2 \lambda - \theta) t)} \left(1 + \frac{c_1 + 2 \lambda  - 2 \theta +  k_1}{c_1 + 2 \lambda - 2 \theta -  k_1}  e^{i k_1 \xi_1 + \phi_1}\right),
\end{equation*}
where parameter $\theta \in \mathbb{R}$ has to be the same due to the coupling between $h$ and $\tilde{h}$ in the second equation of system (\ref{1.19}). Now $q^+$ and $q^-$ are analytic and bounded in $\mathbb{C}_+$ and $\mathbb{C}_-$ respectively if and only if $\theta = 0$. This yields the unique representation of the components $q^+$ and $q^-$ in the form 
\begin{equation}
\label{q-plus-expression}
q^+ = \frac{1}{1 + e^{i k_1 \xi_1 - \phi_1}} \left[ 1 + \frac{c_1 + 2 \lambda +  k_1}{c_1 + 2 \lambda -  k_1}  e^{i k_1 \xi_1 - \phi_1} \right]
\end{equation}
and 
\begin{equation}
\label{q-minus-expression}
q^- = \frac{1}{1 + e^{i k_1 \xi_1 + \phi_1}} \left[ 1 + \frac{c_1 + 2 \lambda +  k_1}{c_1 + 2 \lambda -  k_1}  e^{i k_1 \xi_1 + \phi_1} \right]. 
\end{equation}
It remains to find $p$ from the first three equations of system (\ref{1.19}). 
Given $f$ and $g$ in (\ref{2.8}) and $h$ in (\ref{1.24}) with $\theta = 0$, 
we separate the variables in the form:
\begin{equation}
\varphi = B  \left( 1 + C e^{i k_1 \xi_1 - \psi_1} \right),
\label{1.35}
\end{equation}
with some parameters $B$ and $C$ to be determined. 
The first equation of system (\ref{1.19}) is satisfied if and only if
$$
B = -\gamma_1 \lambda^{-1} \quad \mbox{\rm and} \quad C = \frac{c_1 + 2 \lambda +  k_1}{c_1 + 2 \lambda -  k_1}.
$$
The value of $\mu$ is obtained from the second equation in system \eqref{1.19}  which yields
\begin{align*}
\varphi = \sigma \tilde{g}^{-1} (\mu	\tilde{h} f - h \tilde{f})
\end{align*}
with 
\begin{align*}
\mu	\tilde{h} f - h \tilde{f} &= (\mu - 1) \left(1 + \frac{c_1 + 2 \lambda +  k_1}{c_1 + 2 \lambda -  k_1}  e^{2 i k_1 \xi_1} \right) \\
& \quad + e^{i k_1 \xi_1} \left[ e^{\phi_1} \left( \mu \frac{c_1 + 2 \lambda +  k_1}{c_1 + 2 \lambda -  k_1}  - 1 \right) + e^{-\phi_1} \left( \mu - \frac{c_1 + 2 \lambda +  k_1}{c_1 + 2 \lambda -  k_1}  \right) \right].
\end{align*}
Due to the exact solution (\ref{1.35}), $\mu \tilde{h} f - h \tilde{f}$ must be divisible by $\tilde{g}$ which is true if and only if 
$$
\mu = 1 - \sigma \lambda^{-1}.
$$  
With this restriction on parameter $\mu \in \mathbb{R}$ of the linear system (\ref{1.19}), we obtain the explicit expression for the component $p$ of the eigenfunction:
\begin{equation}
\label{p-expression}
p = - \frac{\gamma_1 \lambda^{-1} }{1 + e^{i k_1 \xi_1 - \phi_1}} \left[ 1 + \frac{c_1 + 2 \lambda +  k_1}{c_1 + 2 \lambda -  k_1}  e^{i k_1 \xi_1 - \psi_1} \right].
\end{equation}
The third equation in system (\ref{1.19}) can be rewritten with the help of the first equation in (\ref{1.19}) in the form 
$$
i (D_t - \lambda D_x) \varphi \cdot f + i D_x h \cdot g = 0.
$$
Using (\ref{2.9}), (\ref{2.8}), (\ref{1.24}) with $\theta = 0$, and (\ref{1.35}) we have verified that this equation is satisfied. Thus, (\ref{q-plus-expression}), (\ref{q-minus-expression}), and (\ref{p-expression}) give the exact solution of (\ref{2.2}) for $q^- \neq 0$. The components $q^{\pm}$ are analytic in $\mathbb{C}_{\pm}$ and bounded 
as ${\rm Im}(x) \to \pm \infty$ for every $t \in \mathbb{R}$. 

According to Definition \ref{def-Lax}, we check the mean value of $q^-$ to obtain the additional bands $\cup_{j=0}^N [\lambda_j, \lambda_j + k_1]$ of the Lax spectrum. Since $q^-$ is analytic in $\mathbb{C}_-$, we use the geometric series to represent $q^-$ in the form
\begin{equation*}
q^- =  \left[ \frac{c_1 + 2 \lambda +  k_1}{c_1 + 2 \lambda -  k_1} + e^{-i k_1 \xi_1 - \phi_1} \right] \sum_{\ell=0}^{\infty} (-1)^{\ell} e^{-i \ell k_1 \xi_1 - \ell \phi_1},
\end{equation*}
from which it follows that the mean value of $q^-$ is zero at only one point given by 
$$
c_1 + 2 \lambda_0 + k_1 = 0.
$$
This yields only one additional band $[\lambda_0,\lambda_0+k_1]$ in the Lax spectrum given by  (\ref{Lax-spectrum}). 

Finally, we compare the location of $[\lambda_0,\lambda_0+k_1]$ relative to $[\sigma,\infty)$.
\begin{itemize}
	\item If $\sigma = +1$, then $c_1 \in (-2+k_1,-k_1)$ so that $\lambda_0 > 0$ and $\lambda_0 + k_1 < 1$ and $[\lambda_0,\lambda_0+k_1] \in (0,1)$ is isolated from $[1,\infty)$. 

\item If $\sigma = -1$, then either $c_1 \in (k_1+2,\infty)$ so that $\lambda_0 + k_1 < -1$ and $[\lambda_0,\lambda_0+k_1]$ is isolated from $[-1,\infty)$ or $c_1 \in (-\infty,-k_1)$ so that $\lambda_0 > -1$ and  $[\lambda_0,\lambda_0+k_1]$ is embedded into $[-1,\infty)$.
\end{itemize}
This completes the proof of proposition.
\end{proof}

\begin{remark}
	\label{rem-Lax-spectrum}
	It follows from (\ref{1.17}) and (\ref{p-expression}) that $p$ is analytic and bounded in $\mathbb{C}_+$. This was not a requirement on solutions of the linear system (\ref{2.2}). Nevertheless, since the spatial profile $u$ in the traveling periodic wave is analytic in $\mathbb{C}_+$, see Remark 
	\ref{rem-analytic}, the Lax spectrum of the linear operators $\mathcal{L}_u$ and $\mathcal{L}_u |_{L^2_+}$, see (\ref{linear-operator}) and (\ref{linear-operator-restr}), are identical to each other and $p$ is also analytic in $\mathbb{C}_+$.
\end{remark}


\begin{remark}
	\label{rem-spectrum-soliton}
		The Lax spectrum of the algebraic soliton (\ref{2.12})--(\ref{2.13}) appears in the limit $k_1 \to 0$ of Proposition \ref{prop-2}. It consists of the spectral band $[\sigma,\infty)$ and a simple eigenvalue at $\lambda_0 = -\frac{c_1}{2}$. 
		\begin{itemize}
			\item If $\sigma = +1$, then $c_1 \in (-2,0)$ and $\lambda_0 \in (0,1)$ is isolated from the continuous spectrum $[1,\infty)$. 
			\item If $\sigma = -1$, then either $c_1 \in (2,\infty)$ and $\lambda_0 \in (-\infty,-1)$ is isolated from the continuous spectrum $[-1,\infty)$ or $c_1 \in (-\infty,0)$ and $\lambda_0 \in (0,\infty)$ is embedded into the continuous spectrum $[-1,\infty)$.
					\end{itemize}
\end{remark}

\subsection{Breathers on the traveling periodic wave}
\label{sec-breathers}

To obtain a solitary wave on the background of the traveling periodic wave (\ref{2.10})--(\ref{2.11}), we start with the $2$-periodic wave solution of the nonlocal model (\ref{INLS}) obtained in \cite{MatsunoPLA,MatsunoSAMP}. By using the representation (\ref{2.3}), we write the $2$-periodic wave solution in the form:
\begin{equation}
\label{2-periodic-wave}
\begin{cases}
 f &= 1 + e^{i k_1 \xi_1 - \phi_1-\frac{1}{2} A_{12}} + e^{i k_2 \xi_2 - \phi_2 -\frac{1}{2} A_{12}} + e^{i k_1 \xi_1 - \phi_1+ i k_2 \xi_2 - \phi_2}, \\
\tilde{f} & = 1 + e^{i k_1 \xi_1 + \phi_1-\frac{1}{2} A_{12}} + e^{i k_2 \xi_2 + \phi_2-\frac{1}{2} A_{12}} + e^{i k_1 \xi_1 + \phi_1+ i k_2 \xi_2 + \phi_2},	\\
g & = \gamma_{12} \left[ 1 + e^{i k_1 \xi_1 - \psi_1-\frac{1}{2} A_{12}} + e^{i k_2 \xi_2 - \psi_2-\frac{1}{2} A_{12}} + e^{i k_1 \xi_1 - \psi_1+ i k_2 \xi_2 - \psi_2} \right],	\\
\tilde{g} & = \gamma_{12}^{-1} \left[ 1 + e^{i k_1 \xi_1 + \psi_1-\frac{1}{2} A_{12}} + e^{i k_2 \xi_2 + \psi_2-\frac{1}{2} A_{12}} + e^{i k_1 \xi_1 + \psi_1+ i k_2 \xi_2 + \psi_2} \right], 
\end{cases}
\end{equation}
where $\xi_j = x - c_j t - x_{j}$ with arbitrary $x_j \in \mathbb{R}$, 
${\rm sgn}(\phi_j) = {\rm sgn}(k_j)$,  
\begin{align}
e^{2 \phi_j} = \frac{(c_j - k_j) (c_j + k_j + 2 \sigma)}{(c_j + k_j) (c_j - k_j + 2 \sigma)}, \quad e^{\psi_j} = \frac{c_j + k_j}{c_j - k_j} e^{\phi_j}, \quad j = 1,2,
\label{parameters-1}
\end{align}
\begin{align}
e^{-A_{12}} = \frac{(c_1 - c_2)^2 - (k_1 + k_2)^2}{(c_1 - c_2)^2 - (k_1 - k_2)^2},
\label{parameters-2}
\end{align}
and
\begin{align}
\gamma_{12} = e^{\frac12 (\psi_1 - \phi_1) + \frac12 (\psi_2 - \phi_2)}.
\label{parameters-3}
\end{align}
The parameters $k_{1,2}$ and $c_{1,2}$ must satisfy the same restrictions as in Proposition \ref{prop-1}:
\begin{itemize}
	\item If $\sigma = +1$, then $k_j \in (0,1)$ and $c_j \in (-2+k_j,-k_j)$, $j = 1,2$. 
	\item If $\sigma = -1$, then $k_j \in (0,\infty)$ and either $c_j \in (k_j+2,\infty)$ or $c_j \in (-\infty,-k_j)$, $j = 1,2$. 
\end{itemize}
In addition, the parameters must satisfy the restriction 
\begin{equation}
\label{par-restriction}
(c_1-c_2)^2 > (|k_1| + |k_2|)^2,
\end{equation}
which was proven in \cite[Lemma 1.1]{DK-91} for the BO equation. If the constraint (\ref{par-restriction}) is satisfied and ${\rm sgn}(k_j) = {\rm sgn}(\phi_j)$ for $j = 1,2$,  then the zeros of $f$ and $\tilde{f}$ are located in the lower and upper half-planes, respectively. This result of \cite{DK-91} holds for the nonlocal model (\ref{INLS}) because the functional representations of $f$ and $\tilde{f}$  in (\ref{2-periodic-wave})  is identical to that for the BO equation.

The following theorem gives the new breather solutions on the background of the traveling periodic wave. 

\begin{theorem}
	\label{prop-3}
	The nonlocal model (\ref{INLS}) admits breather solutions on the traveling periodic wave (\ref{2.10})--(\ref{2.11}). The solutions exist in the form (\ref{2.3}) with 
	\begin{equation}
	\label{expression-for-breathers}
	\begin{cases}
	f &= \left(1 - i \alpha_2 \xi_2 \right) \left( 1 + e ^{i k_1 \xi_1 - \phi_1} \right) + \alpha_2 \beta_{12}  \left(1 - e ^{i k_1 \xi_1 - \phi_1} \right),\\
	\tilde{f} &=   -\left(1 + i \alpha_2 \xi_2 \right)   \left( 1 + e ^{i k_1 \xi_1 + \phi_1} \right) + \alpha_2 \beta_{12} \left( 1 - e ^{i k_1 \xi_1 + \phi_1} \right), \\
	g  &= -\gamma_1  \left( 1 + \sigma c_2 + i \alpha_2 \xi_2 \right) \left( 1 + e ^{i k_1 \xi_1 - \psi_1} \right) + \gamma_1 \alpha_2 \beta_{12}  \left( 1 - e ^{i k_1 \xi_1 - \psi_1} \right), \\
	\tilde{g} &= \gamma_1^{-1} \left(1 + \sigma c_2 - i \alpha_2 \xi_2 \right) \left(1 + e ^{i k_1 \xi_1 + \psi_1} \right) + \gamma_1^{-1} \alpha_2 \beta_{12} \left( 1 - e ^{i k_1 \xi_1 + \psi_1} \right),
	\end{cases}
	\end{equation}
where
\begin{equation}
\label{beta-12}
\gamma_1 := \sqrt{\frac{c_1 + k_1}{c_1-k_1}} > 0, \;\; \alpha_2 := -\frac12 \sigma c_2 (c_2 + 2 \sigma) > 0, \;\; \beta_{12} := \frac{2 k_1}{(c_1 - c_2)^2 - k_1^2} > 0.
\end{equation}
Families of breather solutions are defined by the following intervals for admissible values of $c_2$:
\begin{itemize}
	\item If $\sigma = +1$, then $c_1 \in (-2+k_1,-k_1)$ and either 
	$c_2 \in (-2,c_1-k_1)$ or $c_2 \in (c_1+k_1,0)$. 
	\item If $\sigma = -1$, then 
	\begin{itemize}
		\item either $c_1 \in (k_1+2,\infty)$ with either $c_2 \in (c_1+k_1,\infty)$ or $c_2 \in (2,c_1-k_1)$ or $c_2 \in (-\infty,0)$ 
		\item or $c_1 \in (-\infty,-k_1)$ with either $c_2 \in (c_1+k_1,0)$ or $c_2 \in (-\infty,c_1-k_1)$ or $c_2 \in (2,\infty)$.
			\end{itemize}
\end{itemize}
\end{theorem}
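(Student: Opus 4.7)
\emph{Plan.} My strategy is to realize the breather as the long-wave degeneration $k_2 \to 0$ of the $2$-periodic wave solution (\ref{2-periodic-wave}), in exact analogy with the algebraic soliton limit (\ref{2.12}) obtained from Proposition \ref{prop-1}. Since (\ref{2-periodic-wave}) solves the bilinear system (\ref{2.5}) exactly and the bilinear equations are homogeneous in $(f,g)$ and in $(\tilde f,\tilde g)$, any analytic rescaling of the form $k_2^{-1}(f,\tilde f,g,\tilde g)$ still satisfies (\ref{2.5}), so its pointwise limit yields, via (\ref{2.3}), a genuine solution of the nonlocal model (\ref{INLS}).

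First, I perform the shift $x_2 \mapsto x_2 + \pi/k_2$ in $\xi_2 = x - c_2 t - x_2$, which sends $e^{i k_2 \xi_2} \mapsto -e^{i k_2 \xi_2}$ in every term of (\ref{2-periodic-wave}). I then Taylor-expand in $k_2$, using the asymptotics
\[
\phi_2 = \tfrac{k_2}{\alpha_2} + O(k_2^2), \quad \psi_2 = -\tfrac{(1+\sigma c_2)k_2}{\alpha_2} + O(k_2^2), \quad A_{12} = 2 k_2 \beta_{12} + O(k_2^2), \quad \gamma_{12} = \gamma_1 + O(k_2),
\]
that follow from (\ref{parameters-1})--(\ref{parameters-3}) with $\alpha_2, \beta_{12}, \gamma_1$ as in (\ref{beta-12}). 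Because of the sign flip from the shift, the $O(1)$ contributions to $f$ cancel exactly, and collecting the $O(k_2)$ terms yields
\[
\tfrac{\alpha_2}{k_2}\, f \;\longrightarrow\; (1 - i \alpha_2 \xi_2)(1 + e^{i k_1 \xi_1 - \phi_1}) + \alpha_2 \beta_{12}(1 - e^{i k_1 \xi_1 - \phi_1}),
\]
which matches (\ref{expression-for-breathers}); the analogous computation for $\tilde f, g, \tilde g$ reproduces the remaining three formulas. The common prefactor $\alpha_2/k_2$ on $(f,g)$ and on $(\tilde f,\tilde g)$ drops out of $u = g/f$ and $\bar u = \tilde g/\tilde f$.

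The admissible parameter ranges are obtained from three positivity conditions. The existence of the periodic background requires $c_1$ in the intervals specified by Proposition \ref{prop-1}; the condition $\alpha_2 > 0$ is equivalent to $\sigma c_2(c_2+2\sigma) < 0$, which is exactly the $k_2 \to 0$ remnant of the existence interval for $c_2$ in the $2$-periodic solution; and $\beta_{12} > 0$ is the long-wave remnant of the nonsingularity constraint (\ref{par-restriction}), amounting to $(c_1-c_2)^2 > k_1^2$. Intersecting these three conditions produces the two-interval union for $\sigma = +1$ and the three-interval union for $\sigma = -1$ stated in the theorem.

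I expect the main obstacle to be the careful bookkeeping to verify (i) that the $O(1)$ terms of $f, \tilde f, g, \tilde g$ truly cancel after the shift, (ii) that the surviving $O(k_2)$ terms factor into the specific form of (\ref{expression-for-breathers}), and (iii) that the zeros of the limiting $f$ and $\tilde f$ remain strictly in the lower and upper half-planes of $x$ respectively throughout the claimed parameter ranges, so that the representation (\ref{2.3}) is applicable. The last check parallels the corresponding argument for the BO breathers in \cite{ChenPel24} but must be redone case by case on each subinterval of the parameter set.
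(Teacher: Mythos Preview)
Your proposal is correct and follows essentially the same route as the paper: the $k_2\to 0$ degeneration of the $2$-periodic solution after the shift $x_2\mapsto x_2+\pi/k_2$, with the asymptotics of $\phi_2,\psi_2,A_{12},\gamma_{12}$ exactly as you write, and the parameter ranges obtained by intersecting $\alpha_2>0$, $\beta_{12}>0$ with the existence intervals of Proposition~\ref{prop-1}.

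One refinement: the half-plane check on the zeros of $f$ and $\tilde f$ does \emph{not} require a case-by-case analysis over the subintervals. Writing $f=0$ as
\[
e^{ik_1\xi_1-\phi_1}=\frac{\alpha_2\beta_{12}+1-i\alpha_2\xi_2}{\alpha_2\beta_{12}-1+i\alpha_2\xi_2},
\]
substituting $x=x_R+ix_J$ and taking absolute values yields a single inequality that is contradicted for $x_J\ge 0$ uniformly from $\alpha_2>0$, $\beta_{12}>0$, $\phi_1>0$, $k_1>0$. Be careful here: the argument in \cite{ChenPel24} you cite overlooked that a complex root contributes an imaginary part to \emph{both} $\xi_1$ and $\xi_2$; the modulus comparison must account for this (it still goes through).
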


\begin{proof}
We consider the $2$-periodic wave solution in the form (\ref{2-periodic-wave}) for $k_1 > 0$ and $k_2 > 0$ without loss of generality so that $\phi_1 > 0$ and $\phi_2 > 0$. By taking the long-wave limit $k_2\rightarrow 0$ in (\ref{parameters-1}) and (\ref{parameters-2}), we obtain the asymptotic expansions:
$$
\phi_2 = \frac{k_2}{\alpha_2} + \mathcal{O}(k_2^2), \quad \psi_2  = -\frac{(1 + \sigma c_2) k_2}{\alpha_2} + \mathcal{O}(k_2^2),
$$
and 
$$
e^{-\frac12 A_{12}} = 1 - \beta_{12} k_2 + \mathcal{O} ( k_2^2)
$$ 
with the corresponding expressions for $\alpha_2 > 0$ and $\beta_{12} > 0$ in (\ref{beta-12}). The expressions for $f$, $\tilde{f}$, $g$, and $\tilde{g}$ in (\ref{expression-for-breathers}) are obtained from (\ref{2-periodic-wave}) at the order of $\mathcal{O}(\phi_2)$ after the transformation $x_{2} \mapsto x_{2} + \pi / k_2$. The expression for $\gamma_1 > 0$ in (\ref{beta-12}) follows from the limit $k_2 \to 0$ of $\gamma_{12}$ in (\ref{parameters-3}).

Let us now analyze the constraints on $k_{1,2}$ and $c_{1,2}$ as well as the additional constraint (\ref{par-restriction}). As $k_2 \to 0$, it follows from (\ref{par-restriction}) that either $c_2 > c_1+k_1$ or $c_2 < c_1 - k_1$. 
\begin{itemize}
	\item If $\sigma = +1$, then we have $c_1 \in (-2+k_1,-k_1)$, and $c_2 \in (-2,0)$ with either $c_2 \in (-2,c_1-k_1)$ or $c_2 \in (c_1+k_1,0)$. 
	\item If $\sigma = -1$, then we have either $c_2 \in (2,\infty)$ or $c_2 \in (-\infty,0)$ with further intervals:
	\begin{itemize}
		\item If $c_1 \in (k_1+2,\infty)$, then either $c_2 \in (c_1+k_1,\infty) \subset (2,\infty)$ or $c_2 \in (2,c_1-k_1) \subset (2,\infty)$ or $c_2 \in (-\infty,0)$. 
		\item If $c_1 \in (-\infty,-k_1)$, then either $c_2 \in (c_1+k_1,0) \subset (-\infty,0)$ or $c_2 \in (-\infty,c_1-k_1) \subset (-\infty,0)$ or $c_2 \in (2,\infty)$.
			\end{itemize}
\end{itemize}

Finally, we show that the zeros of $f$ and $\tilde{f}$ given by (\ref{expression-for-breathers})  are located in the lower and upper half-planes, respectively, if $\alpha_2 > 0$ and $\beta_{12} > 0$. Without the loss of generality, we consider the zeros of $f$ given by 
$$
\left(1 - i \alpha_2 \xi_2 \right) \left( 1 + e ^{i k_1 \xi_1 - \phi_1} \right) + \alpha_2 \beta_{12}  \left(1 - e ^{i k_1 \xi_1 - \phi_1} \right) = 0,
$$
or equivalently, by 
$$
e ^{i k_1 \xi_1 - \phi_1} = \frac{\alpha_2 \beta_{12} + 1 - i \alpha_2 \xi_2}{\alpha_2 \beta_{12} - 1 + i \alpha_2 \xi_2}.
$$
Denote the root of this equation by $x = x_R + i x_J \in \mathbb{C}$ with $x_R = {\rm Re}(x)$ and $x_J = {\rm Im}(x)$. Then, $x_R$ and $x_J$ are obtained from 
$$
e ^{i k_1 \tilde{\xi}_1} e^{-k_1 x_J - \phi_1} = 
\frac{\alpha_2 \beta_{12} + 1 + \alpha_2 x_J - i \alpha_2 \tilde{\xi}_2}{\alpha_2 \beta_{12} - 1 - \alpha_2 x_J + i \alpha_2 \tilde{\xi}_2},
$$
where $\tilde{\xi}_j = x_R - c_j t - x_j$, $j= 1,2$ are real. Taking modulus in the equation yields
$$
e^{-k_1 x_J - \phi_1} = 
\sqrt{\frac{(\alpha_2 \beta_{12} + 1 + \alpha_2 x_J)^2 + \alpha_2^2 \tilde{\xi}^2_2}{(\alpha_2 \beta_{12} - 1 - \alpha_2 x_J)^2 + \alpha_2^2 \tilde{\xi}_2^2}}.
$$
If $x_J \geq 0$, this equation yields a contradiction since the left-hand side is less than $1$ and the right-hand side is larger than $1$, where we recall that $\alpha_2 > 0$, $\beta_{12}> 0$, and $\phi_1 > 0$. Hence, $x_J = {\rm Im}(x) < 0$ for every root of $f$. 
\end{proof}

\begin{remark}
	The last part of the proof of Theorem \ref{prop-3} is based on the proof in \cite[Section 2]{ChenPel24}, where it was overlooked that if $\xi_1$ has a nonzero imaginary part for the complex root of $f$, then $\xi_2$ also has a nonzero imaginary part. However, the same contradiction as in the proof of Theorem \ref{prop-3} can be obtained for the BO equation in \cite[Section 2]{ChenPel24}.
\end{remark}

Next we give several examples of the breather solutions.\\

\underline{If $\sigma = +1$,} the two intervals  $(c_1+k_1,0)$ and $(-2,c_1-k_1)$ for the wave speed $c_2$ in Theorem \ref{prop-3} are equivalent to 
\begin{equation}
\label{config-1}
		-\frac{c_2}{2} \in (0,\lambda_0) \quad \mbox{\rm and} \quad -\frac{c_2}{2} \in (\lambda_0 + k_1,1).
\end{equation}
The Lax spectrum of the corresponding breather solutions include an additional isolated eigenvalue $-\frac{c_2}{2} \in (0,1) \backslash [\lambda_0,\lambda_0+k_1]$ outside the Lax spectrum $\Sigma$ of the traveling periodic wave given in Proposition \ref{prop-2}, see also Remark \ref{rem-spectrum-soliton}. The two cases in (\ref{config-1}) are shown in Figure \ref{fig-Lax-1} for the particular parameters of Figures \ref{fig-2} and \ref{fig-3}.

\begin{figure}[htb!]
	\centering
	\includegraphics[width=7.5cm,height=4cm]{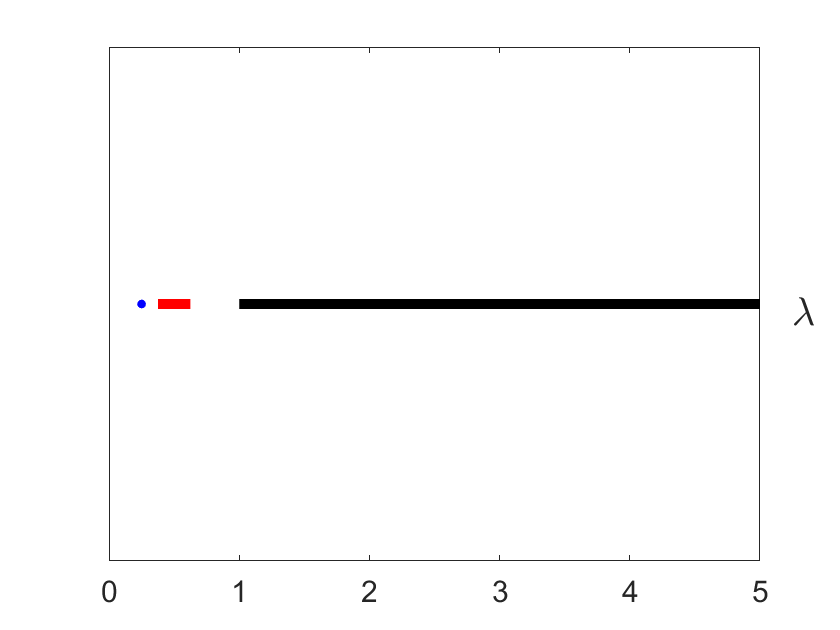}
	\includegraphics[width=7.5cm,height=4cm]{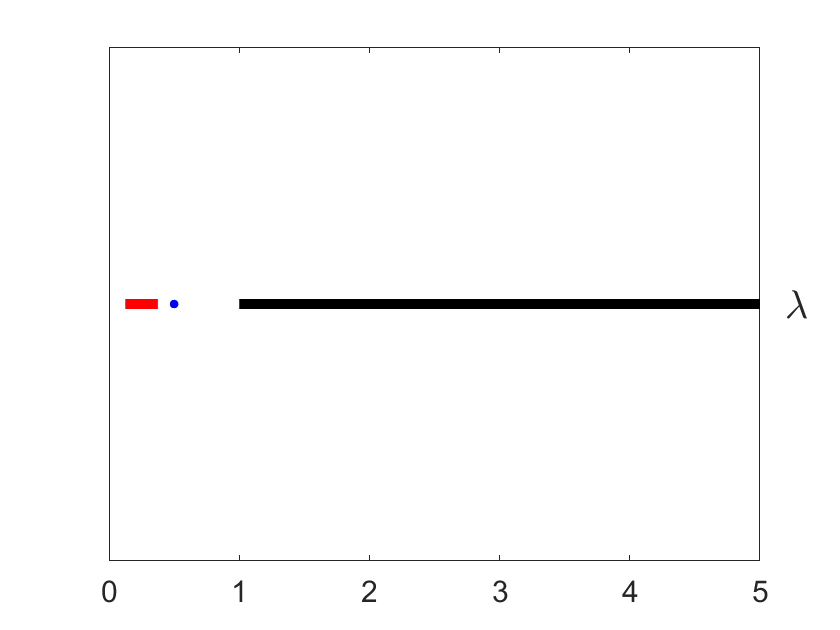}
	\caption{The Lax spectrum for the breather solutions of Figure \ref{fig-2} (left) and Figure \ref{fig-3} (right).}
	\label{fig-Lax-1}
\end{figure}

Figures \ref{fig-2} and \ref{fig-3} display the solution surfaces (side view on the left and top view on the right) for the breather solution of Theorem \ref{prop-3} for the two different choices in (\ref{existence-1}) and one choice in (\ref{config-1}). The solution surfaces are shown in the reference frame $x + t$ relative to the wave speed being equal to $-1$. In both cases, we can see that the breather solution represents the dark solitons over the traveling periodic wave.

Figure \ref{fig-2} is constructed for $k_1 = 0.25$, $c_1 = -1$, and 
$c_2 = -0.5 \in (c_1+k_1,0)$. The traveling periodic wave of Proposition \ref{prop-1} is stationary in the reference frame $x + t$, whereas the dark soliton in the breather solution propagates to the right direction relative to the periodic wave. Although the dark soliton impairs a phase shift, it is identically equal to the period of the traveling periodic wave. This property is in agreement with the fact that the limit $\xi_2 \to \pm \infty$ of the breather solutions of Theorem \ref{prop-3} yields the same traveling periodic wave of Proposition \ref{prop-1}. If $c_2 = -1.5 \in (-2,c_1-k_1)$, then the breather solution is very similar but the dark soliton moves slowly than the periodic wave (not shown).

Figure \ref{fig-3} is constructed for $k_1 = 0.25$, $c_1 = -0.5$, and 
$c_2 = -1 \in (-2,c_1-k_1)$. The dark soliton is stationary in the reference frame $x + t$, whereas the traveling periodic wave moves to the right direction relative to the dark soliton. We can see again that the phase shift of the breather is equal to the wave period. If $c_2 = -0.15 \in (c_1+k_1,0)$, the dark soliton moves faster than the periodic wave (not shown).\\

\begin{figure}[htb!]
	\centering
	\includegraphics[width=7.5cm,height=7cm]{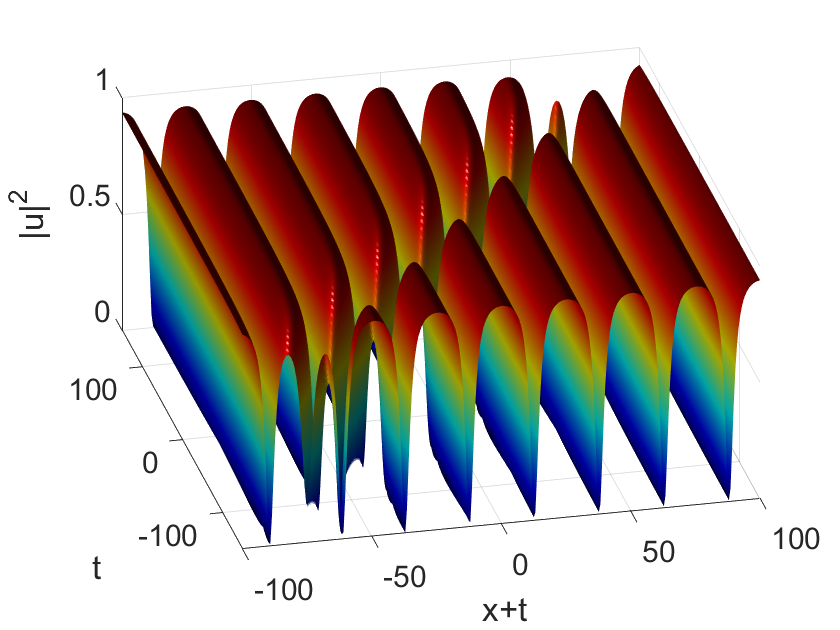}
	\includegraphics[width=7.5cm,height=7cm]{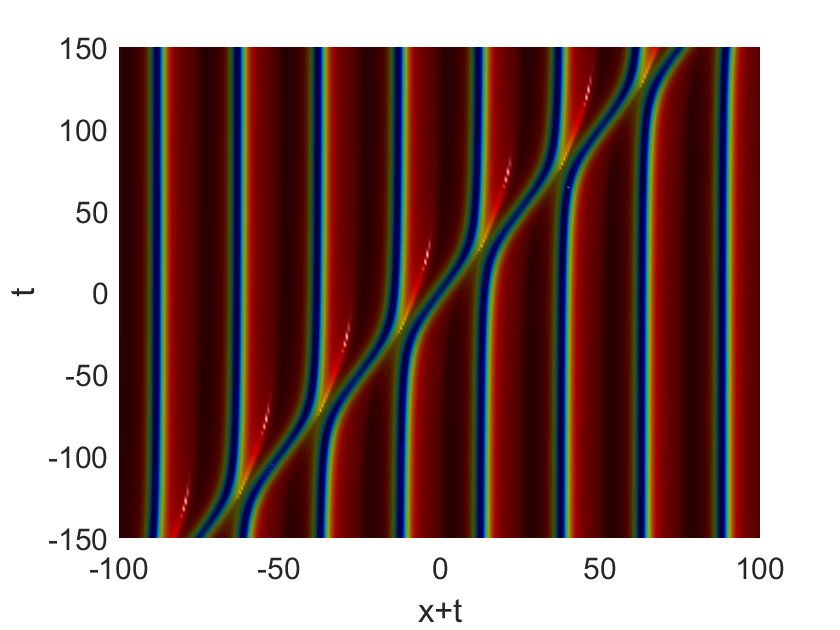}
	\caption{The solution surface of $|u|^2$ for the breather versus $(x+t,t)$ for $k_1 = 0.25$, $c_1 = -1$, and $c_2 = -0.5$.}
	\label{fig-2}
\end{figure}

\begin{figure}[htb!]
	\centering
	\includegraphics[width=7.5cm,height=7cm]{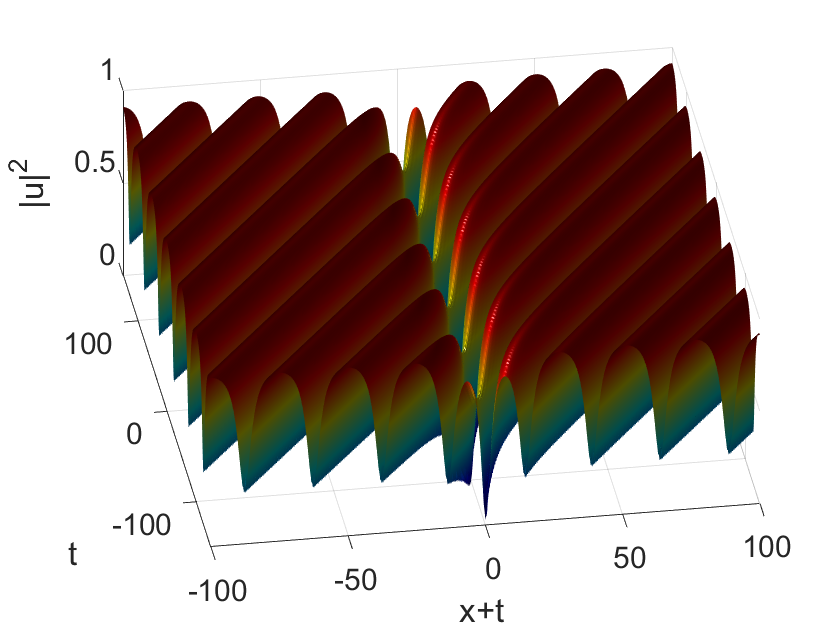}
	\includegraphics[width=7.5cm,height=7cm]{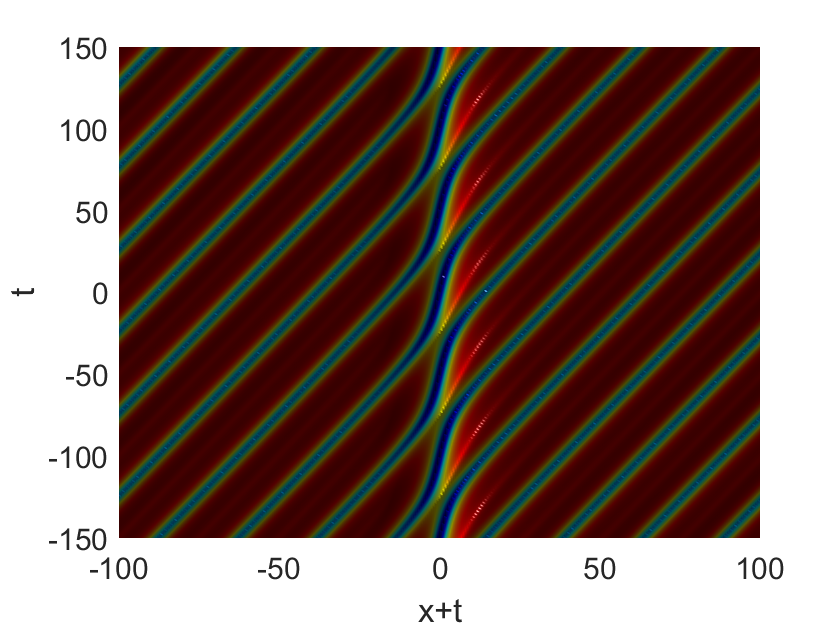}
	\caption{The solution surface of $|u|^2$ for the breather versus $(x+t,t)$ for $k_1 = 0.25$, $c_1 = -0.5$, and $c_2 = -1$.}
	\label{fig-3}
\end{figure}

\underline{If $\sigma = -1$ and $c_1 \in (k_1 + 2, \infty)$,} the three intervals $(c_1 + k_1, \infty)$, $(2, c_1 -  k_1)$, and $(-\infty,0)$ for the wave speed $c_2$ are equivalent respectively to
\begin{equation}
\label{config-2}
	- \frac{c_2}{2}  \in (- \infty, \lambda_0),\quad 
	- \frac{c_2}{2} \in ( \lambda_0 + k_1, -1), 
	 \quad {\rm and} \quad  -\frac{c_2}{2} \in (0,\infty).
\end{equation}
The Lax spectrum of the breather solutions includes an additional eigenvalue $- \frac{c_2}{2}$ relative to the Lax spectrum $\Sigma = [\lambda_0,\lambda_0+k_1] \cup [-1,\infty)$ of the traveling periodic wave. The eigenvalue $-\frac{c_2}{2}$ is isolated in the first two cases of (\ref{config-2}) and  embedded in the third case of (\ref{config-2}), see Proposition \ref{prop-2} and Remark \ref{rem-spectrum-soliton}. The first two cases in (\ref{config-2}) are shown in Figure \ref{fig-Lax-2} for the particular parameters of Figures \ref{fig-2-focus} and \ref{fig-3-focus}.

\begin{figure}[htb!]
	\centering
	\includegraphics[width=7.5cm,height=4cm]{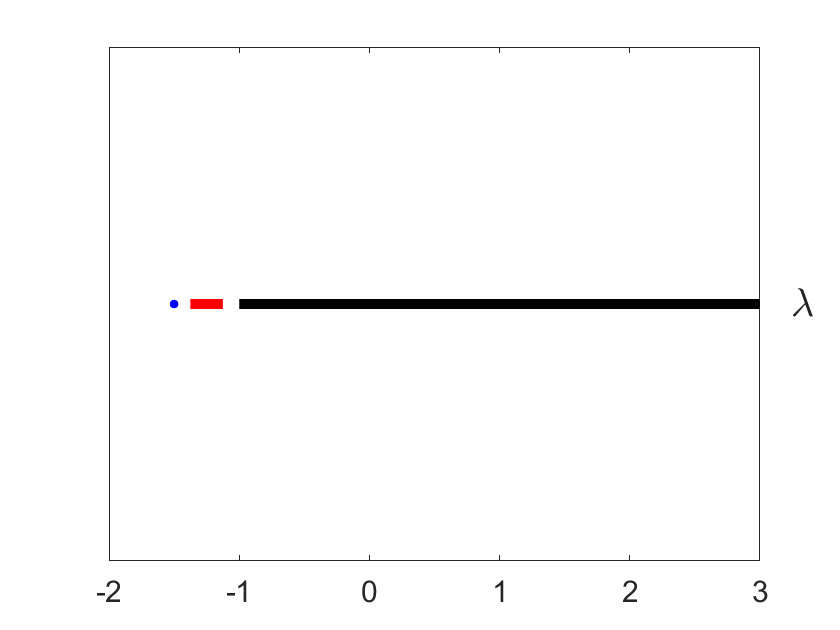}
	\includegraphics[width=7.5cm,height=4cm]{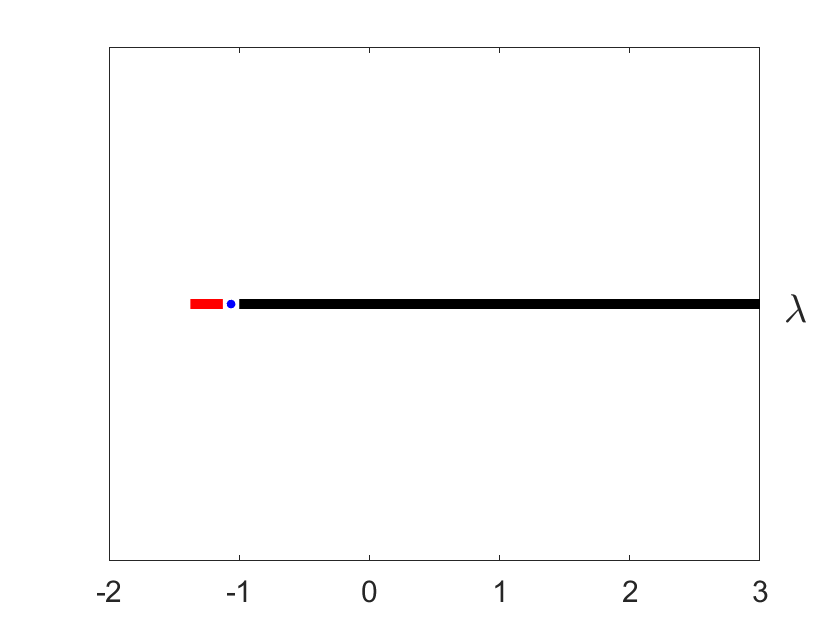}
	\caption{The Lax spectrum for the breather solutions of Figure \ref{fig-2-focus} (left) and Figure \ref{fig-3-focus} (right).}
	\label{fig-Lax-2}
\end{figure}

Figures \ref{fig-2-focus}, \ref{fig-3-focus},  and \ref{fig-4-focus} display the solution surfaces (side view on the left and top view on the right) for the breather solution of Theorem \ref{prop-3} with the three choices for $c_2$ in (\ref{config-2}). The solution surfaces are shown in the reference frame $x - c_1 t$, in which the periodic wave of Proposition \ref{prop-1} does not travel in time.

\begin{figure}[htb!]
	\centering
	\includegraphics[width=7.5cm,height=7cm]{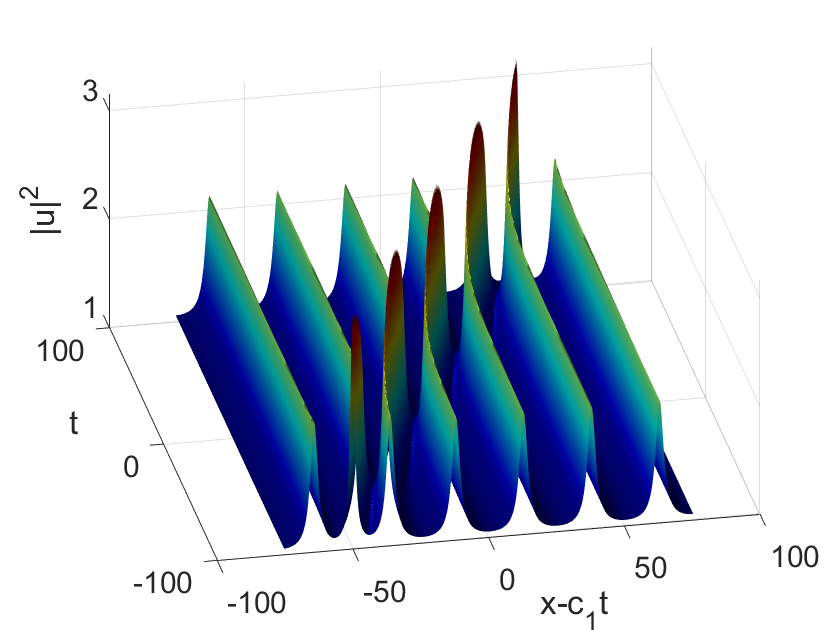}
	\includegraphics[width=7.5cm,height=7cm]{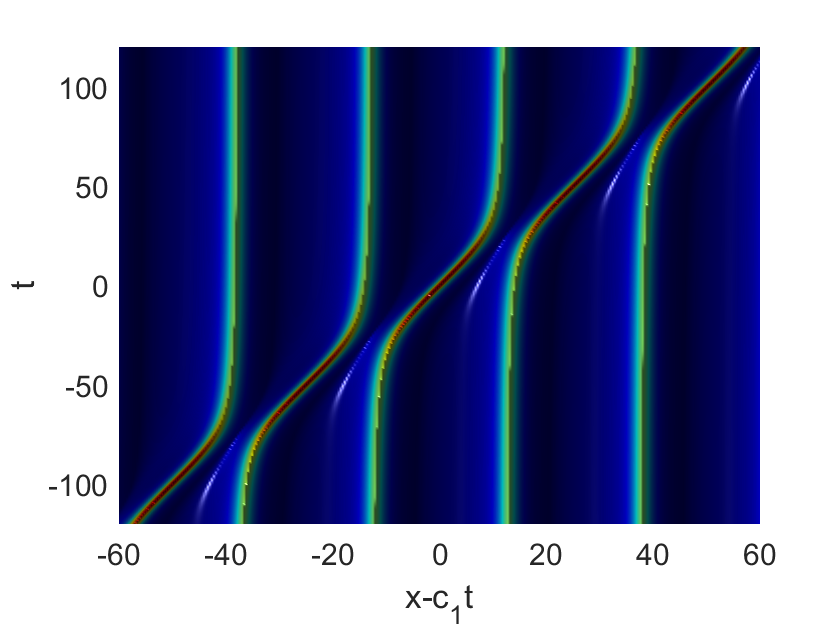}
	\caption{The solution surface of $|u|^2$ for the breather versus $(x-c_1 t,t)$ for $k_1 = 0.25$, $c_1 = 2+2k_1$, and $c_2 = c_1 + 2k_1$.}
	\label{fig-2-focus}
\end{figure}

\begin{figure}[htb!]
	\centering
	\includegraphics[width=7.5cm,height=7cm]{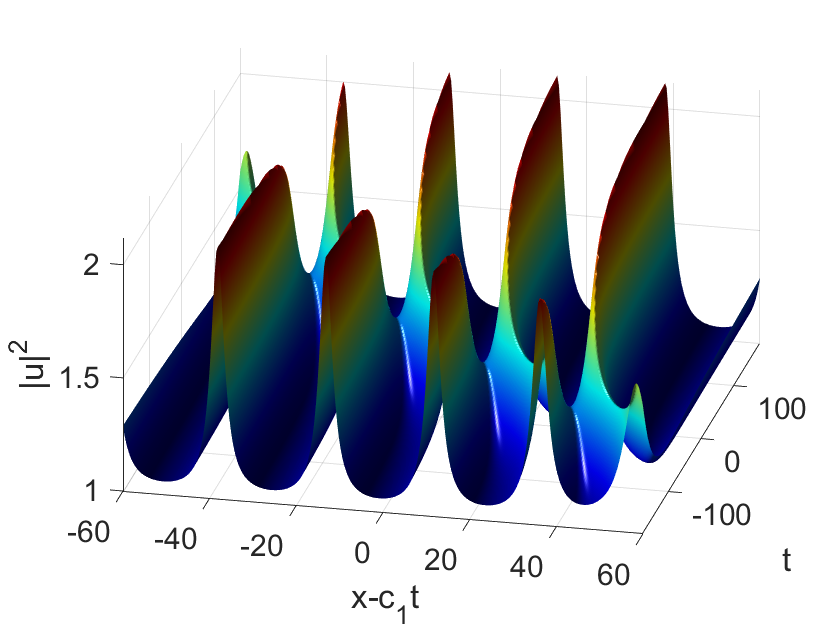}
	\includegraphics[width=7.5cm,height=7cm]{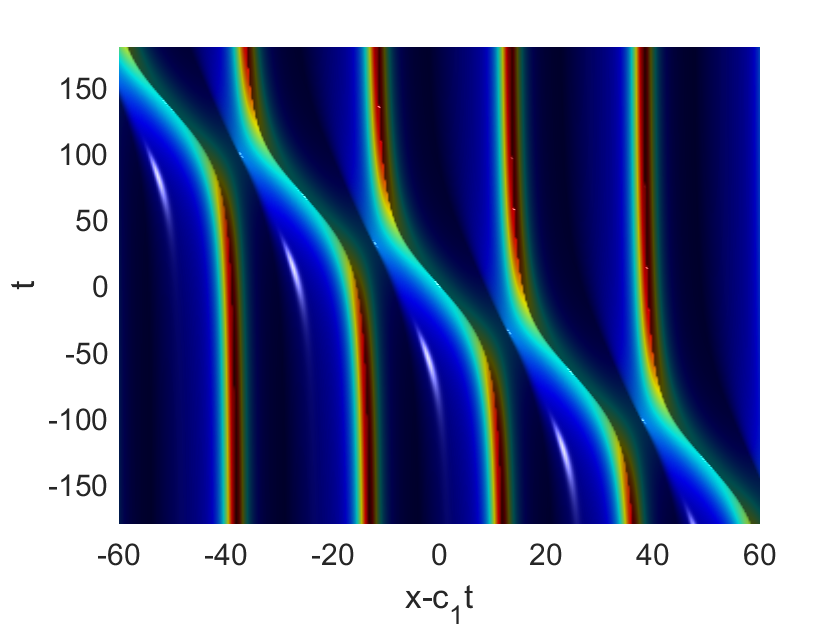}
	\caption{The solution surface of $|u|^2$ for the breather versus $(x-c_1 t,t)$ for $k_1 = 0.25$, $c_1 = 2+2k_1$, and $c_2 = c_1 - \frac{3}{2} k_1$.}
	\label{fig-3-focus}
\end{figure}

\begin{figure}[htb!]
	\centering
	\includegraphics[width=7.5cm,height=7cm]{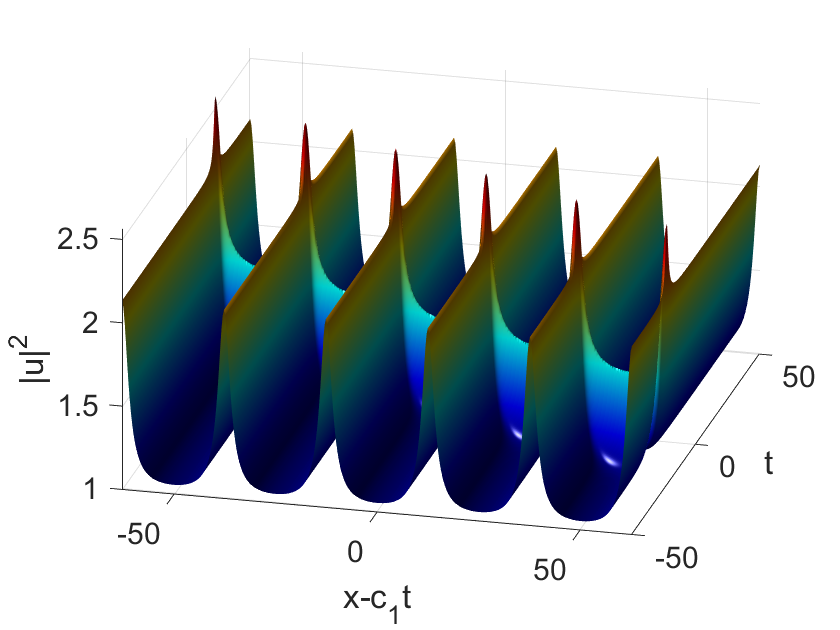}
	\includegraphics[width=7.5cm,height=7cm]{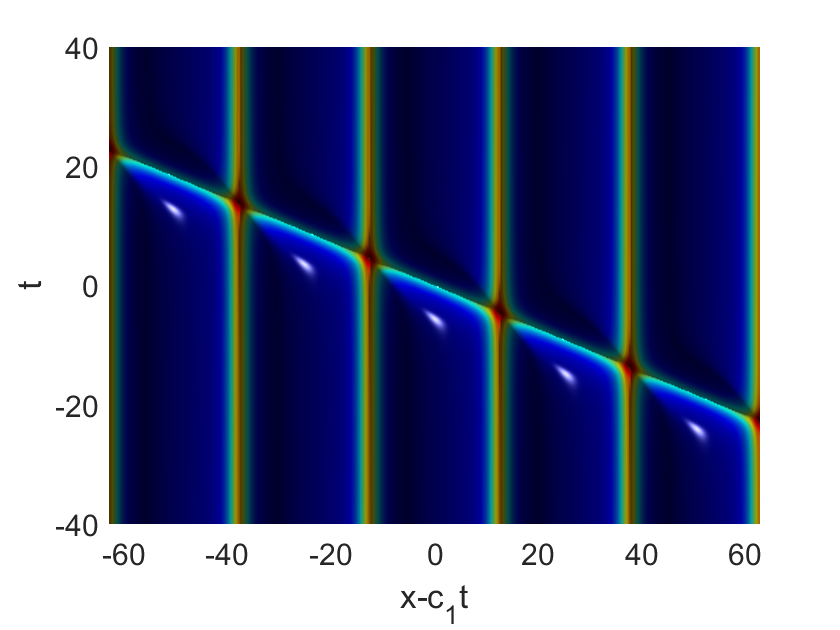}
	\caption{The solution surface of $|u|^2$ for the breather versus $(x-c_1 t,t)$ for $k_1 = 0.25$, $c_1 = 2+2k_1$, and $c_2 = -k_1$.}
	\label{fig-4-focus}
\end{figure}

Figure \ref{fig-2-focus} is constructed for $k_1 = 0.25$, $c_1 = 2 + 2k_1$, and 
$c_2 = c_1 + 2k_1 = 2 + 4k_1 \in (c_1+k_1,\infty)$. The breather solution has the bright soliton profile propagating to the right relative to the periodic wave. Figure \ref{fig-3-focus} is constructed for $k_1 = 0.25$, $c_1 = 2 + 2k_1$, and $c_2 = c_1 - \frac{3}{2} k_1 = 2 + \frac{1}{2} k_1 \in  (2,c_1-k_1)$. The breather solution has the dark soliton profile propagating to the left relative to the periodic wave. In both cases, we can clearly see that the phase shift of the breather is equal to the wave period of the traveling periodic wave. 

Figure \ref{fig-4-focus} is constructed for $k_1 = 0.25$, $c_1 = 2 + 2k_1$, and 
$c_2 = -k_1 \in (-\infty,0)$. The breather solution has the bright soliton profile propagatng to the left relative to the periodic wave. The breather doe not exhibit any phase shift.\\

\underline{If $\sigma = -1$ and $c_1 \in (- \infty, - k_1)$,} the periodic wave of Proposition \ref{prop-1} travels to the left symmetrically relative to the speed $1$, as follows from (\ref{existence-2}). The band $[\lambda_0,\lambda_0+k_1]$ of the Lax spectrum is now embedded into the continuous spectrum $[-1,\infty)$. The three intervals $(2,\infty)$, $(c_1 + k_1, 0)$, and $(-\infty, c_1 - k_1)$ for the wave speed $c_2$ are equivalent respectively to 
\begin{equation}
\label{config-3}
-\frac{c_2}{2} \in (-\infty,-1), \quad 
- \frac{c_2}{2} \in (0, \lambda_0), \quad {\rm and} \quad -\frac{c_2}{2} \in (\lambda_0 + k_1, \infty).
\end{equation}
The Lax spectrum of the breather solution includes and additional eigenvalue 
$-\frac{c_2}{2}$ relative to the Lax spectrum $\Sigma = [-1,\infty)$ of the traveling periodic wave. 
The eigenvalue is isolated in the first case of (\ref{config-3}) and embedded 
in the other two cases of (\ref{config-3}), see Proposition \ref{prop-2} 
and Remark \ref{rem-spectrum-soliton}. In both cases of the embedded eigenvalue $-\frac{c_2}{2}$, it is located outside the spectral band $[\lambda_0, \lambda_0 + k_1] \subset [-1, \infty)$. In spite of these differences between isolated and embedded eigenvalues, the breather solutions for all three cases in (\ref{config-3})  are very similar to the three cases in (\ref{config-2}) if they are plotted in the reference frame $x -c_1 t$ and $t$. The breather solution with $c_2 \in (2,\infty)$ has the bright soliton profile propagating to the right relative to the periodic wave (not shown). The breather solution with $c_2 \in (c_1+k_1,0)$ has the dark soliton profile propagating to the right relative to the periodic wave (not shown). The breather solution with  $c_2 \in (-\infty,c_1-k_1)$ has the bright soliton profile propagating to the left relative to the periodic wave (not shown). The breather solutions 
for (\ref{config-3}) can be obtained by reflections of the relevant solutions for (\ref{config-2}) due to the symmetry between cases $c_1 \in (-\infty,-k_1)$ and $c_1 \in (2+k_1,\infty)$ in (\ref{existence-2}).

\subsection{$N$-breathers on the traveling periodic wave}
\label{sec-n-breathers}

We obtain a general breather solution of the nonlocal model \eqref{INLS} for $N$ solitary waves propagating on the background of the traveling periodic wave (\ref{2.10})--(\ref{2.11}). To do so, we use the explicit formulas for the $(N+1)$-periodic wave solution obtained in \cite{MatsunoPLA}. The solution is written in the bilinear form (\ref{2.3}) with the following functions:
\begin{equation}
\label{2.30}
	f = \det F, \qquad 	\tilde{f} = \det \tilde{F}, \qquad g =  \gamma \det G, \qquad \tilde{g} = \gamma^{-1} \det \tilde{G},
\end{equation}
where matrices $F = (f_{jl})_{1 \leq j,l \leq N+1}$, $\tilde{F} = (\tilde{f}_{jl})_{1 \leq j,l \leq N+1}$, 
$G = (g_{jl})_{1 \leq j,l \leq N+1}$, and $\tilde{G} = (\tilde{g}_{jl})_{1 \leq j,l \leq N+1}$ are given by 
\begin{align*}
 f_{jl} &= \frac{1}{k_j} \exp{\left(i k_j \xi_j - \phi_j + \frac12 \sum_{s=1, s\neq j}^{N+1}
		A_{js}	\right) \delta_{jl}} + \frac{2}{c_j - c_l +k_j + k_l},  \\
\tilde{f}_{jl} &= \frac{1}{k_j} \exp{\left(i k_j \xi_j + \phi_j + \frac12 \sum_{s=1, s\neq j}^{N+1}
		A_{js}	\right) \delta_{jl}} + \frac{2}{c_j - c_l +k_j + k_l},  \\
g_{jl} &= \frac{1}{k_j} \exp{\left(i k_j \xi_j - \psi_j + \frac12 \sum_{s=1, s\neq j}^{N+1}
		A_{js}	\right) \delta_{jl}} + \frac{2}{c_j - c_l +k_j + k_l}, \\
\tilde{g}_{jl} &= \frac{1}{k_j} \exp{\left(i k_j \xi_j + \psi_j + \frac12 \sum_{s=1, s\neq j}^{N+1}
		A_{js}	\right) \delta_{jl}} + \frac{2}{c_j - c_l +k_j + k_l}, 
\end{align*}
with the parameters given by $\xi_j =x - c_j t - x_{j}$ with arbitrary $x_j \in  \mathbb{R}$, ${\rm sgn}(\phi_j) = {\rm sgn}(k_j)$,
\begin{align}
	e^{2 \phi_j}  = \frac{(c_j - k_j) (c_j + k_j +2 \sigma)}{(c_j + k_j) (c_j - k_j +2 \sigma)}, \quad e^{\psi_j} = \frac{c_j + k_j}{c_j - k_j} e^{\phi_j}, \quad 1 \leq j \leq N+1,
	\label{2.28} 
	\end{align}
	\begin{align}
e^{-A_{ij}}  = \frac{(c_i- c_j)^2 - (k_i + k_j)^2}{(c_i - c_j)^2 - (k_i - k_j)^2}, \quad 1 \leq i\neq j \leq N+1, 
\label{2.29}
\end{align}
and 
\begin{align}
\label{gamma-final}
	\gamma = \exp \left(\frac12 \sum_{j=1}^{N+1}(\psi_j - \phi_j)\right). 
\end{align}
We take the limit $k_j \to 0$ for $2 \leq j \leq N+1$. It follows from \eqref{2.28} and \eqref{2.29} that 
\begin{equation*} 
\phi_j = \frac{k_j}{\alpha_j} + \mathcal{O}(k_j^2), \qquad \psi_j  = -\frac{(1 + \sigma c_j) k_j}{\alpha_j} + \mathcal{O}(k_j^2),\qquad 2 \leq j \leq N+1,
\end{equation*}
and
\begin{align} 
\label{c2.40} 
e^{-\frac12 A_{1j}} &= 1 - \beta_{1j} k_j + \mathcal{O}(k_j^2), \quad 2 \leq j \leq N+1, \\
\label{c2.41}
e^{-\frac12 A_{ij}} &= 1 - \frac{2 k_i k_j}{(c_i - c_j)^2} + \mathcal{O}(k_i^2 k_j^2), \quad 2 \leq i, j \leq N+1,
\end{align}
where 
\begin{align*}
\alpha_j = -\frac{\sigma c_j (c_j + 2 \sigma)}{2}, \quad 
\beta_{1j} = \frac{2 k_1}{(c_1 - c_j)^2 - k_1^2}, \quad 2 \leq j \leq N+1.
\end{align*}
We multiply the first row of matrices $F$, $\tilde{F}$, $G$, and $\tilde{G}$ by 
$k_1$ and the remaining rows of these matrices by $\alpha_2,\cdots,\alpha_{N+1}$. The solution $u$ given by (\ref{2.3}) and (\ref{2.30}) is not affected by this transformation. Denoting 
the resulting matrices as $\hat{F}$, $\hat{\tilde{F}}$, $\hat{G}$, and $\hat{\tilde{F}}$, and taking the limit  as $k_j \rightarrow 0$ for $2 \leq j \leq N+1$, we obtain 
\begin{equation}
	\label{2.35}
	\hat{F} = \left( \begin{array}{cccc}
		1+ e^{ i k_1 \xi_1 - \phi_1} & \frac{2 k_1}{c_1 - c_2 + k_1}& \cdots & \frac{2 k_1}{c_1 - c_{N+1} + k_1}\\
		\frac{2\alpha_2}{c_2 - c_1 + k_1} & 	\hat{F}_{2 2}  & \cdots &  \frac{2\alpha_2}{c_2 - c_{N+1}}\\
		\vdots & \vdots & \ddots &\vdots\\
		\frac{2\alpha_{N+1}}{c_{N+1} -c_1 +k_1} &  \frac{2\alpha_{N+1}}{c_{N+1} - c_2}& \cdots & 	\hat{F}_{N+1 N+1}  \\
	\end{array}	\right),
\end{equation}

\begin{equation}
	\label{2.35-10}
	\hat{\tilde{F}} = \left( \begin{array}{cccc}
		1+ e^{ i k_1 \xi_1 + \phi_1} & \frac{2 k_1}{c_1 - c_2 + k_1}& \cdots & \frac{2 k_1}{c_1 - c_{N+1} + k_1}\\
		\frac{2\alpha_2}{c_2 - c_1 + k_1} & 	\hat{\tilde{F}}_{2 2}  & \cdots & \frac{2\alpha_2}{c_2 - c_{N+1}}\\
		\vdots & \vdots & \ddots &\vdots\\
		\frac{2\alpha_{N+1}}{c_{N+1} -c_1 +k_1} & \frac{2\alpha_{N+1}}{c_{N+1} - c_2}& \cdots & 	\hat{\tilde{F}}_{N+1 N+1}  \\
	\end{array}	\right),
\end{equation}

\begin{equation}
	\label{2.36}
	\hat{G} = \left( \begin{array}{cccc}
		1+ e^{ i k_1 \xi_1 - \psi_1} & \frac{2 k_1}{c_1 - c_2 + k_1}& \cdots & \frac{2 k_1}{c_1 - c_{N+1} + k_1}\\
		\frac{2\alpha_2}{c_2 - c_1 + k_1} &  \hat{G}_{2 2} & \cdots &  \frac{2\alpha_2}{c_2 - c_{N+1}}\\
		\vdots & \vdots & \ddots &\vdots\\
		\frac{2\alpha_{N+1}}{c_{N+1}-c_1  +k_1} &  \frac{2\alpha_{N+1}}{c_{N+1} - c_2}& \cdots & \hat{G}_{N+1 N+1}\\
	\end{array}	\right),
\end{equation}
and
\begin{equation}
	\label{2.36-10}
	 \hat{\tilde{G}} = \left( \begin{array}{cccc}
		1+ e^{ i k_1 \xi_1 + \psi_1} & \frac{2 k_1}{c_1 - c_2 + k_1}& \cdots & \frac{2 k_1}{c_1 - c_{N+1} + k_1}\\
		\frac{2\alpha_2}{c_2 - c_1 + k_1} &  \hat{\tilde{G}}_{2 2} & \cdots &  \frac{2\alpha_2}{c_2 - c_{N+1}}\\
		\vdots & \vdots & \ddots &\vdots\\
		\frac{2\alpha_{N+1}}{c_{N+1}-c_1  +k_1} & \frac{2\alpha_{N+1}}{c_{N+1} - c_2}& \cdots & \hat{\tilde{G}}_{N+1 N+1}\\
	\end{array}	\right),
\end{equation}
where
\begin{align*}
\hat{F}_{jj} &= - \alpha_j (i \xi_j + \beta_{1j}) + 1,\\
\hat{\tilde{F}}_{jj} &= - \alpha_j (i \xi_j + \beta_{1j}) - 1,\\
\hat{G}_{jj} &= - \alpha_j (i \xi_j + \beta_{1j}) - \sigma c_j - 1, \\
 \hat{\tilde{G}} _{jj} &= -\alpha_j (i \xi_j + \beta_{1j}) + \sigma c_j + 1,
\end{align*}
for $2 \leq j \leq N+1$. The general $N$-breather solution is obtained in the determinant form:
\begin{equation}
	\label{2.38}
u =  e^{\frac12 (\phi_1 - \psi_1)} \frac{ \det \hat{G}}{\det \hat{F}}, \qquad 	|u|^2 = 1 + \sigma  k_1 - i \sigma  \frac{\partial}{\partial x} \ln \frac{\det \hat{F}}{\det \bar{\hat{F}}},
\end{equation}
where $\hat{F}$ and $\hat{G}$ are given by \eqref{2.35} and \eqref{2.36}, with 
$$
\gamma_1 = e^{\frac12 (\phi_1 - \psi_1)} = \sqrt{\frac{c_1+k_1}{c_1-k_1}}
$$
obtained from (\ref{gamma-final}). Determinants $\hat{\tilde{F}}$ and  $\hat{\tilde{G}}$ obtained from \eqref{2.35-10} and \eqref{2.36-10} are related to the complex-cojugate versions of determinants $\hat{F}$ and $\hat{G}$. If $N=1$, then the solution (\ref{2.38}) recovers the breather solution of Theorem \ref{prop-3}.

\begin{remark}
	\label{rem-bounded}
	The general $N$-breather solution (\ref{2.38}) is bounded for every $(x,t) \in \R \times \R$. Indeed, it is obtained from the $(N+1)$-periodic wave solution for which zeros of $f$ and $\tilde{f}$ for every $k_1,k_2,\dots,k_{N+1} \neq 0$ are located in the lower and upper halves of the complex plane of $x$, respectively, due to \cite[Lemma 1.1]{DK-91}. As $k_2,\dots,k_{N+1} \to 0$, zeros of $f$ and $\tilde{f}$ can only approach the real line but cannot cross the real line. Since zeros of $f$ and $\tilde{f}$ on the real line coincide up to their multiplicities, their quotient is bounded for every $x \in \mathbb{R}$ at every $t \in \R$.
\end{remark}

\section{Traveling periodic waves and breathers on the zero background}
\label{sec-5}

Here we follow the structure of Section \ref{sec-4} but consider the traveling periodic waves at the zero background. The solutions of the nonlocal model (\ref{INLS}) on the zero background are only meaningful in the case of $\sigma = -1$, hence we take $\sigma = -1$ in what follows, see Remark \ref{rem-defoc}.

\subsection{Traveling periodic waves}

Assume that $f$ and $\tilde{f}$ have only zeros in the lower and upper half of the complex plane of $x$, respectively. Substitution
\begin{equation}
\label{3.3}
u = \frac{g}{f}, \quad \bar{u} = \frac{\tilde{g}}{\tilde{f}}, \quad |u|^2 = i \frac{\partial}{\partial x} \ln \frac{f}{\tilde{f}}.
\end{equation}
transforms the nonlocal model \eqref{INLS} with $\sigma = -1$ into the following system of bilinear equations:
\begin{equation}
\label{3.5}
\left\{ 
\begin{array}{l}
(i D_t + D_x^2) f \cdot g = 0, \\
(-i D_t + D_x^2) \tilde{f} \cdot \tilde{g} = 0, \\
i D_x f \cdot \tilde{f} - g \cdot \tilde{g} = 0.   
\end{array}  \right.
\end{equation}
The following proposition states the existence of the traveling periodic waves.

\begin{proposition}
	\label{prop-foc-1}
	The nonlocal model (\ref{INLS}) with $\sigma = -1$ admits the traveling periodic wave in the form
	\begin{equation}\label{3.14}
	u(x,t) = \frac{\gamma_1 e^{i k_1 \xi_1 - \phi_1}}{1 + e^{i k_1 \xi_1 - \phi_1}}, \quad 
	|u(x,t)|^2 = \frac{k_1 \sinh \phi_1}{\cos k_1 \xi_1 + \cosh \phi_1},
	\end{equation}	
	with $\xi_1 = x - c_1 t - x_1$, $c_1 = -k_1$, and $\gamma_1 = \sqrt{k_1 (e^{2 \phi_1}-1)}$, 
	where $k_1 > 0$, $\phi_1 > 0$, and $x_1 \in \R$ are arbitrary parameters. 
\end{proposition}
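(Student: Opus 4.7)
The plan is to adapt the Hirota bilinear method from the proof of Proposition \ref{prop-1} to the zero-background setting. Since the substitution \eqref{3.3} and the bilinear system \eqref{3.5} only differ from \eqref{2.3}--\eqref{2.5} by the removal of the background ``$1$'' (reflecting $|u|\to 0$ in the lower half-plane of $x$), I would try the one-periodic ansatz
\begin{equation*}
f = 1 + e^{i k_1 \xi_1 - \phi_1}, \quad \tilde{f} = 1 + e^{i k_1 \xi_1 + \phi_1}, \quad g = \gamma_1 e^{i k_1 \xi_1 - \phi_1}, \quad \tilde{g} = \bar{\gamma}_1,
\end{equation*}
with $\xi_1 = x - c_1 t - x_1$, $\phi_1 > 0$, $k_1 > 0$, and the remaining scalars $c_1,\gamma_1$ to be determined. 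This choice drops the additive constants in $g,\tilde{g}$ that encoded the nonzero background in \eqref{2.8}; the form $\tilde{g} = \bar{\gamma}_1$ is then forced by the requirement $\tilde{g}/\tilde{f} = \overline{g/f}$.

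Next I would substitute into the three bilinear equations \eqref{3.5}. The third reduces, after the standard Hirota algebra, to a single-exponential identity of the form $2 k_1 \sinh\phi_1 \, e^{i k_1 \xi_1} = |\gamma_1|^2 e^{i k_1 \xi_1 - \phi_1}$, which fixes $|\gamma_1|^2 = k_1(e^{2\phi_1}-1)$; this is positive for any $k_1>0$, $\phi_1 > 0$, so we may take $\gamma_1 = \sqrt{k_1(e^{2\phi_1}-1)}$, the remaining $U(1)$ phase being absorbed into the rotational symmetry \eqref{NLS-symm}. The first bilinear equation $(iD_t + D_x^2) f \cdot g = 0$ then collapses to the single coefficient relation $-k_1(c_1 + k_1) \gamma_1\, e^{i k_1 \xi_1 - \phi_1} = 0$, yielding the dispersion law $c_1 = -k_1$; the second bilinear equation is the complex conjugate of the first and produces no additional constraint.

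It remains to verify the half-plane hypotheses underlying \eqref{3.3}. The zeros of $f$ solve $i k_1 \xi_1 = \phi_1 + i \pi (2n+1)$, so $\mathrm{Im}(\xi_1) = -\phi_1/k_1 < 0$, placing them in the lower half-plane; by conjugation the zeros of $\tilde{f}$ lie in the upper half-plane, which legitimates the substitution \eqref{3.3}. The stated expression for $|u|^2$ in \eqref{3.14} then follows from $|u|^2 = i D_x f \cdot \tilde{f}/(f \tilde{f})$ upon using $f \tilde{f} = 2 e^{i k_1 \xi_1}(\cos k_1 \xi_1 + \cosh\phi_1)$ together with the value of $i D_x f \cdot \tilde{f} = 2k_1\sinh\phi_1\, e^{ik_1\xi_1}$ already computed for the cross-term in step two.

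I do not expect any serious obstacle. Compared with Proposition \ref{prop-1}, the case analysis around \eqref{inequality} disappears entirely: the positivity of $|\gamma_1|^2 = k_1(e^{2\phi_1}-1)$ is automatic on $\{k_1>0,\phi_1>0\}$, and the dispersion is forced to the single line $c_1 = -k_1$ rather than an open region in the $(c_1,k_1)$-plane, so the family is genuinely one-parameter in $\phi_1$. The only point requiring mild care is fixing the phase of $\gamma_1$ via \eqref{NLS-symm} so that $\gamma_1 > 0$, matching the normalization in \eqref{3.14}.
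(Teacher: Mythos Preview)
The proposal is correct and follows essentially the same bilinear ansatz and computation as the paper's own proof, with the same $f,\tilde f,g,\tilde g$ and the same resulting constraints $c_1=-k_1$, $\gamma_1^2=k_1(e^{2\phi_1}-1)$. Your argument is in fact somewhat more detailed than the paper's (you explicitly verify the half-plane zero locations and the $|u|^2$ formula, and you justify the reality of $\gamma_1$ via the rotational symmetry rather than assuming it upfront), but the route is the same.
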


\begin{proof}
Let us consider  the following solution of the bilinear equations (\ref{3.5}):
	\begin{equation}
	\label{3.8-1} 
	\left\{ 
	\begin{array}{ll}
	f = 1 + e^{i k_1 \xi_1 - \phi_1}, &
	\tilde{f} = 1 + e^{i k_1 \xi_1 + \phi_1}, \\
	g = \gamma_1 e^{i k_1 \xi_1 - \phi_1}, & \tilde{g} = \gamma_1,
	\end{array}
	\right.
	\end{equation}
	where $\xi_1 = x - c_1 t - x_1$, with some parameters $k_1 \in \R$, $\phi_1 \in \R$, $c_1 \in \mathbb{R}$, $x_1 \in \mathbb{R}$, and $\gamma_1 \in \mathbb{R}$. The bilinear equations (\ref{3.5}) are satisfied if and only if 
	$$
	c_1 = -k_1, \quad \gamma_1^2 = k_1 (e^{2 \phi_1} - 1).
	$$
Without loss of generality, we can consider $k_1 > 0$, hence $\phi_1 > 0$ from requirement that $f$ and $\tilde{f}$ admit zeros only in the lower and upper half-planes, respectively. Since $\gamma_1^2 > 0$, then $\gamma_1 \in \R$ and $\bar{u}$ is complex conjugate of $u$.
\end{proof}

\begin{remark}
	\label{rem-defoc}
	The traveling periodic waves on the zero background are not admissible in the nonlocal model (\ref{INLS}) with $\sigma = +1$ because $\gamma_1^2 < 0$, which results in $\gamma_1 \in i \R$ and $|u(x,t)|^2 < 0$.
\end{remark}

\begin{remark}
		\label{rem-unity}
	Without the loss of generality, one can set $k_1 = 1$ due to the scaling transformation (\ref{NLS-scaling}). We will keep $k_1 > 0$ as a free parameter for clarity of notations.
\end{remark}

Figure \ref{fig-4} displays the spatial profile of the traveling periodic waves given by (\ref{3.14}) for $k_1 = 0.25$ and two choices of $\phi_1$. The wave with smaller $\phi_1$ has larger maxima and smaller minima for $|u|^2$ given by 
$$
\max_{\xi_1 \in \R} \frac{k_1 \sinh \phi_1}{\cos k_1 \xi_1 + \cosh \phi_1} = k_1 \coth\left(\frac{\phi_1}{2}\right)
$$
and
$$
\min_{\xi_1 \in \R} \frac{k_1 \sinh \phi_1}{\cos k_1 \xi_1 + \cosh \phi_1} = k_1 \tanh\left(\frac{\phi_1}{2}\right).
$$
The period of the traveling periodic wave is $\frac{2\pi}{k_1}$.

\begin{figure}[htb!]
	\centering
	\includegraphics[width=7.5cm,height=7cm]{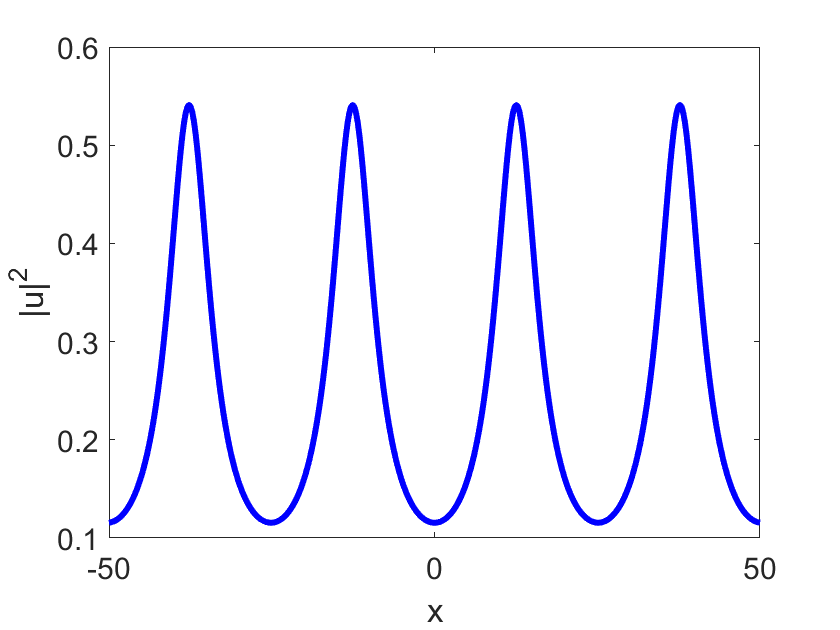}
	\includegraphics[width=7.5cm,height=7cm]{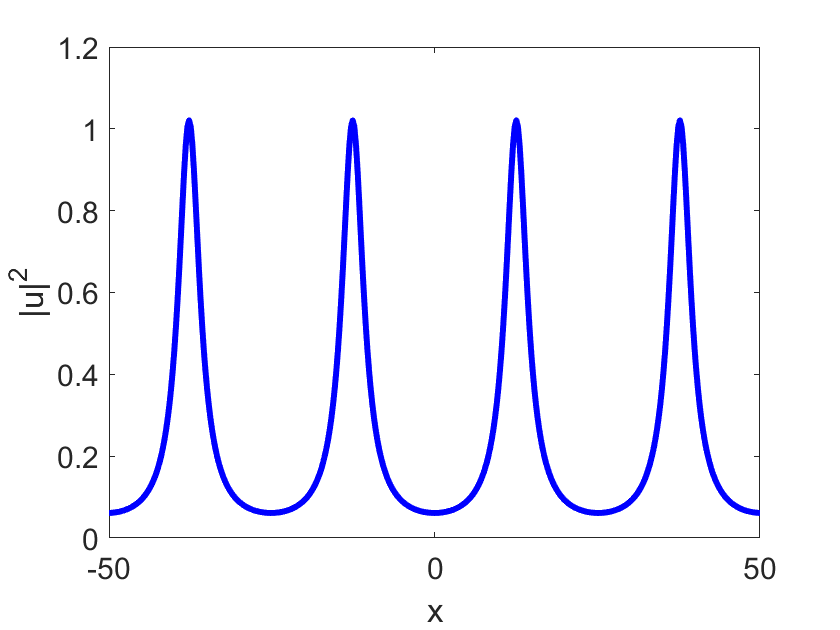}
	\caption{The profile of $|u|^2$ versus $x$ for $\sigma = -1$, $k_1 = 0.25$, and either $\phi_1 = 1$ (left) or $\phi_1 = 0.5$ (right).}
	\label{fig-4}
\end{figure}

The long-wave limit of the periodic wave (\ref{3.14}) appears as $k_1 \to 0$, where the periodic wave transforms into a solitary wave. As $k_1 \rightarrow 0$, the nontrivial limit exists in \eqref{3.14} if and only if $\phi_1 = \alpha_1 k_1 \rightarrow 0$ with arbitrary $\alpha_1 > 0$. We obtain from (\ref{3.14}) after the transformation $x_1 \mapsto x_1  + \pi/k_1$ in the limit $k_1 \to 0$ that 
\begin{equation} 
\label{3.15} 
u(x,t) = \frac{\sqrt{2\alpha_1}}{i (x - x_1) - \alpha_1}, \quad   
|u(x,t)|^2 = \frac{2\alpha_1}{\alpha_1^2 + (x-x_1)^2}.  
\end{equation}
The solution (\ref{3.15}) represents the bright soliton with the profile decaying to zero at infinity. The arbitrary parameter $\alpha_1 > 0$ can be normalized to unity due to the scaling transformation (\ref{NLS-scaling}).
Variational characterization of the soliton solution (\ref{3.15}) has been studied in \cite{GL-22}.

\subsection{Lax spectrum of the traveling periodic wave}

To obtain the exact solutions of the linear system (\ref{2.2}) with $\sigma = -1$, we use the representations (\ref{eig-bilinear}) and (\ref{3.3}). The system of bilinear equatons (\ref{1.19}) remains the same and we rewrite this system for $\sigma = -1$ as 
\begin{equation}
	\left\{
	\begin{array}{l}
		(i D_x + \lambda) \varphi \cdot f + g \cdot h  = 0, \\
		h \cdot \tilde{f} - \mu \tilde{h} \cdot f - \varphi \cdot \tilde{g} = 0, \\
		(i D_t + \lambda^2) \varphi \cdot f + (i D_x + \lambda) h \cdot g = 0, \\  
		(i D_t - 2 i \lambda D_x + D_x^2) h \cdot f = 0, \\   
		(i D_t - 2 i \lambda D_x + D_x^2) \tilde{h} \cdot \tilde{f} = 0.
	\end{array}  \right.
	\label{1.19-1}
\end{equation} 
The following proposition describes the Lax spectrum for the traveling periodic wave with the spatial profile (\ref{3.14}) based on the exact solutions of the system (\ref{1.19-1}).

\begin{proposition}
	\label{prop-foc-3}
	Let $u$ be the traveling periodic wave in Proposition \ref{prop-foc-1}. The Lax spectrum in Definition \ref{def-Lax} is located in 
\begin{equation}
\label{Lax-spectrum-foc-I}
\Sigma = [0,k_1] \cup [0,\infty),
\end{equation}
where $k_1 > 0$. Hence $[0,k_1]$ is embedded into $[0,\infty)$.	
\end{proposition}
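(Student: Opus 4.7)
My plan is to mirror the proof structure of Proposition \ref{prop-2}, which handled the nonzero background, and exploit the simplifications that arise because now $g = \gamma_1 e^{ik_1\xi_1 - \phi_1}$ and $\tilde g = \gamma_1$ are much simpler (with $c_1 = -k_1$). As before, I would split into the cases $q^- \equiv 0$ (which produces the continuous part of $\Sigma$) and $q^- \not\equiv 0$ (which produces the additional band).

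For the case $q^- \equiv 0$, the second equation of system (\ref{1.19-1}) reduces to $h\tilde f = \varphi\tilde g = \gamma_1\varphi$. Since $\tilde f$ vanishes in $\mathbb{C}_+$ but $h$ must remain analytic there, $\varphi$ must be divisible by $\tilde f$, so I would write $\varphi = m\tilde f$ and hence $h = m\gamma_1$ for some scalar $m(x,t)$. Substituting into the first equation of (\ref{1.19-1}) and using the third bilinear identity $iD_x f\cdot\tilde f = g\tilde g = \gamma_1 g$ from (\ref{3.5}), the $\tilde f, f, g$-dependence drops out and yields the simple ODE $i m_x + \lambda m = 0$. The third equation of (\ref{1.19-1}), treated analogously with the help of the bilinear identities satisfied by $f,\tilde f,g,\tilde g$, gives $im_t + \lambda^2 m = 0$. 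Hence $m = e^{i\lambda x + i\lambda^2 t}$, and $q^+ = m\gamma_1/f$ is analytic in $\mathbb{C}_+$ and bounded as $\mathrm{Im}(x)\to+\infty$ if and only if $\lambda \geq 0$. This yields $[0,\infty)\subset\Sigma$. The shift by $\sigma$ present in Proposition \ref{prop-2} disappears here precisely because $\tilde g$ is constant on the zero background.

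For the case $q^-\not\equiv 0$, I would copy the separation-of-variables ansatz used to derive (\ref{1.24}), namely $h = e^{i\theta(\xi_1+(c_1+2\lambda-\theta)t)}(1 + A\, e^{ik_1\xi_1 - \phi_1})$ and the analogous ansatz for $\tilde h$ with $\phi_1\mapsto -\phi_1$. Substituting into the fourth and fifth bilinear equations of (\ref{1.19-1}) and setting $c_1=-k_1$ determines $A = (\lambda-\theta)/(\lambda-k_1-\theta)$. As in the proof of Proposition \ref{prop-2}, analyticity and boundedness of $q^\pm$ in $\mathbb{C}_\pm$ forces $\theta = 0$. The component $p$ is recovered from the first two equations of (\ref{1.19-1}) by a separation ansatz as in (\ref{1.35}), which also pins down the coupling parameter $\mu$; boundedness of $p$ is automatic from the resulting formula. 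Finally, expanding
\begin{equation*}
q^- = \left[\frac{\lambda}{\lambda-k_1} + e^{-ik_1\xi_1 - \phi_1}\right]\sum_{\ell=0}^{\infty}(-1)^\ell e^{-i\ell k_1\xi_1 - \ell\phi_1}
\end{equation*}
(which is the analogue of the computation at the end of the proof of Proposition \ref{prop-2}) shows that the mean value of $q^-$ over one period vanishes precisely when $\lambda = \lambda_0 := 0$, and by Definition \ref{def-Lax} the additional band is $[\lambda_0,\lambda_0+k_1] = [0,k_1]$. Since $k_1>0$, this band is embedded inside the continuous part $[0,\infty)$, yielding (\ref{Lax-spectrum-foc-I}).

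The main obstacle is the purely algebraic verification that the candidate expressions for $(\varphi,h,\tilde h)$ exactly satisfy all five bilinear equations in (\ref{1.19-1}); several identities for $f,\tilde f,g,\tilde g$ that are used implicitly in the Proposition \ref{prop-2} proof need to be rechecked in the zero-background setting, since on the nonzero background the identities $(D_t-\sigma D_x)\tilde f\cdot f - \sigma D_x \tilde g\cdot g = 0$ and $(iD_t - 2i\sigma D_x + D_x^2)\tilde g\cdot f = 0$ were used, and their analogues here must be redone with $c_1=-k_1$, $\tilde g = \gamma_1$, and $\sigma = -1$. Other than this bookkeeping, the argument is structurally identical to that of Proposition \ref{prop-2} and the band location $\lambda_0=0$ is already consistent with the formal substitution $c_1=-k_1$ in $\lambda_0 = -(c_1+k_1)/2$ from (\ref{Lax-spectrum}).
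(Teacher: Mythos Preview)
The proposal is correct and follows essentially the same approach as the paper's proof. The only minor deviation is that in the $q^- \neq 0$ case the paper does not use a separation ansatz of the form (\ref{1.35}) for $\varphi$ (which would not quite fit since $g$ is a single exponential on the zero background) but instead writes $\varphi = \gamma_1^{-1}(h\tilde f - \mu\tilde h f)$ directly from the second bilinear equation and then determines $\mu = 1$ from the first, obtaining $\varphi$ proportional to $e^{ik_1\xi_1-\phi_1}$ alone.
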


\begin{proof}
	We proceed differently for $q^- \equiv 0$ and $q^- \neq 0$. \\
	
\underline{If $q^- \equiv 0$}, then $\tilde{h} \equiv 0$. The second solution of system (\ref{1.19-1}) implies that 
$$
h = \frac{\varphi \tilde{g}}{\tilde{f}}.
$$
Since $q^+ = \bar{u} p$ is analytic in $\mathbb{C}_+$, then $h$ is required to be analytic in $\mathbb{C}_+$. Since $\tilde{f}$ admits zeros in $\mathbb{C}_+$, then $\varphi$ must be divisible by $\tilde{f}$ so that 
\begin{equation} 
	\label{1.14-1}
	\varphi = m \tilde{f}, \qquad h =  m \tilde{g},
\end{equation}
with some $m = m(x,t)$ to be determined (required to be analytic in $\mathbb{C}_+$). 

From the first equation of 
system (\ref{1.19-1}) we find with the help of the third equaton of system (\ref{3.5}) that 
\begin{equation} 
	\label{m-eq-1}
	i m_x + \lambda m = 0.
\end{equation}
From the third equation of system (\ref{1.19-1}), we obtain with the help of (\ref{m-eq-1}) that
$$
\left( i m_t + \lambda^2 m \right) \tilde{f} \cdot f + i m \left( D_t \tilde{f}\cdot f + D_x \tilde{g} \cdot g  \right)= 0, 
$$
which together with \eqref{3.8-1} results in 
\begin{equation} 
	\label{m-eq-2}
	i m_t + \lambda^2 m = 0.
\end{equation}
From the fourth equation of system (\ref{1.19-1}) and  \eqref{3.8-1}, we obtain
$$
(i m_t - 2 i \lambda m_x + m_{xx}) f + m( -i f_t + 2 i \lambda f_x + f_{xx}) -2 m_x f_x = 0,
$$
which is satisfied due to  \eqref{3.8-1}, \eqref{m-eq-1}, and \eqref{m-eq-2}. Solving \eqref{m-eq-1} and \eqref{m-eq-2} yields
$$
m(x,t) = e^{i \lambda x + i \lambda^2 t},
$$
with the constant of integration normalized to unity.
By using (\ref{eig-bilinear}) and (\ref{1.14-1}), we obtain the exact expression for the components $p$ and $q^+$ of the eigenfunctions with $q^- \equiv 0$:
\begin{equation}
	\label{1.17-1}
	p = e^{i \lambda x + i \lambda^2 t} \frac{1 + e^{i k_1 \xi_1 + \phi_1}}{1 + e^{i k_1 \xi_1 - \phi_1}}, \qquad q^+ =  \frac{\gamma_1 e^{i \lambda x + i \lambda^2 t}}{1 + e^{i k_1 \xi_1 - \phi_1}}.
\end{equation}
The component $q^+$ is analytic in $\mathbb{C}_+$ and bounded as ${\rm Im}(x) \to +\infty$ for every $t \in \R$ if and only if 
$\lambda \geq 0$. Hence, $[0,\infty) \in \Sigma$ belongs to the Lax spectrum (\ref{Lax-spectrum-foc-I}). \\

\underline{If $q^- \neq 0$}, then we obtain solutions for $h$ and $\tilde{h}$ by using the last two equations of system \eqref{1.19-1}. Given $f$ in (\ref{3.8-1}), we separate the variables in the form
\begin{equation*}
	h = e^{i (\theta \xi_1 + \Omega t)} \left(1 + A e^{i k_1 \xi_1 - \phi_1}\right),
\end{equation*}
with some $\theta$, $\Omega$, and $A$ to be determined. The fourth equation in system \eqref{1.19-1} is satisfied if and only if 
\begin{equation*}
	\Omega = \theta (2 \lambda - \theta - k_1) \quad \mbox{\rm and} \quad A  = \frac{\lambda  - \theta}{ \lambda  - \theta -  k_1},
\end{equation*}
which yields the explicit solution 
\begin{equation} 
	\label{1.24-1}
	h = e^{i \theta (\xi_1 + (2 \lambda - \theta - k_1) t)} \left(1 + \frac{\lambda  - \theta}{ \lambda  - \theta -  k_1}  e^{i k_1 \xi_1 - \phi_1}\right).
\end{equation}
With similar computations from the fifth equation in system \eqref{1.19-1}, we obtain the explicit solution 
\begin{equation} \label{1.26-1}
	\tilde{h} = e^{i \theta (\xi_1 + (2 \lambda - \theta - k_1) t)} \left(1 + \frac{\lambda  - \theta}{ \lambda  - \theta -  k_1}  e^{i k_1 \xi_1 + \phi_1}\right),
\end{equation}
where parameter $\theta \in \mathbb{R}$ has to be the same due to the coupling between $h$ and $\tilde{h}$ in the second equation of system (\ref{1.19-1}). Now $q^+$ and $q^-$ are analytic and bounded in $\mathbb{C}_+$ and $\mathbb{C}_-$ respectively if and only if $\theta = 0$. This yields the unique representation of the components $q^+$ and $q^-$ in the form 
\begin{equation}
	\label{q-plus-expression-1}
	q^+ = \frac{1}{1 + e^{i k_1 \xi_1 - \phi_1}} \left[ 1 + \frac{\lambda }{ \lambda -  k_1}  e^{i k_1 \xi_1 - \phi_1} \right]
\end{equation}
and 
\begin{equation}
	\label{q-minus-expression-1}
	q^- = \frac{1}{1 + e^{i k_1 \xi_1 + \phi_1}} \left[ 1 + \frac{\lambda }{ \lambda -  k_1}  e^{i k_1 \xi_1 + \phi_1} \right]. 
\end{equation}
We note from (\ref{q-plus-expression-1}) and (\ref{q-minus-expression-1}) that $q^{\pm}$ are analytic and bounded in $\mathbb{C}_{\pm}$ according to Definition \ref{def-Lax} since $k_1 > 0$ and $\phi_1 > 0$.

It remains to find a bounded function $p$ from the first three equations of system (\ref{1.19-1}). Resorting to the second equation (\ref{1.19-1}) and (\ref{3.8-1}), we arrive at
\begin{equation}\label{2.27-0}
 \varphi = \frac{1}{\gamma_1} (h \tilde{f} - \mu \tilde{h} f) ,
\end{equation}
Substituting \eqref{3.8-1}, \eqref{2.27-0}, \eqref{1.24-1} and \eqref{1.26-1} with $\theta = 0$ into the first equation of (\ref{1.19-1}), we obtain
\begin{align*}
&\lambda(1 - \mu )	e^{3 i k_1 \xi_1 - \phi_1} + \lambda \left[1 + \frac{2 k_1 (\mu - 1)}{\lambda - k_1} + \frac{\lambda (1 - 2 \mu)}{\lambda - k_1}
+ e^{-2 \phi_1} \left(\frac{\lambda + \gamma_1}{\lambda - k_1} - \mu \right)\right] e^{2 i k_1 \xi_1} \\
& + \left[k_1 + \gamma_1 + 2 \lambda (1 - \mu) + e^{2 \phi_1} (\lambda (1 - \mu) - k_1) \right] e^{ i k_1 \xi_1 - \phi_1} + \lambda(1 - \mu ) = 0.
\end{align*}
This relation is satisfied if and only if $\mu =1 $. By using (\ref{2.27-0}), we obtain 
\begin{equation}\label{2.28-0}
\varphi = - \frac{\gamma_1}{\lambda - k_1}  e^{ i k_1 \xi_1 - \phi_1}.
\end{equation}
and 
\begin{equation}\label{2.29-0}
	p = - \frac{\gamma_1}{\lambda - k_1} \frac{e^{ i k_1 \xi_1 - \phi_1}}{1 + e^{ i k_1 \xi_1 - \phi_1}}  .
\end{equation}
Finally, we have confirmed that the third equation is satisfied by using \eqref{3.8-1}, \eqref{1.24-1} with $\theta = 0$ and \eqref{2.28-0}. 
The periodic function $q^-$ in (\ref{q-minus-expression-1}) 
has zero mean value at $\lambda = 0$. Hence the band $[0,k_1]$ belongs to the Lax spectrum (\ref{Lax-spectrum-foc-I}). 
\end{proof}

\begin{remark}
	It follows from (\ref{1.17-1}) and (\ref{2.29-0}) that $p$ is analytic and bounded in $\mathbb{C}_+$. We note again that although this was not a requirement on solutions of the linear system (\ref{2.2}), this property follows from the fact that the spatial profile $u$ in (\ref{3.14}) is analytic in $\mathbb{C}_+$, see also Remark \ref{rem-Lax-spectrum}.
\end{remark}

\begin{remark}
	The band $[0,k_1]$ of Proposition \ref{prop-foc-3} in (\ref{Lax-spectrum-foc-I}) can be formally obtained from the band $[\lambda_0,\lambda_0+k_1]$ of Proposition \ref{prop-2} in (\ref{Lax-spectrum}) since $\lambda_0 = -\frac{c_1+k_1}{2} = 0$ if $c_1 = -k_1$. The other band $[0,\infty)$ in (\ref{Lax-spectrum-foc-I}) is a shifted version of the band $[-1,\infty)$ in (\ref{Lax-spectrum}) due to the change from a nonzero to zero background.
\end{remark}

\begin{remark}
	The Lax spectrum of the algebraic soliton (\ref{3.15}) is found in the limit $k_1 \to 0$ of Proposition \ref{prop-foc-3}. It consists of the spectral band $[0,\infty)$ and a simple embedded eigenvalue at $0$, which is the end point of the continuous spectrum.
\end{remark}

\subsection{Breathers on the traveling periodic wave} 

To obtain a solitary wave on the background of the traveling periodic wave (\ref{3.14}), we first construct the $2$-periodic wave solution from the bilinear equations (\ref{3.5}). The following theorem gives the most general $2$-periodic solution, where we have verified  consistency of the assumptions on $f$ and $\tilde{f}$ to have only zeros in the lower and upper halves of the complex plane of $x$, respectively. 

\begin{theorem}
	\label{prop-tech}
	The nonlocal model (\ref{INLS}) with $\sigma = -1$ admits two families of the $2$-periodic wave solution expressed by the representation \eqref{3.3} with 
\begin{equation} 
\label{2.39-0}
\begin{cases}
		f &= 1 + e^{i k_1 \xi_1 - \phi_1 - \frac12 A_{12}} + e^{i k_2 \xi_2 - \frac12 A_{12}} + e^{i k_1 \xi_1 - \phi_1+ i k_2 \xi_2},\\
		\tilde{f} & = 1 + e^{i k_1 \xi_1 + \phi_1 - \frac12 A_{12}} + e^{i k_2 \xi_2 - \frac12 A_{12}} + e^{i k_1 \xi_1 + \phi_1+ i k_2 \xi_2},\\
		g & = \gamma_{12} \left[ e^{i k_1 \xi_1 - \phi_1 - \frac12 A_{12}} + \alpha_2 e^{i k_1 \xi_1 - \phi_1+ i k_2 \xi_2} \right],\\
		\tilde{g} &= \gamma_{12} \left[e^{i k_2 \xi_2 - \frac12 A_{12}} + \alpha_2 \right],
	\end{cases}
\end{equation}
where $\xi_j = x - c_j t - x_j$ with arbitrary $x_j \in \mathbb{R}$, $k_1 > 0$, $\phi_1 > 0$, $c_1 = -k_1$, $k_2 \neq 0$, 
\begin{align*}
\alpha_2 = \frac{c_2  - k_2}{c_2 + k_2},
\qquad e^{- A_{12}} = \frac{(c_1 -c_2)^2 - (k_1+k_2)^2}{(c_1-c_2)^2 - (k_1-k_2)^2} = \frac{(c_2  - k_2)(c_2 + 2 k_1 + k_2)}{(c_2  + k_2)(c_2 + 2 k_1 - k_2)},
\end{align*}
and
$$
\gamma_{12}^2 =  k_1 (e^{2 \phi_1} - 1) \frac{c_2 + k_2}{c_2 - k_2}.
$$
The two families are defined by the two intervals for the speed $c_2$: 
either $c_2 \in (|k_2|,\infty)$ or $c_2 \in (-\infty,-|k_2|-2k_1)$.
\end{theorem}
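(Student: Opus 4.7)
The plan is to verify the ansatz (\ref{2.39-0}) directly against the bilinear equations (\ref{3.5}) by expanding the Hirota operators and matching coefficients of the exponential monomials, and then to check the analytic-zero conditions that make the representation (\ref{3.3}) well-defined. This parallels the verification in Proposition \ref{prop-foc-1} at the two-phase level and follows the construction used by Matsuno \cite{MatsunoPLA,MatsunoSAMP} for the nonzero-background case (\ref{2-periodic-wave}).

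First, substitute (\ref{2.39-0}) into the first equation $(iD_t+D_x^2)f\cdot g=0$. Writing $E_j=e^{ik_j\xi_j}$ and exploiting the identity $D_x^m e^{\theta_1 x}\cdot e^{\theta_2 x}=(\theta_1-\theta_2)^m e^{(\theta_1+\theta_2)x}$, the product $f\cdot g$ reduces to a polynomial in $E_1,E_2$ whose coefficients are explicit functions of $c_j,k_j,\phi_1,\alpha_2,A_{12}$. The coefficient of the pure $E_1$-monomial reproduces the one-phase dispersion and forces $c_1=-k_1$, exactly as in Proposition \ref{prop-foc-1}. The mixed $E_1E_2$- and $E_1^2E_2$-coefficients together force $\alpha_2=(c_2-k_2)/(c_2+k_2)$ and $e^{-A_{12}}=[(c_1-c_2)^2-(k_1+k_2)^2]/[(c_1-c_2)^2-(k_1-k_2)^2]$, the standard Matsuno two-phase interaction coefficient; substituting $c_1=-k_1$ then simplifies $e^{-A_{12}}$ to the closed form in the statement. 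The conjugate equation $(-iD_t+D_x^2)\tilde f\cdot\tilde g=0$ is handled in the same way and reproduces the same parameters, since the ansatz for $(\tilde f,\tilde g)$ differs from $(f,g)$ only by $\phi_1\to-\phi_1$ and relabeling.

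Second, the third bilinear equation $iD_x f\cdot\tilde f-g\cdot\tilde g=0$ couples $f$ with $\tilde f$ (which differ only in the sign of $\phi_1$) and $g$ with $\tilde g$. Expanding both products and substituting the parameters fixed in the previous step, all of the $\xi_1,\xi_2$-dependent coefficients cancel by direct verification, while the constant coefficient gives $\gamma_{12}^2=k_1(e^{2\phi_1}-1)(c_2+k_2)/(c_2-k_2)$. The consistency requirement $\bar u=\tilde g/\tilde f$ then reduces to $\gamma_{12}\in\R$, i.e.\ $\gamma_{12}^2>0$; since $k_1(e^{2\phi_1}-1)>0$, this is equivalent to $(c_2+k_2)(c_2-k_2)>0$, i.e.\ $|c_2|>|k_2|$.

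Third, the analytic location of the zeros of $f$ and $\tilde f$ in $\mathbb{C}_-$ and $\mathbb{C}_+$ respectively follows from the two-phase result \cite[Lemma 1.1]{DK-91} invoked already in Section \ref{sec-4}, whose hypothesis $(c_1-c_2)^2>(|k_1|+|k_2|)^2$ becomes, with $c_1=-k_1$, the inequality $|k_1+c_2|>k_1+|k_2|$. This splits into the two intervals $c_2\in(|k_2|,\infty)$ and $c_2\in(-\infty,-2k_1-|k_2|)$; both automatically satisfy $|c_2|>|k_2|$, so the conjugation constraint from step two imposes nothing additional, and exactly the two stated families arise. The main obstacle is the coefficient bookkeeping in the third bilinear equation: the asymmetric structure of $(g,\tilde g)$ in (\ref{2.39-0}), whose exponential support differs from that in the nonzero-background ansatz (\ref{2-periodic-wave}), makes the cancellations less transparent, and it is precisely this step that forces $c_1=-k_1$ to be the unique admissible speed for the first mode and isolates the two admissible families of wave speeds for the second mode.
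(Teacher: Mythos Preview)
Your outline for the bilinear verification in steps one and two is essentially the same as the paper's, which simply asserts that (\ref{2.39-0}) has been checked against (\ref{3.5}) and that the conjugation condition $\bar u=\tilde g/\tilde f$ forces $\gamma_{12}^2>0$, equivalently $\alpha_2>0$. Your slightly more detailed account of which exponential monomials fix which parameters is fine and does not deviate from the paper's logic.

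The genuine gap is in step three. You invoke \cite[Lemma~1.1]{DK-91} for the location of the zeros of $f$ and $\tilde f$, but that lemma, as recalled in Section~\ref{sec-breathers}, requires ${\rm sgn}(\phi_j)={\rm sgn}(k_j)$ for each phase in addition to the speed-separation condition $(c_1-c_2)^2>(|k_1|+|k_2|)^2$. In the present zero-background ansatz (\ref{2.39-0}) the second phase carries $\phi_2=0$, so the hypothesis fails: in the factorized limit $e^{-A_{12}}\to1$ the corresponding zeros of $f$ sit exactly on the real axis at $e^{ik_2\xi_2}=-1$, and one cannot read off from \cite{DK-91} that they lie strictly in $\mathbb{C}_-$ once the interaction is switched on. The paper handles this by a direct two-part argument: first a perturbation expansion in small $k_1>0$, using the implicit function theorem at the simple real zeros of the factorized $f$, to show they move into $\mathbb{C}_-$; and then a global continuation argument showing that for arbitrary $k_1>0$ and $k_2\neq0$ the system $f=0$ has no real solutions (the key point being that the determinant $1-e^{-A_{12}}$ of the reduced linear system is nonzero precisely on the two stated intervals for $c_2$). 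You need either to supply such an argument or to justify explicitly why the Dobrokhotov--Krichever result extends to the degenerate boundary case $\phi_2=0$; as written, the citation does not close the proof.
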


\begin{proof}
We have verified validity of the explicit expression (\ref{2.39-0}) by substituting them into the bilinear equation (\ref{3.5}). The complex conjugate symmetry between $u$ and $\bar{u}$ in \eqref{3.3} holds if and only if $\gamma_{12}^2 > 0$. This corresponds to $\alpha_2 > 0$ since $k_1 > 0$ and $\phi_1 > 0$. Therefore, either $c_2 > |k_2|$ or $c_2 < -|k_2|$. 

Next we prove that zeros of $f$ and $\tilde{f}$ are located in the lower and upper halves of the complex plane of $x$, respectively, if 
$$
(c_1 - c_2)^2 > (k_1 + |k_2|)^2,
$$
for $k_1 > 0$ and $k_2 \neq 0$. Since $c_1 = -k_1$, this constraint is rewritten as 
$$
(c_2 - |k_2|)(c_2 + |k_2| + 2 k_1) > 0,
$$
hence it provides further restrictions on the speed $c_2$: either $ c_2 \in (|k_2|,\infty)$ or $c_2 \in (-\infty,-|k_2|-2k_1)$.

In order to consider zeros of $f$, we note that $e^{-A_{12}} \to 1$ if $k_1 \to 0$ or $k_2 \to 0$ which yields the factorization formula:
$$
f \to (1 + e^{i k_1 \xi_1 - \phi_1}) (1 + e^{i k_2 \xi_2}) \quad \mbox{\rm as } \;\; e^{-A_{12}} \to 1.
$$
Zeros of the factorization for $f$ correspond to either ${\rm Im}  \xi_1 = - \phi_1/ k_1 < 0$ or ${\rm Im}  \xi_2 = 0$. The first set is already in the lower half-plane. To prove that the second set moves to the lower half-plane for small $k_1 > 0$ and $k_2 \neq 0$, we compute the perturbation terms beyond the factorization formula. The first-order Taylor expansion of $f$ in variable $k_1$ is as follows:
$$
f =  (1 + e^{ - \phi_1}) (1 + e^{i k_2 \xi_2}) + k_1 \left[i \xi_1 e^{- \phi_1} (1 + e^{i k_2 \xi_2}) - \frac{2 k_2}{c_2^2 - k_2^2}(e^{- \phi_1}  + e^{i k_2 \xi_2})\right] + \mathcal{O}(k_1^2).
$$
Let $\xi_0 = (2 n +1) \pi / k_2$ be a simple zero of $(1 + e^{ - \phi_1}) (1 + e^{i k_2 \xi_2}) = 0$ for $n \in \mathbb{Z}$ so that 
\begin{align*}
\partial_{\xi_2} f &= i k_2 (1 + e^{- \phi_1}) e^{i k_2 \xi_0} + \mathcal{O}(k_1) \\
&= - i k_2 (1 + e^{ - \phi_1}) + \mathcal{O}(k_1) \neq 0.
\end{align*}
By using the implicit function theorem, we obtain the root of $f = 0$ for small $k_1 > 0$ as 
$$
\xi_2 = \xi_0 - \frac{2 i k_1 (1 - e^{- \phi_1})}{(c_2^2 - k_2^2) (1 + e^{- \phi_1})} + \mathcal{O}(k_1^2) \quad \mbox{\rm as} \;\; k_1 \to 0.
$$
Since $(c_2-k_2) = \alpha_2 (c_2+k_2)$ and $\alpha_2 > 0$, then $c_2^2 - k_2^2 >0$. Since $k_1 > 0$ and $\phi_1>0$, we conclude that ${\rm Im}(\xi_2) < 0$ for every $n \in \mathbb{Z}$, hence all roots of $f$ are the lower half-plane for small $k_1 > 0$ and fixed $k_2 \neq 0$.

Let us now show that the zeros of $f$ do not cross the real axis for arbitrary $k_1 > 0$ and $k_2 \neq 0$. If $f = 0$ and $\xi_1,\xi_2 \in \mathbb{R}$, then we have the system 
\begin{align*}
\left\{ \begin{array}{r} 
1 + e^{-\phi_1 - \frac12 A_{12}} \cos k_1 \xi_1 + e^{- \frac12 A_{12}} \cos k_2 \xi_2 + e^{-\phi_1} \cos(k_1 \xi_1 + k_2 \xi_2) = 0, \\
e^{-\phi_1 - \frac12 A_{12}} \sin k_1 \xi_1 + e^{- \frac12 A_{12}} \sin k_2 \xi_2 + e^{-\phi_1} \sin (k_1 \xi_1 + k_2 \xi_2) = 0. \end{array} \right.
\end{align*}
By using 
$$
k_1 \xi_1 = \frac{k_1 \xi_1 + k_2 \xi_2}{2} + \frac{k_1 \xi_1 - k_2 \xi_2}{2}, \qquad k_2 \xi_2 =  \frac{k_1 \xi_1 + k_2 \xi_2}{2} - \frac{k_1 \xi_1 - k_2 \xi_2}{2}
$$
and trigonometric identities for addition/subtraction formulas, we rewrite the system in the equivalent form:
\begin{align*}
&	\left[ \cos \frac12 (k_1 \xi_1 + k_2 \xi_2) + e^{- \frac12 A_{12}} \cos \frac12 (k_1 \xi_1 - k_2 \xi_2)  \right] (e^{- \phi_1} + 1) \cos \frac12 (k_1 \xi_1 + k_2 \xi_2) \\ & - \left[ \sin \frac12 (k_1 \xi_1 + k_2 \xi_2) + e^{- \frac12 A_{12}} \sin \frac12 (k_1 \xi_1 - k_2 \xi_2)  \right] (e^{- \phi_1} - 1) \sin \frac12 (k_1 \xi_1 + k_2 \xi_2) = 0, \\ & \left[ \sin \frac12 (k_1 \xi_1 + k_2 \xi_2) + e^{- \frac12 A_{12}} \sin \frac12 (k_1 \xi_1 - k_2 \xi_2)  \right] (e^{- \phi_1} - 1)  \cos \frac12 (k_1 \xi_1 + k_2 \xi_2) \\ & + \left[ \cos \frac12 (k_1 \xi_1 + k_2 \xi_2) + e^{- \frac12 A_{12}} \cos \frac12 (k_1 \xi_1 - k_2 \xi_2)  \right] (e^{- \phi_1} + 1)  \sin \frac12 (k_1 \xi_1 + k_2 \xi_2) = 0.
\end{align*}
Since $\phi_1 > 0$, eliminating $\cos \frac12 (k_1 \xi_1 + k_2 \xi_2)$ and $\sin \frac12 (k_1 \xi_1 + k_2 \xi_2)$ yields the equivalent system of equations:
\begin{align*}
\left\{  \begin{array}{l}
\left[ \cos \frac12 (k_1 \xi_1 + k_2 \xi_2) + e^{- \frac12 A_{12}} \cos \frac12 (k_1 \xi_1 - k_2 \xi_2)  \right]  = 0, \vspace{2mm} \\
\left[ \sin \frac12 (k_1 \xi_1 + k_2 \xi_2) + e^{- \frac12 A_{12}} \sin \frac12 (k_1 \xi_1 - k_2 \xi_2)  \right] = 0,\end{array} \right.
\end{align*}
which can be rewritten further as 
\begin{equation}\label{3.20-10}
\left\{  \begin{array}{l}
\left(1 +  e^{- \frac12 A_{12}}  \right) \cos \frac12 k_1 \xi_1 \cos \frac12 k_2 \xi_2 - \left(1 - e^{- \frac12 A_{12}} \right) \sin \frac12 k_1 \xi_1 \sin \frac12 k_2 \xi_2 = 0,  \vspace{2mm}  \\
\left(1 +  e^{- \frac12 A_{12}} \right)\sin \frac12 k_1 \xi_1 \cos \frac12 k_2 \xi_2 + \left(1 - e^{- \frac12 A_{12}} \right) \cos \frac12  k_1 \xi_1 \sin \frac12 k_2 \xi_2 = 0. 
\end{array} \right.
\end{equation}
The determinant of coefficients for $\cos \frac{1}{2} k_2 \xi_2$ and $\sin \frac{1}{2} k_2 \xi_2$ in \eqref{3.20-10} is equal to $\xi_1$-independent quantity
$$ 
1 - e^{-  A_{12}} = \frac{4 k_1 k_2}{(c_2 + k_2)(c_2 - k_2 + 2 k_1)},
$$
which is nonzero and bounded if $k_1 > 0$, $k_2 \neq 0$ and either $c_2 \in (|k_2|,\infty)$ or $c_2 \in (-\infty,-|k_2|-2k_1)$. Hence, no zeros of $f$ cross the real line for arbitrary $k_1 > 0$ and $k_2 \neq 0$, and, since they are in the lower half-plane for small $k_1 > 0$ and fixed $k_2\neq 0$, they remain in the lower half-plane for every $k_1 > 0$ and $k_2 \neq 0$.
\end{proof}

\begin{remark}
The $N$-periodic solution of the nonlocal model (\ref{INLS}) with $\sigma = -1$ was recently obtained in \cite[Theorem 1]{MatsunoSAMP}. However, the parameter restriction in equation (14) of \cite{MatsunoSAMP} only recovers the family of solutions in our Theorem \ref{prop-tech} with $c_2 \in (|k_2|,\infty)$ and miss the family of solutions with $c_2 \in (-\infty,-|k_2|-2k_1)$. Although the location of zeros of $f$ in the lower half-plane of $x$ was proven in \cite[Proposition 2]{MatsunoSAMP} for parameter restriction in equation (14), we gave an alternative proof of Theorem \ref{prop-tech} which works for both families of solutions. 
\end{remark}

Each family of the $2$-periodic waves in Theorem \ref{prop-tech} generates
only one family of breathers on the background of the traveling periodic wave of Proposition \ref{prop-foc-1} in the limit $k_2 \to 0$, according to the following corollary.

\begin{corollary}
	\label{prop-foc-5}
	The nonlocal model (\ref{INLS}) with $\sigma = -1$ admits two families of breather solutions on the traveling periodic wave (\ref{3.14}). The solutions exist in the form (\ref{3.3}) with 
	\begin{equation} 
	\label{2.40-0}
	\left\{ \begin{array}{l}
	f = \beta_{12} (1 - e^{i k_1 \xi_1 - \phi_1}) - i \xi_2 (1 + e^{i k_1 \xi_1 - \phi_1}), \\
	\tilde{f} = \beta_{12} (1 - e^{i k_1 \xi_1 + \phi_1}) - i \xi_2 (1 + e^{i k_1 \xi_1 + \phi_1}), \\
	g = \gamma_1 e^{i k_1 \xi_1 - \phi_1} (\chi_{12}  - i \xi_2), \\
	\tilde{g} = \gamma_1 (- \chi_{12} -  i \xi_2), 
	\end{array}\right.
	\end{equation}
	where $\xi_j = x - c_j t - x_j$ with arbitrary $x_j \in \mathbb{R}$, $k_1 > 0$, $\phi_1 > 0$, $c_1 = -k_1$, 
	$$
	\gamma_1^2 = k_1 (e^{2 \phi_1} - 1) > 0, \quad 
	\beta_{12} = \frac{2k_1}{c_2 (c_2 + 2k_1)} > 0, \quad \mbox{\rm and} \quad  \chi_{12} = \frac{2(c_2 + k_1)}{c_2(c_2 + 2k_1)}.
	$$
The two families are defined by the two intervals: either $c_2 \in (0,\infty)$ or $c_2\in (-\infty,-2k_1)$.
\end{corollary}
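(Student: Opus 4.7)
The plan is to derive the formulas (\ref{2.40-0}) as the long-wave limit $k_2 \to 0$ of the two families of $2$-periodic wave solutions (\ref{2.39-0}) from Theorem \ref{prop-tech}, following exactly the pattern already used to pass from the $1$-periodic wave of Proposition \ref{prop-foc-1} to the algebraic soliton (\ref{3.15}), and from the $2$-periodic wave to the breathers on the nonzero background in Theorem \ref{prop-3}.

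First I would apply the translation $x_2 \mapsto x_2 + \pi/k_2$ in (\ref{2.39-0}), which replaces every occurrence of $e^{i k_2 \xi_2}$ by $-e^{i k_2 \xi_2}$ and aligns the leading terms for cancellation. Expanding the coefficients as $k_2 \to 0$,
\begin{align*}
\alpha_2 &= 1 - \frac{2 k_2}{c_2} + \mathcal{O}(k_2^2), \\
e^{-A_{12}/2} &= 1 - \beta_{12} k_2 + \mathcal{O}(k_2^2), \\
e^{i k_2 \xi_2} &= 1 + i k_2 \xi_2 + \mathcal{O}(k_2^2),
\end{align*}
with $\beta_{12}$ as in the statement and $\gamma_{12} \to \gamma_1$, I would substitute into the shifted forms of $f, \tilde f, g, \tilde g$. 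The $\mathcal{O}(1)$ contributions cancel identically (each shifted expression vanishes at $k_2 = 0$), leaving a common factor $k_2$ in all four quantities. Since $u = g/f$ and $\bar u = \tilde g/\tilde f$ are invariant under a common rescaling of numerator and denominator, dividing through by $k_2$ and sending $k_2 \to 0$ produces the closed forms in (\ref{2.40-0}), with $\chi_{12} = 2/c_2 - \beta_{12} = 2(c_2+k_1)/[c_2(c_2+2k_1)]$ arising from combining the $\mathcal{O}(k_2)$ corrections in $\alpha_2$ and $e^{-A_{12}/2}$. The admissible intervals $c_2 \in (|k_2|,\infty)$ and $c_2 \in (-\infty, -|k_2|-2k_1)$ of Theorem \ref{prop-tech} degenerate to $c_2 \in (0,\infty)$ and $c_2 \in (-\infty,-2k_1)$, respectively; on either interval $\beta_{12} > 0$, and together with $\phi_1 > 0$ this gives $\gamma_1^2 > 0$, so that $\bar u$ is indeed the complex conjugate of $u$.

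The main step requiring care is verifying that zeros of the limiting $f$ lie strictly in $\mathbb{C}_-$, as required for (\ref{3.3}) to produce a bounded solution. Writing $x = x_R + i x_J$ and setting $w := e^{i k_1 \xi_1 - \phi_1}$, the equation $f = 0$ becomes $w = (\beta_{12} - i \xi_2)/(\beta_{12} + i \xi_2)$. Taking modulus yields
\[
e^{-2 k_1 x_J - 2 \phi_1} = \frac{(\beta_{12} + x_J)^2 + \tilde \xi_2^2}{(\beta_{12} - x_J)^2 + \tilde \xi_2^2},
\]
where $\tilde \xi_2 := x_R - c_2 t - x_2 \in \mathbb{R}$. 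If $x_J \geq 0$, the left-hand side is bounded above by $e^{-2\phi_1} < 1$ while the right-hand side is at least $1$ (since $\beta_{12} > 0$, the numerator exceeds the denominator by $4 \beta_{12} x_J \geq 0$), a contradiction. Hence $x_J < 0$ for every root of $f$, and a symmetric argument with $\phi_1$ replaced by $-\phi_1$ shows that all roots of $\tilde f$ lie in $\mathbb{C}_+$, completing the construction.
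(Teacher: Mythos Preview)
Your proposal is correct and follows essentially the same route as the paper: the long-wave limit $k_2\to 0$ of the $2$-periodic solution (\ref{2.39-0}) after the shift $x_2\mapsto x_2+\pi/k_2$, with the same expansions for $\alpha_2$ and $e^{-A_{12}/2}$, the same degeneration of the admissible intervals for $c_2$, and the identical modulus argument for locating the zeros of $f$ and $\tilde f$. Your write-up is slightly more explicit in places (the $\mathcal{O}(1)$ cancellation, the identity $\chi_{12}=2/c_2-\beta_{12}$, and the symmetric argument for $\tilde f$), but there is no substantive difference.
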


\begin{proof}
	We obtain in the long-wave limit $k_2 \to 0$ that 
	$$
\alpha_2 = 1 - \frac{2k_2}{c_2} + \mathcal{O}(k_2^2), \quad	e^{-A_{12}} = 1 - \frac{4 k_1 k_2}{c_2 (c_2 + 2k_1)} + \mathcal{O}(k_2^2).
	$$
Hence $e^{-\frac{1}{2} A_{12}} = 1 - \beta_{12}k_2 + \mathcal{O}(k_2^2)$. The expressions (\ref{2.40-0}) are obtained from 
(\ref{2.39-0}) at the order of $\mathcal{O}(k_2)$ after the transformation $x_2 \mapsto x_2 + \pi/k_2$. The two families of $2$-periodic waves in Theorem \ref{prop-tech} give the two families of breathers with either $c_2 \in (0,\infty)$ or $c_2\in (-\infty,-2k_1)$ for which $\beta_{12}> 0$.

Next we prove that the zeros of $f$ and $\tilde{f}$ are located in the lower and upper half-planes, respectively, for either $c_2 \in (0,\infty)$ or $c_2\in (-\infty,-2k_1)$. Setting $f = 0$ in \eqref{2.40-0} yields
$$
e^{i k_1 \xi_1 - \phi_1} = \frac{\beta_{12} - i \xi_2}{\beta_{12} + i \xi_2}.
$$
Similarly to the proof of Theorem \ref{prop-3}, we denote the root of this equation by $x = x_R + i x_J \in \mathbb{C}$ with $x_R = {\rm Re}(x)$ and $x_J = {\rm Im}(x)$. Then, $x_R$ and $x_J$ are obtained from 
$$
e ^{i k_1 \tilde{\xi}_1} e^{-k_1 x_J - \phi_1} = 
\frac{\beta_{12} + x_J - i \tilde{\xi}_2}{\beta_{12} - x_J + i \tilde{\xi}_2},
$$
where $\tilde{\xi}_j = x_R - c_j t - x_j$, $j= 1,2$ are real. Taking modulus in the equation yields
$$
e^{-k_1 x_J - \phi_1} = 
\sqrt{\frac{(\beta_{12}  + x_J)^2 + \tilde{\xi}^2_2}{(\beta_{12} - x_J)^2 + \tilde{\xi}_2^2}}.
$$
If $x_J \geq 0$, this equation yields a contradiction since the left-hand side is less than $1$ for $k_1, \phi_1 > 0$ and the right-hand side is larger than $1$ for $\beta_{12} > 0$. Hence, $x_J = {\rm Im}(x) < 0$ for every root of $f$. 
\end{proof}

\begin{remark}
	The intervals $(0,\infty)$ and $(-\infty,-2k_1)$ for parameter $c_2$ are equivalent to 
	$$
	-\frac{c_2}{2} \in (-\infty,0) \quad \mbox{\rm and} \quad -\frac{c_2}{2} \in (k_1,\infty).
	$$
The Lax spectrum of the corresponding breather solutions include an additional eigenvalue $-\frac{c_2}{2}$ outside the band $[0,k_1]$ in the Lax spectrum $\Sigma$ in Proposition \ref{prop-foc-3}. However, there are differences between these two cases. For $c_2 \in (0,\infty)$, the new eigenvalue is isolated from $\Sigma$ and for $c_2 \in (-\infty,-2k_1)$, the new eigenvalue is embedded into $\Sigma \backslash [0,k_1]$. 
\end{remark}

\begin{figure}[htb!]
	\centering
	\includegraphics[width=7.5cm,height=7cm]{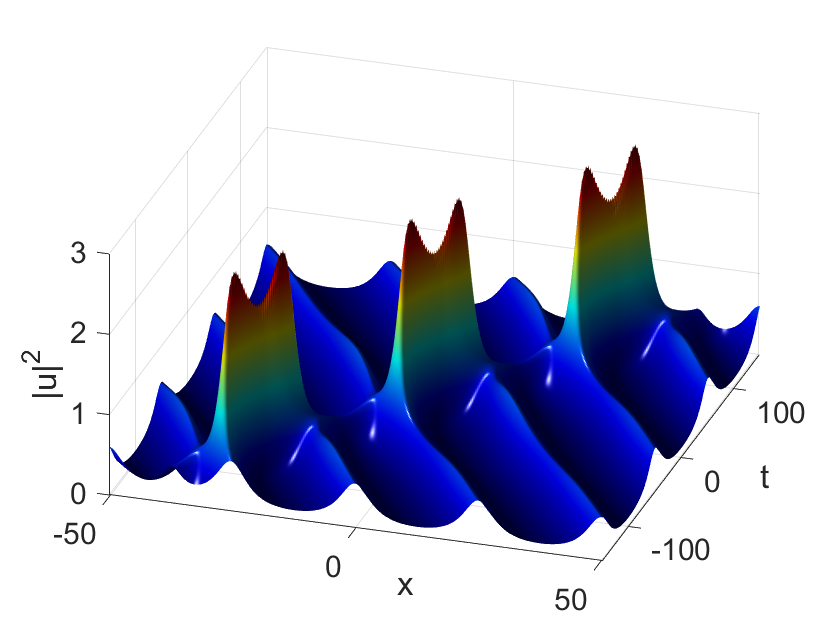}
	\includegraphics[width=7.5cm,height=7cm]{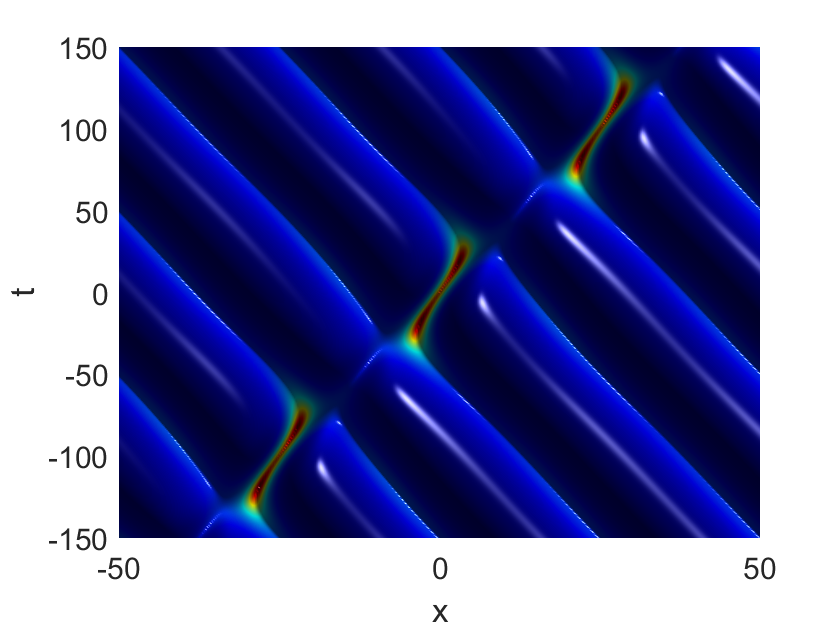}
	\caption{The solution surface of $|u|^2$ for the breather versus $(x,t)$ for $k_1 = 0.25$, $\phi_1 = 1$, $c_1 = -k_1$, and $c_2 = k_1$.}
	\label{fig-5}
\end{figure}

\begin{figure}[htb!]
	\centering
	\includegraphics[width=7.5cm,height=7cm]{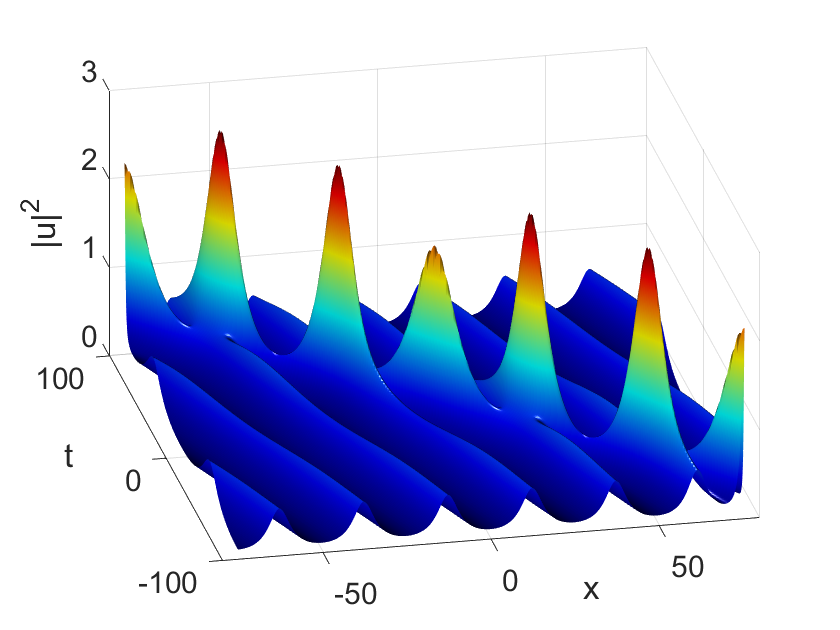}
	\includegraphics[width=7.5cm,height=7cm]{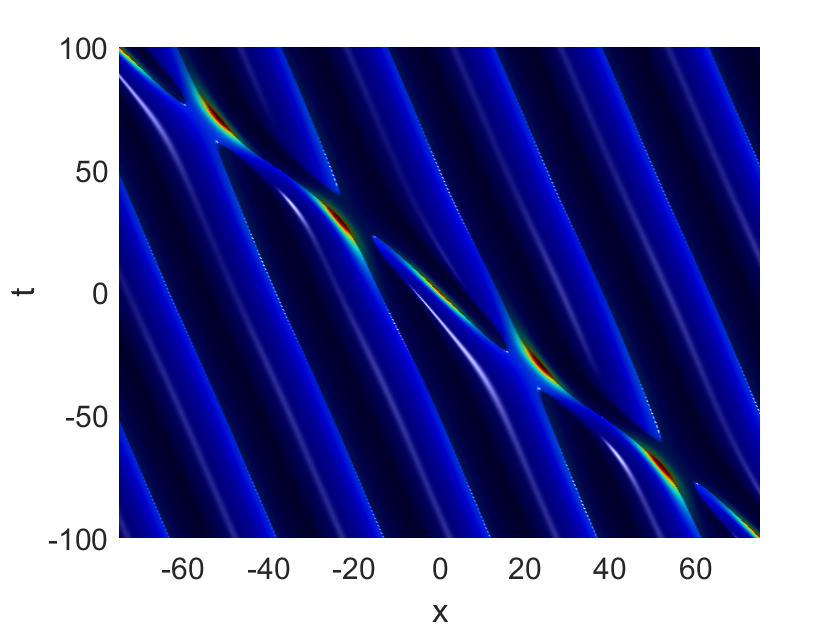}
	\caption{The solution surface of $|u|^2$ for the breather versus $(x,t)$ for $k_1 = 0.25$, $\phi_1 = 1$, $c_1 = -k_1$, and $c_2 = -3 k_1$.}
	\label{fig-6}
\end{figure}

Figures \ref{fig-5} and \ref{fig-6} display the solution surfaces (side view on the left and top view on the right) for the breather solution of Corollary \ref{prop-foc-5} with two choices for $c_2$ versus the original variables $(x,t)$. For $k_1 = 0.25$, $\phi_1 = 1$, and $c_2 = k_1 \in (0,\infty)$ on Figure \ref{fig-5},  the breather solution has the bright soliton profile propagating to the right of the periodic wave traveling to the left. For $k_1 = 0.25$, $\phi_1 = 1$, and $c_2 = -3k_1 \in (-\infty,-2k_1)$ on Figure \ref{fig-6}, the breather solution has the bright soliton profile propagating to the left faster than the periodic wave. In both caes, no phase shift appears after the bright soliton passes the traveling periodic wave.

\subsection{$N$-breathers on the traveling periodic wave} 

We obtain a general breather solution of the nonlocal model  \eqref{INLS} with $\sigma = -1$ for $N$ solitary waves propagating on the background of the traveling periodic wave (\ref{3.14}). To do so, we use the explicit formulas for the $(N+1)$-periodic wave solution obtained in \cite{MatsunoSAMP} but extend the choices for $\{ k_2,\dots,k_{N+1}\}$ and $\{ c_2,\dots,c_{N+1}\}$ in a more general setting as in Theorem \ref{prop-tech}. 

The $(N+1)$-periodic wave solution is given by 
\begin{equation}
\label{N-wave-solution}
u = \frac{g}{f}, \qquad 
|u|^2 = - \sum_{j=1}^{N+1} k_j + i \frac{\partial}{\partial x} \ln \frac{f}{\bar{f}}
\end{equation}	
	 with 
\begin{equation*}
f = \det(F), \qquad  g = \gamma \det(G), 
\end{equation*}
where $F = (f_{jl})_{1 \leq j,l \leq N+1}$ and $G = (g_{jl})_{1 \leq j,l \leq N+1}$ are given by 
\begin{align*}
f_{jl} = \frac{1}{k_j} \exp{\left(i k_j \xi_j - \phi_j + \frac12 \sum\limits_{k=1, k\neq j}^{N+1} A_{jk}\right)} \delta_{jl} + \frac{2}{c_j - c_l + k_j +k_l}
\end{align*}
and
\begin{align*}
g_{jl} = \left\{ \begin{array}{lll} \frac{1}{k_1} \exp\left( i k_1 \xi_1 - \phi_1 + \frac12 \sum\limits_{k=1, k\neq j}^{N+1} A_{jk} \right), \quad & 1 \leq j \leq N+1, \;\; & l = 1, \\
f_{jl}, \quad & 1 \leq j \leq N+1, \;\; & 2 \leq l \leq N+1, 
\end{array} \right.
\end{align*}
with $\xi_j = x - c_j t - x_{j}$ for arbitrary $x_j \in \R$, with $k_1 > 0$, $\phi_1 > 0$, $c_1 = -k_1$ and $k_j \neq 0$, $\phi_j = 0$ for $2 \leq j \leq N+1$, and with 
\begin{align*}
\gamma^2 = k_1 (e^{2 \phi_1} - 1) \prod_{j=2}^{N+1} \frac{c_j - k_j}{c_j + k_j}
\end{align*}
and 
\begin{align*}
e^{-A_{ij}}  = \frac{(c_i - c_j)^2 - (k_i + k_j)^2}{(c_i- c_j)^2 - (k_i - k_j)^2}, \quad 1 \leq i\neq j \leq N+1.
\end{align*}
The admissible intervals for $c_j$, $2 \leq j \leq N+1$ are given by 
$c_j \in (|k_j|, \infty)$ or $c_j \in (- \infty, - |k_j| - 2k_1)$ for $2 \leq j \leq N+1$. In addition, we have 
$$
(c_i - c_j)^2 > (|k_i| + |k_j|)^2, \qquad  2 \leq i, j \leq N+1.
$$

We take the limit $k_j \to 0$ for $2 \leq j\leq N+1$ and use the expansions \eqref{c2.40} and \eqref{c2.41} again with 
$$
\beta_{1j} = \frac{2k_1}{(c_1-c_j)^2 - k_1^2} =  \frac{2k_1}{c_j (c_j + 2 k_1)} > 0.
$$
Multiplying the first row of $F$ and the first column of $G$ by $k_1$ and taking the limit, we denote the resulting matrices as $\hat{F}$ and $\hat{G}$ and obtain them in the form:
\begin{equation}
\label{3.25}
\hat{F} = \left( \begin{array}{cccc}
1+ e^{ i k_1 \xi_1 - \phi_1} & - \frac{2 k_1}{c_2}& \cdots & - \frac{2 k_1}{ c_{N+1} }\\
\frac{2 }{c_2 + 2 k_1} & - i \xi_2 - \beta_{12} & \cdots & \frac{2}{c_2 - c_{N+1}}\\
\vdots & \vdots & \ddots &\vdots\\
\frac{2 }{c_{N+1} +2 k_1} & \frac{2 }{c_{N+1} - c_2}& \cdots & - i \xi_{N+1} - \beta_{1 N+1} \\
\end{array}	\right)
\end{equation}
and
\begin{equation}
\label{3.26}
\hat{G} = \left( \begin{array}{cccc}
e^{i k_1 \xi_1 - \phi_1}  & - \frac{2}{c_2} & \cdots & - \frac{2}{ c_{N+1} } \\
e^{i k_1 \xi_1 - \phi_1} & - i \xi_2 - \beta_{12} & \cdots & \frac{2}{c_2 - c_{N+1}}\\
\vdots & \vdots & \ddots &\vdots\\
e^{i k_1 \xi_1 - \phi_1} & \frac{2 }{c_{N+1} - c_2}& \cdots & - i \xi_{N+1} - \beta_{1 N+1} \\
\end{array}	\right).
\end{equation}
The general $N$-breather solution is obtained from (\ref{N-wave-solution}) in the closed determinant form
\begin{equation}
\label{3.27}
u = \gamma_1   \frac{\det \hat{G}}{\det \hat{F}}, \qquad |u|^2 = -k_1 + i \partial_x \ln \frac{\det \hat{F}}{\det \bar{\hat{F}}},
\end{equation}
where $\gamma_1^2 = k_1 (e^{2\phi_1}-1)$, whereas $\hat{F}$ and $\hat{G}$ are given by \eqref{3.25} and \eqref{3.26}. The wave speeds satisfy either $c_j \in (0, \infty)$ or $c_j \in (- \infty,-2k_1)$ for $2 \leq j \leq N+1$. In particular, if $N=1$, then the solution \eqref{3.27} with  \eqref{3.25} and \eqref{3.26} recovers the breather solution  \eqref{2.40-0} in Corollary \ref{prop-foc-5}.

\begin{remark}
	The general $N$-breather solution (\ref{3.27}) is bounded for every $(x,t) \in \R \times \R$ due to the same reason as in Remark \ref{rem-bounded} since it is obtained from the $(N+1)$-periodic wave solution for which zeros of $f$ and $\tilde{f}$ for every $k_1,k_2,\dots,k_{N+1} \neq 0$ are located in the lower and upper halves of the complex plane of $x$, respectively, due to \cite[Proposition 2]{MatsunoSAMP}. 
\end{remark}

\section{Conclusion}
\label{sec-concl}

We have studied the nonlocal derivative NLS equation, a new emerging model for deep fluids to describe modulations of wave packets and the continuum limit in the dynamics of particles. For the defocusing version of this nonlocal model, we proved the linear stability of the nonzero constant background for decaying and periodic perturbations and the nonlinear stability for periodic perturbations. 
For the focusing version, we proved the linear stability under a non-resonance condition on the initial data and the nonlinear stability for sufficiently small periods.

We have systematically studied the traveling periodic waves, their Lax spectrum, and the existence of breathers propagating on the background of the traveling periodic waves. In the defocusing case, there is only one family of traveling periodic waves on the nonzero constant background and we have shown existence of exactly two families of single breathers, both have the dark (depression) profiles. In the focusing case, there are two families of traveling periodic waves on the nonzero constant background, each admits three families of single breathers, two of which have the bright (elevation) profiles and one has the dark (depression) profile. Also in the focusing case, there is only one family of traveling periodic waves on the zero background and two families of single breathers, both have the bright (elevation) profiles. We have related the existence of breathers with the bands of the Lax spectrum for the traveling periodic waves. Surprisingly, breathers associated with either isolated or embedded eigenvalues in the Lax spectrum feature the same dynamics. Multi-breather solutions are obtained in the closed determinant form. 

We conclude by formulating further questions related to this study which can be considered in near future.

\begin{enumerate}
	\item Can the nonlinear stability or instability of the nonzero constant background be proven in the focusing case? What is the long-term dynamics of the linearly (algebraically) growing periodic perturbations at the resonance due to the nonlinear effects?\\
	
	\item Do there exist any exact solutions describing nonlinear instability of the traveling periodic waves in the focusing case? On comparison with the focusing cubic NLS equation with the modulational instability of the traveling periodic waves and the rogue waves on their background, dynamics in the nonlocal derivative NLS equation does not show any instability or rogue wave phenomena in the class of exact solutions considered in our work.\\
	
	\item Dynamics of the breathers on the traveling periodic waves would naturally appear in the semi-classical limit from initial data with different boundary conditions at infinities. The dispersive hydrodynamics for the nonlocal derivative NLS equation in the focusing case is open for further studies, see \cite{Matsuno} for Whitham's modulation theory in the defocusing case.
\end{enumerate}

\vspace{0.25cm}

{\bf Acknowledgements.} This work was supported in part by the National Natural Science Foundation of China (No. 12371248), the Project “333” of Jiangsu Province, and Southeast University Global Engagement of Excellence Fund (No. 3360682401E).

\end{document}